\documentclass[10pt]{amsart}
\usepackage{macros}
\usepackage{comment}
\usepackage[T1]{fontenc}         
\usepackage{amsfonts}
\DeclareMathAlphabet{\mathmybb}{U}{bbold}{m}{n}
\newcommand*{\BOmega}{\mathmybb{\Omega}}

\usepackage{parskip}
\usepackage[flushmargin]{footmisc}

\def\opp{\mathrm{op}}

\DeclareMathOperator{\Map}{{\rm Map}}

\DeclareMathOperator{\cMap}{\MM{\rm ap}}
\DeclareMathOperator{\Fun}{{\rm Fun}}
\DeclareMathOperator{\Mfld}{\mathsf{Mfld}}
\DeclareMathOperator{\dgMfld}{\mathsf{dgMfld}}
\DeclareMathOperator{\DerMfld}{\mathcal{D}\mathsf{er}\mathcal{M}\mathsf{fld}}
\DeclareMathOperator{\CartSp}{\mathsf{CartSp}}
\DeclareMathOperator{\Riem}{\mathsf{Riem}}
\DeclareMathOperator{\Lrtz}{\mathsf{Lrtz}}
\DeclareMathOperator{\Open}{\mathsf{Open}}
\DeclareMathOperator{\Ab}{\mathsf{Ab}}
\DeclareMathOperator{\Ch}{\mathsf{Ch}}
\DeclareMathOperator{\Chz}{\mathsf{Ch}_\ZZ}
\DeclareMathOperator{\cDz}{\mathcal{D}_\ZZ}
\DeclareMathOperator{\Shv}{\mathcal{S}\mathsf{hv}}
\DeclareMathOperator{\PShv}{\mathcal{PS}\mathsf{hv}}
\DeclareMathOperator{\Ring}{\mathsf{Ring}}
\DeclareMathOperator{\sRing}{\mathcal{R}\mathsf{ing}}
\DeclareMathOperator{\Set}{\mathsf{Set}}
\DeclareMathOperator{\sSet}{\mathsf{sSet}}
\DeclareMathOperator{\sAb}{\mathsf{sAb}}
\DeclareMathOperator{\Spaces}{\mathcal{S}\mathsf{paces}}
\DeclareMathOperator{\dStk}{\mathcal{D}\mathsf{er}\mathcal{S}\mathsf{tacks}}
\DeclareMathOperator{\Top}{\mathsf{Top}}

\DeclareMathOperator{\sBun}{\mathbb{B}un}
\DeclareMathOperator{\cBun}{Bun}

\DeclareMathOperator{\Spec}{Spec}

\DeclareMathOperator{\GL}{GL}

\DeclareMathOperator{\Hom}{Hom}

\DeclareMathOperator{\vol}{vol}

\DeclareMathOperator{\Del}{{\cD}el}
\DeclareMathOperator{\Cone}{Cone}
\DeclareMathOperator{\cl}{cl}

\DeclareMathOperator{\Mxw}{Mxw}
\DeclareMathOperator{\sMxw}{\mathbb{M}xw}

\def\ort{\mathrm{or}}
\def\pert{\mathrm{pert}}
\def\tors{\mathrm{tors}}
\def\free{\mathrm{free}}

\def\ul{\underline}

\title{Duality of generalized Maxwell theories as an equivalence in derived geometry}

\author{Chris Elliott \and Owen Gwilliam \and Ingmar Saberi \and Brian R. Williams}

\begin{document}

\begin{abstract}
We propose a non-perturbative description of the moduli spaces encoding $p$-form generalized Maxwell
theories in any dimension, 
using derived differential geometry.
Our approach synthesizes the Batalin--Vilkovisky formalism with differential cohomology.
Within this framework we formulate Dirac charge quantization and show how such charge-quantized moduli spaces exhibit
abelian duality between generalized Maxwell theories of different types.
We also describe the compactification of generalized Maxwell theories along closed Riemannian manifolds by computing
the pushforward of the underlying sheaves of cochain complexes that model differential cohomology.
\end{abstract}

\maketitle

\setcounter{tocdepth}{1}
\tableofcontents

\section{Introduction}

Abelian duality plays a central role in contemporary field theory,
especially as a source of inspiration for much more subtle dualities among supersymmetric field theories and string theories.
Roughly speaking, it provides a way to relate pairs of abelian gauge theories. Examples include:
\begin{itemize}
\item the electromagnetic duality of Maxwell theory on four dimensional spacetimes, where the electric and magnetic fields are swapped, as is the coupling constant;
\item $T$-duality of the compact boson on two dimensional spacetimes, where the radius of the target circle is inverted (so $R \leftrightarrow 1/R$); and
\item the duality between a compact boson and electromagnetism on three dimensional spacetimes (the radius and the coupling constant are related).
\end{itemize}
The last example is particularly striking because it relates two theories of very different-seeming kind: 
a $\sigma$-model and a gauge theory.
In these dualities, there should be a transformation between operators in the two theories,
letting one match expected values under the path integral.
In this sense, the duality is analogous to a Fourier transform.\footnote{There is a vast literature on abelian duality. 
As useful starting points, we suggest Witten's article \cite{WittenAD}, the contemporaneous paper by Verlinde \cite{Verlinde}, Witten's IAS lecture \cite[Lecture 8]{WittenIAS}, and, especially for T-duality, the Big Red Book~\cite{HoriMS}.
Citations by and to these will connect with the bulk of work on this topic, but we mention some other classic references, like~\cite{DesTei}, along the way.}

Given the simplicity and brevity of the physical arguments,
and given the importance of this phenomenon,
it is frustrating that there is no mathematical treatment that seems to fully capture or explain the physical ideas.
This paper does not remove this frustration! 
It does offer, however, a perspective that may clarify the underlying claim of duality and its mechanism,
at least for those fond of homological algebra and sheaves,
and we hope it may be step toward a more complete, satisfactory mathematical formulation.

As a gloss of our main results,
\begin{itemize}
\item we give a careful, nonperturbative description of the moduli spaces that describe abelian $p$-form gauge theories;\footnote{We mean here generalizations of pure $U(1)$ Yang-Mills theory, where the equation of motion is $\d \star F_A = 0$.
We will use the term {\em Maxwell theories} as they generalize electromagnetism (with a gauge potential).}
synthesizing the BV formalism with differential cohomology,
\item we formulate Dirac charge quantization as picking out a subspace of this moduli space;
\item we exhibit abelian duality as a very simple map between these charge-quantized moduli spaces;\footnote{Note that we assert an equivalence of {\em classical} theories. Such an equivalence automatically guarantees an equivalence between quantizations,
although matching (conventional) parametrizations of quantizations  may be subtle.} 
and
\item we show that abelian $p$-form gauge theory is equivalent to abelian $n-p-2$-form gauge theory {\em coupled to a topological gauge theory}.\footnote{This last statement offers a perspective on $p$-form puzzles raised by Moore~\cite{CMSA}.}
\end{itemize}
A complete, sharp formulation of these results appears in Section~\ref{sec: main result}.
Our main tool is, in essence, sheaves of cochain complexes,
and so the primary maneuvers boil down to familiar tricks with short and long exact sequences and cohomology computations for integral and de Rham cohomology.
Finally, as a payoff, we show why compactifying these theories leads to the appearance of finite homotopy topological field theories (i.e., Dijkgraaf-Witten-type theories).
That is, we give a simple topological explanation for how torsion in cohomology contributes to physical systems.

We use the remainder of the introduction to sketch our point of view by way of an extended example,  
before comparing to prior work and mentioning some possible lines of future development.

\subsection{Abelian duality as an equivalence of derived stacks: an overview in three dimensions}

The essential thrust of this paper is to formulate the classical field theories using derived stacks,
but the key ideas can be seen just with explicit cochain complexes.
As an invitation, and to orient the reader, we describe everything here for the 3d case,
where abelian duality should be an equivalence between 3d electromagnetism and the 3d compact boson.\footnote{This example of duality seems to go back to~\cite{Polyakov}.}
(We will clarify how to upgrade these cochain complexes to derived stacks in Section 2.)

We describe first the gauge theory and then the sigma model.

\subsubsection{3d electromagnetism and cochain complexes}

We work on $\RR^3$ with a Riemannian metric, for simplicity.
There is, up to isomorphism, just one $U(1)$-bundle on $\RR^3$,
and we view it as the trivial bundle.
Given this trivialization, the space of gauge connections is isomorphic to $\Omega^1(\RR^3)$ by sending the 1-form $A$ to the connection~$\d + iA$.
The group of bundle automorphisms $\Map(\RR^3, U(1))$ acts on the space of connections, 
and under the isomorphism just given, we express the action as
\[
g \cdot iA = iA + g^{-1} \d g.
\]
If we write the gauge transformation as $g = e^{i f}$ with $f$ a real-valued function on~$\RR^3$,
then
\[
iA \mapsto iA + i\, \d f.
\]
As may be familiar from the BRST formalism (or differential cohomology), 
all this information can be succinctly encoded in a cochain complex
\[
\begin{tikzcd}[row sep=tiny]
\underline{-1} & \underline{0}\\
\Map(\RR^3, U(1)) \arrow[r, "\d \log"] & \Omega^1(\RR^3)
\end{tikzcd}
\]
where we put the 1-forms in degree zero and the gauge transformations (or ghosts) in degree~$-1$.
Note that the image of the differential describes all 1-forms (or connections) that are gauge-equivalent to the trivial connection.
More generally, each coset for this image describes the orbit of a connection under the gauge transformations.
Finally, the kernel of the differential consists of the global symmetries of the bundle.\footnote{A mathematician might note that we are describing an orbifold built from abelian Lie groups as a cochain complex,
since groupoid objects in abelian groups are equivalent to two-term cochain complexes.}

One can impose the equations of motion in terms of a cochain complex as well,
following the Batalin-Vilkovisky (BV) formalism.
Here we use 
\[
\begin{tikzcd}[row sep=tiny]
\underline{-1} & \underline{0} & \underline{1} & \underline{2}\\
\Map(\RR^3, U(1)) \arrow[r, "-i \d \log"] & \Omega^1(\RR^3)  \ar[r, "\d \star \d"] & \Omega^2(\RR^3)  \ar[r, "\d"] & \Omega^3(\RR^3) 
\end{tikzcd}
\]
where the degree~1 component consists of the antifields and the degree~2 component contains the antifields for the ghosts.\footnote{The reader familiar with BV formalism might note that we are deviating slightly from standard examples by letting the ghosts be the group of gauge transformations rather than the Lie algebra. This deviation is quite useful, as we hope to show in this paper.}
Observe that the kernel of the map $\d \star \d$ encodes the solutions to Maxwell's equation 
\[
\d \star F_A = 0 = \d \star \d A
\]
and hence the degree zero cohomology consists of solutions modulo gauge equivalence.
The positive cohomology groups vanish, as we show later in the text.

As a final observation, note that 
\[
\Map(\RR^3, U(1)) \cong C^\infty(\RR^3)/\ZZ
\]
because every gauge transformation $g$ can be expressed as $e^{i f}$ for some smooth function $f$ 
and the choice of $f$ is ambiguous up to a shift $f + 2\pi n$ determined by an integer~$n$.
Hence the complex of BV fields has the same cohomology as the complex
\begin{equation}
\label{eqn: 3d em}
\begin{tikzcd}
\ZZ \ar[r] & C^\infty(\RR^3) \arrow[r, "\d"] & \Omega^1(\RR^3)  \ar[r, "\d \star \d"] & \Omega^2(\RR^3)  \ar[r, "\d"] & \Omega^3(\RR^3) 
\end{tikzcd}
\end{equation}
where $\Omega^1(\RR^3)$ sits in degree zero so the whole complex sits between $-2$ and~$2$.
In fact, there is a natural cochain map from this complex  to the earlier complex of BV fields and it induces an isomorphism on cohomology.
(Such a map is called a {\em quasi-isomorphism}.)

Later in this paper we promote this complex to a sheaf valued in cochain complexes,
in the style of Deligne complexes.

\subsubsection{3d compact boson}

The compact boson has fields given by maps $\sigma$ into $U(1)$ with equation of motion~$\d \star \d \sigma = 0$.
In the BV formalism it is encoded by the cochain complex
\[
\begin{tikzcd}[row sep=tiny]
\underline{0} & \underline{1}\\
\Map(\RR^3, U(1)) \ar[r, "\d \star \d"] & \Omega^3(\RR^3) 
\end{tikzcd}
\]
There are no ghosts and no antifields for ghosts.
The complex
\begin{equation}
\label{eqn: 3d boson}
\begin{tikzcd}[row sep=tiny]
\ZZ \ar[r] & C^\infty(\RR^3) \ar[r, "\d \star \d"] & \Omega^3(\RR^3) 
\end{tikzcd}
\end{equation}
is quasi-isomorphic,
since any smooth map $\sigma: \RR^3 \to U(1)$ can be expressed as $e^{i f}$ up to an integer's worth of ambiguity.

\subsubsection{Duality as an isomorphism of cochain complexes}

It should be apparent that, as BV theories, the 3d compact boson and 3d electromagnetism are {\em not} equivalent.
The cochain complexes~\eqref{eqn: 3d boson} and~\eqref{eqn: 3d em} have different cohomology and hence are {\em not} quasi-isomorphic (much less isomorphic).
Duality cannot mean that these classical theories are the same, in this sense of equivalence of BV theories.

There are, however, closely related classical theories that \emph{are} equivalent.
As a first step, we replace our descriptions above by quasi-isomorphic cochain complexes.
For example, in the case of 3d electromagnetism,
consider the cochain complex
\begin{equation}
\label{eqn: 3d em alt}
\begin{tikzcd}[row sep = tiny]
\underline{-2} & \underline{-1} & \underline{0} & \underline{1} \\
\ZZ \ar[r, "1"] & \Omega^0 \ar[r, "\d"] & \Omega^1 \ar[dr, "\star \d"]& \\
&&& \Omega^{1} \\
& \RR \ar[r] & \Omega^0 \ar[ur, "-\kappa \d"']&
\end{tikzcd}.
\end{equation}
Much of this paper is devoted to explaining how this kind of complex is related to the complexes we wrote earlier,
so we postpone justifying this choice for now.
Instead, we will quickly analyze the complex to see what information it encodes.

We remark that we have introduced a constant $\kappa$ that should be seen as $1/g^2$, 
the typical coupling constant for a Yang-Mills theory.
This will make it easier to recognize the duality.

Observe that in degree zero, we are asking for a 1-form $A$ and a function $f$ such that 
\[
\star F_A = \star \d A = \kappa\, \d f.
\]
In this situation, because $\d^2 f = 0$, we find $\d \star F_A = 0$, which is Maxwell's equation.
Going the other way, note that if a 1-form satisfies Maxwell's equation $\d \star F_A = 0$, 
then $\star F_A$ must be exact by Poincar\'e's lemma,
so that there exists a function~$f$ (a {\em potential} for $\star F_A$) that is unique up to a constant.
At the level of cohomology, we identify such pairs $(A,f)$ if they differ by a gauge transformation and a shift by a real number, respectively,
so the zeroth cohomology is isomorphic to 1-forms satisfying Maxwell's equation.
In fact, as we show later, the complex~\eqref{eqn: 3d em alt} is quasi-isomorphic to the complex~\eqref{eqn: 3d em} of BV fields.

To obtain the theory pertinent to duality, 
we modify this complex in a simple way:
we replace the copy of $\RR$ by $\ZZ$.
Explicitly, we work with the complex
\begin{equation}
\label{eqn: 3d cq em}
\begin{tikzcd}[row sep = tiny]
\ZZ \ar[r, "1"] & \Omega^0 \ar[r, "\d"] & \Omega^1 \ar[dr, "\star \d"]& \\
&&& \Omega^{1} \\
& \ZZ \ar[r, "e"] & \Omega^0 \ar[ur, "-\kappa \d"']&
\end{tikzcd}.
\end{equation}
Note that this complex~\eqref{eqn: 3d cq em} is a subcomplex of the complex~\eqref{eqn: 3d em alt}:
it imposes a constraint by making the ``ghosts'' for $f$ into a copy of $\ZZ$ inside~$\RR$.
As we explain later, this maneuver is equivalent to imposing Dirac charge quantization;
we prefer to call it {\em charge discretization}.
(Briefly, we replace the real-valued function $f$ by a map $\sigma$ into the circle~$\RR/e\ZZ$.)\footnote{%
\label{footnote of sophisticate}A sophisticate might recognize that the complex~\eqref{eqn: 3d cq em} is the fiber product of a map from the complex~\eqref{eqn: 3d em alt} to a copy of $\RR/\ZZ[1]$, which models the stack $BU(1)$. In other words, the usual gauge theory has a natural line bundle, and the charge-discretized theory can be viewed as a trivialization of that line bundle.}

The compact boson admits a parallel modification to the complex
\begin{equation}
\label{eqn: 3d cq boson}
\begin{tikzcd}[row sep = tiny]
&\ZZ \ar[r, "1"] & \Omega^0  \ar[dr, "\star \d"]& \\
&&& \Omega^{2} \\
\ZZ \ar[r, "1/e"] & \Omega^0 \ar[r, "\d"] & \Omega^1 \ar[ur, "-(1/\kappa) \d"']&
\end{tikzcd}.
\end{equation}
Note that a zero cocycle is a pair $(f,A)$ such that $\star \d f = (1/\kappa) \d A$,
and that implies $\d \star \d f = 0$ as well as that $A$ is determined by $f$, up to an exact 1-form.
(That is, $A$ is a potential for $\star \d f$.)
This complex is also a subcomplex of an alternative version of the compact boson, parallel to~\eqref{eqn: 3d em alt}.

These two complexes \eqref{eqn: 3d cq boson} and~\eqref{eqn: 3d cq em} are very similar at a glance, 
if the top and bottom rows are swapped so that lengths match.
In fact, they are isomorphic:
the isomorphism consists of
\begin{itemize}
\item for the rightmost term, apply the Hodge star and scale by $1/\kappa$,
\item map the top row of~\eqref{eqn: 3d cq em} to the bottom row of~\eqref{eqn: 3d cq boson} but scale $\Omega^0$ by $1/e$, and
\item map the bottom row of~\eqref{eqn: 3d cq em} to the top row of~\eqref{eqn: 3d cq boson} but scale $\Omega^0$ and $\Omega^1$ by~$1/e$.
\end{itemize}
In other words, these complexes describe isomorphic theories, related by Hodge duality and interchange of which field is primary vs. potential.

More briefly, we find that these generalized Maxwell theories {\em with the constraint of charge discretization} are isomorphic as cochain complexes. We interpret these cochain complexes from the point of view of the Batalin-Vilkovisky , but it is important to note that this requires a generalization of the BV formalism: due to the charge discretization constraint, the moduli stack presented by such a complex only admits a shifted \emph{presymplectic} structure. (In physics terms, the corresponding theory is not Lagrangian.)
Having taken account of this, abelian duality becomes an isomorphism of classical field theories when formulated using the appropriate framework.\footnote{In fact, building on footnote~\ref{footnote of sophisticate}, one can view the charge-discretized theory as a correspondence between the generalized Maxwell theories.
This viewpoint fits nicely with the understanding of duality as adjoining extra fields and integrating out some fields,
which we discuss in Section~\ref{sec: correspondence}.}

\subsubsection{An overview of our main results}

The basic goal of this paper is to upgrade this observation in the following ways:
\begin{itemize}
\item We allow an arbitrary $n$-dimensional Riemannian (or Lorentzian) manifold, not just $\RR^3$, 
and we observe that any generalized Maxwell theory restricts along isometric embeddings (of codimension zero).
\item We allow higher form gauge theories: the connection can be a $p$-form for any $p \leq n-2$.
\item We show there is an isomorphism between the charge-discretized $p$-form theory and the charge-discretized $n-p-2$-form theory.\footnote{The curvatures are $p+1$- and $n-(p+1)$-forms and they should be Hodge dual.}
\item We show this isomorphism intertwines with restriction along isometric embeddings.
\end{itemize}
The key point here is to identify the natural functoriality, in spacetimes, of these theories,
and the mathematical tool is sheaf theory.

A second goal of this paper is to explain how these cochain complexes are avatars of a richer structure:
they can be enhanced to derived stacks.\footnote{Perturbative BV theories already possess a systematic description in the language of formal derived stacks.}
The isomorphism we provide is then an equivalence of sheaves, on spacetimes, valued in derived stacks.\footnote{In physics language, we have a global (nonperturbative) equivalence of classical theories.}

Finally, as a concrete payoff for the formulation we develop,
we compute compactifications carefully,
leading to interesting refinements of claims in the physics literature.

\subsection{On prior work and the novelty here}

There is already a tremendous amount of prior work on how sophisticated mathematical tools, like differential cohomology, illuminate gauge theory.
As we have slowly come to grips with higher abelian gauge theory,
we turned repeatedly to the papers of Freed, Moore, and Segal \cite{FMSAP, FMSCMP}, 
Brylinski's pioneering work \cite{BryBook},
and the {\em n}Lab, where we absorbed many insights 
--- presumably due mostly to Fiorenza, Sati, and Schreiber --- about how differential cohomology relates to higher and derived geometry, 
as well as~\cite{FreHopRR,BBSS}.
The recent TASI lectures of Moore and Saxena \cite{MooSax} are a masterful (and exceedingly thorough) treatment from a more physical perspective.
We highly recommend these sources as places to start exploring the literature.
The first part of this paper --- and many comments sprinkled throughout --- contain our take of those ideas and insights.

A key goal for us is to show how easy it is to synthesize these insights with the Batalin--Vilkovisky formalism for field theory,
which is often applied primarily to perturbative issues.
More precisely, we show how the framework of derived stacks enhances the illumination of gauge theory begun by introducing stacks.\footnote{For instance, Hopkins and Singer mention how it is fruitful to understand differential cohomology via groupoids,
and this stacky view is pushed further and exploited by Sati, Schreiber, and their collaborators.
We expect the recent work~\cite{BMNS} to allow a treatment of nonabelian gauge theories in a parallel fashion,
although the technical aspects are substantially higher.
For an overview of how BV theory should interface with derived geometry, see~\cite{AlfYou}.}

The second part of the paper develops an alternative description of abelian gauge theory first mentioned,
so far as we know in~\cite{SabWil}, 
and shows how Dirac charge quantization in this description is very clean.\footnote{This formulation is very close to comments in~\cite{FreHopRR}, albeit with the equations of motion imposed cohomologically.}
Duality becomes an obvious isomorphism of complexes.
At the end we undertake some straightforward sheaf computations that recover, in a clean way, physicists' claims about compactifications of higher gauge theories.

For us this paper is a preliminary step towards studying and quantizing theories like the self-dual 2-form gauge theory on 6-manifolds and related theories, 
such as the holomorphic twist of the 6d $\cN = (1,0)$ superconformal theory known as the abelian tensor multiplet,
in a style modeled on the work of Beilinson and Drinfeld for chiral CFT.
We wish to calibrate our approach on examples where the physics and mathematics are better understood and better documented.\footnote{Our approach is informed by~\cite{Elliott}, but takes a substantially different route.}

\subsection{Acknowledgements}

Over the years, we have benefited from conversations with many people on this topic,
including David Ayala, Ilka Brunner, Kevin Costello, Daniel Grady, Andres Klene-Sanchez, Isaac Pliskin, Surya Raghavendran, Alex Schenkel, Mayuko Yamashita, and Philsang Yoo.
During our work on this paper, Severin Bunk, Vivek Saxena, and Grisha Taroyan offered incredibly helpful feedback on drafts,
in addition to useful conversations on the math and physics around this topic.

The National Science Foundation supported O.G. through DMS Grant 2042052. 
In addition, his sabbatical in fall 2024 at the Max Planck Institute of Mathematics offered a wonderful place to explore and write about these ideas, and he acknowledges its support with gratitude.

This work is funded by the Deutsche Forschungsgemeinschaft (DFG, German Research Foundation) under Projektnummer 517493862 (Homologische Algebra der Supersymmetrie: Lokalit\"at, Unitarit\"at, Dualit\"at). I.A.S. also  thanks the Max Planck Institute of Mathematics for hospitality during a visit to O.G. during his sabbatical there, as well as the Free State of Bavaria for support while this work was being performed.

\section{Our mathematical setting for higher form gauge theory}
\label{sec: notandconv}

We use a large amount of mathematical machinery in this paper, and we introduce it here.
Readers keen to get to the meat of the paper can pass through the first subsection,
which reviews how abelian gauge theories in the BRST and BV formalisms are encoded by sheaves of cochain complexes.
The second subsection then puts those ideas into a contemporary framework for derived and higher geometry.
Finally, in the third subsection, we discuss our conventions and notations for sheaves, categories, and so on.

\subsection{Our working context, for physicists or the computationally minded, and a puzzle}

We will describe our classical field theories using sheaves of cochain complexes of abelian groups,
which may sound abstract but in practice will mean familiar manipulations with differential forms and integral cohomology.
In other words, if one is familiar with basic differential geometry and algebraic topology,
all our constructions below are also familiar.

For instance, abelian Chern-Simons theory on an oriented 3-manifold $M$ is described by the sheaf\footnote{There are subtleties here, and we give orienting remarks in Section~\ref{sec: notandconv}.} of cochain complexes
\begin{equation}
\label{eqn: CS example}
\cA \colon \quad \ZZ \to \Omega^0 \xto{\d} \Omega^1 \xto{\d} \Omega^2 \xto{\d} \Omega^3
\end{equation}
where $\Omega^1$ sits in degree~0 since the main type of field is a connection 1-form.\footnote{This sheaf $\cA$ is a version of smooth Deligne cochains and appears in work on differential cohomology.}
The term $\ZZ$ in degree~$-2$ denotes the locally constant sheaf.
It is included to encode the gauge {\em group}: 
note that for $U$ an open ball, 
\[
\Omega^0(U)/\ZZ \cong \Map_{C^\infty}(U,S^1)
\]
since $S^1 \cong \RR/\ZZ$,
so the degree~$-1$ cohomology of $\cA(U)$ is precisely the gauge group.
On an arbitrary open set $V \subset M$,
the derived global sections of $\ZZ$ on~$V$ is quasi-isomorphic to the singular cochains~$C^\bullet(V, \ZZ)$,
so the derived global sections encode subtle topological information about~$V$.

Working perturbatively, the tangent complex to the moduli stack of abelian flat connections (at any flat
connection) is
the (shifted) de Rham complex
\[
\Omega^0 \xto{\d} \Omega^1 \xto{\d} \Omega^2 \xto{\d} \Omega^3,
\]
concentrated in degrees $-1$ to $2$,
where again $\Omega^1$ sits in degree zero.
This cochain complex is familiar in the typical BV/BRST approach to perturbative Chern-Simons theory.
Note that by dropping the $\ZZ$,
we have now only the gauge {\em Lie algebra} $\Omega^0$,
so we see how the appearance of the copy of $\ZZ$ encodes (at least part of) the nonperturbative
information of abelian Chern--Simons theory.
Let's see that information more explicitly.

There are two explicit ways to compute the cohomology of this sheaf $\cA$ of fields on a 3-manifold~$M$:
\begin{enumerate}
\item pick a good cover of $M$ and work with the \v{C}ech complex (here a double complex); or
\item use the long exact sequence in cohomology
\[
\cdots \to H^{k+1}(M,\RR) \to H^k(M,\cA) \to H^{k+2}(M,\ZZ) \to H^{k+2}(M,\RR) \to \cdots
\]
arising from the short exact sequence
\[
0 \to \Omega^\bullet[1] \to \cA \to \ZZ[2] \to 0
\]
of sheaves.
\end{enumerate}
Consider, for example, what we see about $H^0(M,\cA)$ from approach~(2): 
it sits in an exact sequence
\[
0 \to H^1(M,\RR)/H^1_{\rm free}(M,\ZZ) \to H^0(M,\cA) \to H^{2}_{\rm tors}(M,\ZZ) \to 0
\]
where the subscripts indicate the free or torsion parts of the integral cohomology.

As a concrete example, let $M = \Sigma_g \times \RR$ where $\Sigma_g$ is a closed oriented 2-manifold.
Then we have shown that
\begin{equation}
\label{eqn: ab cs naive}
H^0(M,\cA) = H^0(\Sigma_g,\cA) \cong \RR^{2g}/\ZZ^{2g} 
\end{equation}
is the torus describing the flat $U(1)$-bundles on~$\Sigma_g$.

For any oriented 3-manifold $M$, 
this cohomology group $H^0(M,\cA)$ is supposed to encode the on-shell fields, i.e., solutions to the equations of motion,
which here should be a principal $U(1)$-bundle $L$ with a flat connection~$\nabla$.
Such a solution has an underlying principal bundle $L$ whose first Chern class $c_1(L) \in H^2(M,\ZZ)$ must be torsion, 
since its curvature $F_\nabla$ is flat and hence its real cohomology class is $[F_\nabla] = 0$ in $H^2(M,\RR)$.
On the other hand, we can modify $\nabla$ to another flat connection 
by adding a closed 1-form $A$ to get a new flat connection $\nabla + A$ on~$L$.
If we identify $U(1)$ with $\RR/\ZZ$,\footnote{It's more common to use $\RR/2\pi \ZZ$ but that means we have to adjust the sheaf $\ZZ \to \Omega^0$ by scaling by $2 \pi$. We will instead use the isomorphism $\theta \mapsto \exp(2\pi i \theta)$ from $\RR/\ZZ$ to $U(1)$ throughout this paper.} 
then $A$ and $A'$ determine the same new flat connection if their difference $A-A'$ is a 1-form with integral periods
(i.e., they produce the same holonomy, or parallel transport, on~$L$).

In general, we will describe our field theories in a similar style:
we give a sheaf of cochain complexes and 
describe how the cohomology encodes the usual physical content.

At this point, something might seem puzzling about this approach:
a kind of mismatch between the physical content and the mathematical context.
From a field theory perspective,
one expects the solutions to the equations of motion to form a space,
something like a manifold.
For instance, we want to view the torus of~\eqref{eqn: ab cs naive} as a manifold.\footnote{The group $H^{-1}(M,\cA)$ encodes the symmetry data of an orbifold structure on the ``space of solutions.''}
But mathematically, we only specified $\cA$ as a sheaf of cochain complexes,
so the cohomology group~\eqref{eqn: ab cs naive} is merely an abelian group without any topology or smooth structure.

How can we adjust our perspective on $\cA$ so as to incorporate that richer information, which seems essential to the physics?
Just working with $\cA$ as above is insufficient.\footnote{One option is to view these cochain complexes as valued in topological abelian groups (or something similar, like bornological abelian groups). 
We do not use that approach here,
although it can be fruitful. 
See \cite{BBSS}, and companion papers, for such a treatment, along with an approach to quantization.}

Our approach here is to view $\cA$ as presenting a sheaf valued not in cochain complexes, 
but in higher and derived geometry.
We describe how to do that in the next section (and more extensively in Section~\ref{sec: dag stuff}).
For many purposes, though, the reader can blackbox that setting,
simply working with the sheaf of cochain complexes 
until more subtle questions make it necessary to look inside the box.

\subsection{Highbrow commentary on this context}

For those who want more details about a precise mathematical context,
we make orienting remarks here.
In Section~\ref{sec: dag stuff} we offer a more detailed discussion.

Inspired by the BRST and BV formalisms, 
we view a classical field theory on a manifold $M$ as a sheaf $\SS$ on $M$ valued in derived stacks,
typically equipped with some kind of shifted symplectic or presymplectic or Poisson structure.
In particular, we have a functor
\[
\SS: \Open(M)^\opp \to \dStk.
\]
Given an open set $U$, 
the derived stack $\SS(U)$ should be viewed as the space of solutions to the equations of motion over the region~$U$: 
in other words, the ``on-shell fields'' over~$U$.

We emphasize that this perspective embodies, in a very precise sense, the physicists' idea that a theory is just specified by local data on $\RR^n$ (or an open ball therein) equipped with some geometric data, like a metric or spin structure.
The sheaf theory takes care of extending this description to a more general class of spacetime manifolds.

In this paper we will focus on a special type of theories:
those where $\SS$ takes values in {\em abelian group} derived stacks.
That means that two solutions can be ``multiplied'' or ``added.''
As examples, consider:
\begin{itemize}
\item The free massless scalar field. Since the equation of motion is linear, the space of solutions is a vector space, and hence an abelian group.
\item The nonlinear $\sigma$-model into $S^1$, equipped with a flat metric (since $S^1$ itself is an abelian group).
\item The abelian gauge theory, where a solution is a pair $(L \to M, \nabla_L)$ of a $U(1)$-bundle $L \to M$ and a connection $\nabla_L$ satisfying $\d \star F_\nabla = 0$,
which we call {\em Maxwell theory}.
The group operation here is tensor product of line bundles and addition of connections.
\end{itemize}
On the surface these examples might look rather different,
but they all share the feature that $\SS$ can be seen as a kind of $\sigma$-model:
$\SS$ amounts to mapping into an abelian group object in derived stacks,
so the abelian group structure is inherited from the target space.

Such theories often can be presented in a succinct way:
$\SS$ is presented by a presheaf $S$ on $M$ valued in cochain complexes of abelian groups,
so that concrete and familiar constructions suffice, 
rather than more involved methods from derived geometry.
For instance, we describe abelian $p$-form gauge theories using Deligne-type complexes below.

We now outline the justification for working with such sheaves of cochain complexes.

The essential idea is that a derived stack $\XX$ is a certain kind of functor
\[
\XX: \mathsf{T}^\opp \to \Spaces
\] 
from a category $\mathsf{T}$ of test objects (e.g., derived affine schemes or  dg manifolds) to $\Spaces$,
so that we understand $\dStk$ in terms of $\Fun(\mathsf{T}^\opp, \Spaces)$.
Because we are studying abelian theories, we want a functor that passes through $\Ab(\Spaces)$,
the abelian group objects in $\Spaces$.
Now $\Spaces$ is the $\infty$-categorical localization of $\sSet$ at weak homotopy equivalences (following Kan),
and abelian group objects $\Ab(\sSet)$ in $\sSet$ are the same as simplicial abelian groups $\sAb$.
Hence, we might {\em present} $\XX$ by giving a functor
\[
X: \mathsf{T}^\opp \to \sAb
\]
and then identifying $\XX$ as the composite functor
\[
\mathsf{T}^\opp \to \sAb \to \Ab(\Spaces)
\]
with the last step given by the localization functor $\sAb \to \Ab(\Spaces)$.
But there is an even more familiar way of producing such derived stacks.

Recall that any cochain complex $A^\bullet$ of abelian groups determines a simplicial abelian group in two steps:
first, take the smart truncation to a complex in nonpositive degrees
\[
\tau_{\leq 0} A^\bullet = (\cdots A^{-2} \to A^{-1} \to Z(A^0) \to 0 \cdots )
\]
and then apply the Dold--Kan correspondence.\footnote{We work throughout this paper with {\em co}\/chain complexes, rather than homologically with chain complexes.
This choice flips the grading in many standards. 
Hence, for us, the Dold--Kan correspondence is between simplicial abelian groups and nonpositively graded cochain complexes.}
Let's denote the output by $[A^\bullet]$ so there is a composite functor
\[
[-]: \Ch(\Ab) \to \sAb
\]
turning cochain complexes into simplicial abelian groups.

Given a sheaf,  
\[
A: \mathsf{T}^\opp \to \Ch(\Ab),
\]
postcomposing with this functor thus determines a derived stack 
\[
\AA: \mathsf{T}^\opp \xto{A} \Ch(\Ab) \xto{[-]} \sAb \to \Ab(\Spaces).
\]
Due to this functoriality, it is typically straightforward to track how manipulations with the sheaf $A$ corresponds to manipulations with the stack~$\AA$.

Putting this all together, we view a field theory as giving a functor
\[
\SS: \Open(M)^\opp \to \Fun(\mathsf{T}^\opp,\Spaces)
\]
or equivalently a functor
\[
\SS: \Open(M)^\opp \times \mathsf{T}^\opp \to \Spaces.
\]
In our setting, we assume the solutions always form a derived abelian group stack,
so the target is $\Ab(\dStk)$.
Thus we can {\em present} $\SS$ by a functor
\[
S: \Open(M)^\opp \times \mathsf{T}^\opp \to \Ch(\Ab),
\]
also known as a presheaf of cochain complexes (of abelian groups).

We discuss two examples briefly.
First, we describe differential forms on $M$ (of a fixed degree),
as these play a key role throughout the paper.
Variations on this construction can be used to build most of the examples we use.
Second, we briefly discuss the example of abelian Chern-Simons theory from the preceding section.

\begin{eg}
Given a smooth manifold~$M$, 
consider the functor
\[
\Omega^k_M: \Open(M)^\opp \times \mathsf{T}^\opp \to \Spaces
\]
where
\[
\Omega^k_M(U, \cM) = 
\{ \text{sections of the bundle } \pi^*( \Lambda^k T^*U) \to U \times \cM \}
\]
where $\cM$ denotes an object in $\mathsf{T}$ and
where $\pi: U \times \cM \to U$ is the projection map.
This functor encodes $k$-forms on $M$ in families over the parameter space~$\cM$.
This functor presents a functor $\BOmega^k_M$ on $\Open(M)^\opp$ valued in derived stacks.~\hfill$\Diamond$
\end{eg}

\begin{eg}
Abelian Chern-Simons theory is about the moduli of flat $U(1)$-bundles on an oriented 3-manifold~$M$.
Consider the functor $\cA$ of~\eqref{eqn: CS example} we introduced earlier (known as the smooth Deligne complex).
For each open set $U \subset M$ and each test space $T \in \mathsf{T}$,
the cochain complex $\cA(U, T)$ determines a simplicial set, and hence an $\infty$-groupoid,
which encodes $T$-families of flat $U(1)$-bundles on~$U$.~\hfill$\Diamond$
\end{eg}

The factor $\Open(M)^\opp$ encodes how the derived stack varies over the spacetime~$M$,
while the factor $\mathsf{T}^\opp$ encodes the structure of the derived stack $\SS(U)$ on a fixed open set~$U$.
For instance, for fixed $U$, taking a point $\ast$ in $\mathsf{T}$ just returns the raw collection $\SS(U,\ast)$ of solutions on~$U$ as an $\infty$-groupoid,
which has no ``geometry'' and has just a homotopy type.
Taking the real line $\RR \in \mathsf{T}$ lets one identify 1-dimensional families of solutions on~$U$.
If the thickened point $\Spec(\RR[\epsilon]/(\epsilon^2))$ lives in $\mathsf{T}$, 
it returns the tangent directions at each solution.

One can generalize from a fixed manifold $M$ by replacing the category $\Open(M)$ by a site of spacetime manifolds.
For instance, one can work over $\Riem_n$, the site of Riemannian $n$-manifolds, 
which is relevant for gauge theory in Euclidean signature.
In the Chern-Simons example, the  natural site is oriented 3-manifolds and orientation-preserving embeddings.

{\bf Note bene:} Below we will typically {\em focus on the spacetime dependence} and {\em suppress the dependence on the test objects} $\mathsf{T}$ in our notation and discussion.
We provide details about our key examples as moduli spaces, via their functors of points, in Section~\ref{sec: dag stuff}.

\begin{rmk}
The nature of the test site $\mathsf{T}$ affects what kind of information can be recovered.
For instance, including ``thick points'' (whose ring of functions are artinian algebras) in $\mathsf{T}$ lets one access ``perturbative'' information; 
in mathematical terms, it means one can do deformation theory.
For our abelian theories, it lets one recovers the perturbative BRST complexes that see the Lie algebra of gauge transformations.
Similarly,
including ``derived'' directions in $\mathsf{T}$,
such derived schemes whose ring of functions are dg commutative algebras in nonpositive degrees,
lets one recover information in the positive degree components of a cochain complex~$A^\bullet$ that presents a derived abelian group stack~$\AA$.
This feature is useful, for example, in witnessing the antifields.~\hfill$\Diamond$
\end{rmk}

\subsection{Notations and conventions}

\subsubsection{On categories}

We work with several closely related categories and $\infty$-categories (or, better $(1,1)$-categories and $(\infty,1)$-categories).
Note that a category (in the classic sense) is a special kind of $\infty$-category,
so we will typically use the term ``category'' to mean $\infty$-category, 
unless we feel it clarifying to distinguish.

We describe the most important here:
\begin{itemize}
\item $\Ab$ denotes the 1-category of abelian groups and group maps.
\item $\Chz$ (or, equivalently, $\Ch(\Ab)$ as in the previous discussion) denotes the 1-category of cochain complexes of abelian groups and cochain maps (i.e., levelwise group maps that intertwine the differentials).
\item $\cDz = \cD(\Chz)$ denotes the stable $\infty$-category obtained by localizing $\Chz$ (as an $\infty$-category) at quasi-isomorphisms. 
Its homotopy category\footnote{This category is $\Chz$, as a 1-category, localized at quasi-isomorphisms.} is traditionally called the {\em derived category} but we freely use this term to refer to the $\infty$-category.
\item $\sSet$ denotes the 1-category of simplicial sets and simplicial maps.
\item $\Spaces$ denotes the $\infty$-category obtained by localizing $\sSet$ (as an $\infty$-category) at weak equivalences.
\item $\Top$ denotes the 1-category of topological spaces and continuous maps. 
Its $\infty$-categorical localization at weak homotopy equivalences is also~$\Spaces$.
\end{itemize}
Thus there are functors (of $\infty$-categories)
\[
\Ch \to \cD(\Chz) = \cDz
\]
and
\[
\Top \to \Spaces
\]
and
\[
\sSet \to \Spaces
\]
that we will invoke repeatedly because
if we give a functor into $\Chz$ or $\Top$ or $\sSet$,
we immediately obtain a functor into their localizations.
In particular, if we give a presheaf (i.e., functor) into one of these categories using classical technology from homological algebra or algebraic topology,
then we get a higher categorical version for free.

For a category like $\sSet$, we write $\Ab(\sSet)$ for the category of abelian group objects in $\sSet$ and maps of such.\footnote{Hence $\Ab$ is shorthand for $\Ab(\Set)$, the abelian group objects in the category of sets.}
This category is isomorphic to $\sAb$, the category of simplicial abelian groups.
We also write $\Ab(\Spaces)$ for the $\infty$-category of abelian group objects in $\Spaces$,
which is, in some sense, a more complicated construction since the data of being an abelian group object involves homotopy coherences.

\begin{rmk}
\label{rmk: truncation preserves lims}
Note that $\Ab(\Spaces)$ is equivalent to the $\infty$-category $\cD^{\leq 0}_\ZZ$, 
which is that localization at quasi-isomorphisms of $\Ch^{\leq 0}_\ZZ$,
the subcategory of cochain complexes concentrated in nonpositive degrees
(i.e., the ``connective'' cochain complexes).
The truncation functor $\cD_\ZZ \to \cD^{\leq 0}_\ZZ$ is a right adjoint and hence preserves limits.~\hfill$\Diamond$
\end{rmk}

Similarly, for any ring $R$, there is an abelian category of left $R$-modules,
and $\Ch_R$ denotes the category of cochain complexes of $R$-modules.
We use $\cD_R = \cD(\Ch_R)$ to denote the stable $\infty$-category obtained by localizing at quasi-isomorphisms.

\subsubsection{On sheaves}

Presheaves and sheaves come in many flavors, depending on the sites (their domains) and on their values (their codomains).
In practice we will often describe a presheaf on a site $\cC$ as, say, a functor of 1-categories
\[
F: \cC^\opp \to \Chz
\]  
and then view it as presenting a functor of $\infty$-categories
\[
\cF: \cC^\opp \to \cDz
\]
via the composite
\[
\cC^\opp \xto{F} \Chz \to \cDz.
\]
When we want to distinguish these functors,
we will refer to $F$ as a {\it $\Chz$-valued presheaf} and its companion $\cF$ as a {\it $\cDz$-valued presheaf}.

The situation with sheaves is much subtler,
because sheaves satisfy local-to-global conditions determined by the Grothendieck topology on the site.
These conditions involve limits in the target category, 
so a sheaf valued in $\Chz$ does not automatically turn into a sheaf valued in~$\cDz$.

On the other hand, a presheaf always determines a sheaf by sheafification
(the adjoint functor to the forgetful functor from sheaves to presheaves).
Hence we will often present a sheaf $\cF$ valued in $\cDz$ by giving a presheaf $F$ valued in $\Chz$, then viewing it as a presheaf valued in $\cDz$, and finally working with its sheafification.

To be precise, we use $\Shv(S, \cDz)$ to denote the $\infty$-category of $\cDz$-valued sheaves on a site~$S$.
By this, we mean the $\infty$-categorical localization of functors (i.e., presheaves)
\[
\PShv(S, \cDz) = \Fun(S^\opp, \cDz) \to \Shv(S, \cDz)
\]
at stalkwise quasi-isomorphisms.
All the sites we use have enough points and hence this notion is well-behaved.
(In other words, we require descent with respect to hypercovers.)

\begin{rmk}
\label{rmk: truncation on D}
Because truncation $\cD_\ZZ \to \cD^{\leq 0}_\ZZ \simeq \Ab(\Spaces)$ preserves limits,
we see that a sheaf valued in $\cDz$ truncates to a sheaf valued in abelian $\infty$-groupoids.~\hfill$\Diamond$
\end{rmk}

When we speak of ``global sections'' $\cF(M)$ of a sheaf $\cF$ on $M$ valued in~$\cDz$,
we mean its value in that $\infty$-category.
In practice, the cohomology of $\cF(M)$ is the hypercohomology of any sheaf $F$ valued in $\Chz$ on~$M$ that presents $\cF$.
Note that if $F$ is simply a sheaf valued in abelian groups~$\Ab$,
then the global sections of $\cF$ is the derived functor of global sections of $F$,
also known as the (traditional) sheaf cohomology of~$F$.
Hence we may use the phrase ``the cohomology of $\cF$'' to mean these (derived) global sections of a sheaf $\cF$ valued in~$\cDz$.

Note that we can thus present such a sheaf~$\cF$ by describing a sheaf $F$ valued in $\Chz$ {\em locally},
i.e., giving its values on coordinate balls.
We will also freely use, for example, short exact sequences that are visible locally.

\subsubsection{On higher bundles}

We have chosen to talk about higher bundles rather than higher gerbes.
The $U(1)$ $p$-bundles on a manifold $M$ are classified, up to isomorphism, by the connected components $\pi_0 \Map(M, B^pU(1))$ of the mapping space in the classifying space $B^p U(1)$. 
Note that a 1-bundle is precisely a principal $U(1)$-bundle and a 2-bundle is precisely a $U(1)$-gerbe.

We will typically denote a derived stack with blackboard bold. 
For instance --- and anticipating moduli spaces that appear late --- we write $\sBun_1(M)$ for the moduli of principal $U(1)$-bundles on the $n$-manifold~$M$ and $\sBun_p(M)$ for the moduli of principal $U(1)$ $p$-bundles on the $n$-manifold~$M$,
where a $p$-bundle means a principal bundle for the ``higher'' group $\BB^p U(1)$,
the group object in derived smooth geometry whose underlying homotopy type is $B^p U(1)$.
Similarly, $\sMxw_{p,n}(M,g)$
denotes the moduli of solutions to the $p$-form Maxwell gauge theory on~$M$ equipped with a metric~$g$
(the curvature $F_A$ of the connection must satisfy $\d \star F_A = 0$).

It is important to bear in mind 
that many different cochain complexes present the same derived stack.

We will typically denote a sheaf (of cochain complexes) with Roman letters, so $\Mxw_{p,n}$ (as defined in Definition~\ref{dfn: p form Maxwell theory} below) denotes a particular sheaf whose derived global sections $\Mxw_{p,n}(M)$ presents the derived stack $\sMxw_{p,n}(M)$.
But $\widetilde{\Mxw}_{p,n}$ (Definition~\ref{dfn: Maxwell-tilde theory}) is another such sheaf that also presents $\sMxw_{p,n}(M)$.

\subsubsection{On sites of spacetimes}

Any manifold $M$ determines a category $\Open(M)$ given by the poset of opens.
The usual topology of $M$ determines a Grothendieck topology on $\Open(M)$ and hence makes it into a site.

We will have use for other sites, that encompass large classes of spacetimes that share a common feature. 
Here are four that appear naturally in studying Maxwell theories, 
but there are many variants to incorporate other topological and geometric structures (e.g., spin structures or auxiliary bundles).

Let $\Riem_n$ denote the site whose objects are $n$-dimensional manifolds equipped with a Riemannian metric,
whose morphisms are isometric embeddings,
and where a cover $\{f_i: U_i\to M\}_{i \in I}$ of $M$ means that the union of the images is~$M$.

Let $\Riem^\ort_n$ denote the site where, in addition, the manifold is oriented and the morphisms are orientation-preserving.
Equivalently, each object $(M,g)$ has a distinguished, nowhere-vanishing top form $\vol_g$ that encodes volume.
This data determines a Hodge star operator $\star$ on the de Rham forms on~$M$.\footnote{One can define a Hodge star on unorientable manifolds by twisting the de Rham complex with the orientation bundle, 
and the reader is welcomed to generalize discussions below to incorporate that possibility.}

Let $\Lrtz_n$ denote the site whose objects are $n$-dimensional manifolds equipped with a Lorentzian metric (i.e., a pseudometric of signature $(-,+,\ldots,+)$),
whose morphisms are isometric embeddings,
and where a cover $\{f_i: U_i\to M\}_{i \in I}$ of $M$ means that the union of the images is~$M$.

Let $\Lrtz^\ort_n$ denote the site where, in addition, the manifold is oriented and the morphisms are orientation-preserving.
Equivalently, each object $(M,g)$ has a distinguished, nowhere-vanishing top form $\vol_g$ that encodes volume.
This data determines a Hodge star operator $\star$ on the de Rham forms on~$M$.

\section{Abelian $p$-form gauge theory}

In this section we review differential cohomology in the guise of Deligne complexes and then cast higher gauge theories in this framework.

\subsection{Smooth Deligne cohomology}
\label{sec: deligne stuff}

There is a sheaf of cochain complexes that presents abelian $p$-form gauge theory (the Maxwell, or Yang--Mills, version).
We construct it in stages.

First, as a cousin to the de Rham complex,
we introduce the smooth Deligne complex.
It resolves the sheaf of locally constant maps into the circle group~$\RR/\ZZ$,
just as the de Rham complex resolves the sheaf of locally constant real-valued functions (or maps into~$\RR$).

\begin{dfn}
\label{Deligne cplx}
On a smooth $n$-dimensional manifold $M$, the smooth {\em Deligne complex} $\Del_M$
is the sheaf valued in $\Chz$ given by the mapping cone of the cochain map
\[
\ZZ_M[1] \to \Omega^\bullet_M
\] 
sending an integer to a constant 0-form on $M$.
(Here $\Omega^\bullet_M$ denotes the de Rham complex as a sheaf and $\ZZ_M$ is the constant sheaf with stalk $\ZZ$.)
More explicitly $\Del_M$ is the complex
\[
\begin{tikzcd}
\ul{-1} & \ul{0} & \cdots & \ul{n} \\
\ZZ_M \ar[r] & \Omega^0_M \ar[r,"\d"] & \cdots \ar[r,"\d"] & \Omega^n_M .
\end{tikzcd}
\]
\end{dfn}

There are natural truncations of $\Del_M$ with rich meaning in themselves.

\begin{dfn}
Let $\Del^{\leq p}_M$ denote the truncated complex of sheaves
\[
\begin{tikzcd}
\ul{-1} & \ul{0} & \cdots & \ul{p} \\
\ZZ_M \ar[r] & \Omega^0_M \ar[r,"\d"] & \cdots \ar[r,"\d"] & \Omega^p_M .
\end{tikzcd}
\]
\end{dfn}

For example, 
\[
\Del^{\leq 0}_M = (\ZZ[1] \to C^\infty_M)
\]
resolves the sheaf $\Map(-,S^1)$ of smooth maps into the circle 
\[
S^1 = \RR/\ZZ \cong U(1)
\]
because, locally on $M$, any map into $S^1$ lifts to a map into $\RR$ but there are an integer's worth of lifts.

The cohomology group\footnote{Here we mean, as mentioned in Section~\ref{sec: notandconv}, the derived global sections, or hypercohomology.} $H^0(M, \Del^{\leq p}_M[p])$ is also known as the differential cohomology group $\check{H}^{p+1}(M)$,
and hence has received substantial attention from physicists and mathematicians,
but the other cohomology groups contain important information.
We view the whole complex $\Del^{\leq p}_M[p]$ as modeling the higher stack of $p$-bundles with connection $\sBun_{p}^\nabla(M)$
(or, if one prefers, the higher stack of $U(1)$ $p-1$-gerbes with connection).
As such, it models the BRST field content of $p$-form gauge theory, as we will see below.
Locally, this shifted complex is
\[
\ZZ_M[p+1] \to \Omega^0_M[p] \xto{\d} \cdots \xto{\d} \Omega^p_M
\]
with $\Omega^p_M$ in degree~0,
so that, locally on $M$, a degree 0 element $A$ is a $p$-form that encodes a ``connection $p$-form''
while a degree~$-1$ element encodes a gauge transformation, and so on.
(See \cite{BryBook, Gaj, FMSAP, FMSCMP, Bunk} for exposition.)
See Section~\ref{sec: coh computations} for computations of the cohomology on manifolds with nontrivial topology,
which exhibit more subtle features.

\begin{dfn}
\label{dfn: bun nabla}
Let $\cBun^\nabla_{p}$ denote the sheaf valued in $\cDz$ on the site of smooth manifolds whose local value is
\[
\ZZ \to \Omega^0 \xto{\d} \cdots \xto{\d} \Omega^p
\]
concentrated in degrees $-p-1$ up to $0$, 
so $\Omega^p$ sits in degree zero.
\end{dfn}

In other words, restricted to a fixed manifold~$M$, this sheaf $\cBun^\nabla_{p}\big|_M$ is presented by the truncated (and shifted) Deligne complex $\Del^{\leq p}_M[p]$ just described.

The derived stack presented by $\cBun^\nabla_{p}(M)$ is~$\sBun^\nabla_{p}(M)$. 
Note that there are other cochain models presenting $\sBun^\nabla_{p}(M)$,
such as the differential characters of Cheeger and Simons.

There is another important type of truncation.

\begin{dfn}
\label{dfn: closed forms}
For $p < n$ let $\Omega^{p+1}_{\cl}$ denote the sheaf valued in $\Chz$ on the site of smooth manifolds
\[
\Omega^{p+1} \xto{\d} \cdots \xto{\d} \Omega^n \cdots 
\]
sitting in degrees $0$ to $n-p-1$.
We call it the sheaf of {\em closed $p+1$-forms}.
\end{dfn}

On a manifold $M$ of dimension~$n$,
the restriction $\Omega^{p+1}_{\cl}\big|_M$ is concentrated in degrees $0$ to~$n-p-1$.

Note that $\Omega^{p+1}_{\cl}$ resolves the sheafification of the presheaf of closed $p$+$1$-forms.
Thus we view it as presenting the appropriate derived stack of {\em closed $p+1$-forms}.

On any manifold $M$ there is a short exact sequence 
\[
0 \to \Omega^{p+1}_{\cl}[-(p+1)] \to \Del_M \to \Del^{\leq p}_M \to 0
\]
of sheaves valued in~$\Chz$.
It says that the cohomology with locally constant coefficients in $U(1)$ amounts to studying closed $p$-forms and choices of trivializations by $p$-forms, up to homotopy.

If we shift this sequence
\[
0 \to \Omega^{p+1}_{\cl}[-1] \to \Del_M[p] \to \cBun^\nabla_{p}\big|_M \to 0,
\]
it presents an interesting map 
\[
\sBun^\flat_{p}(M) \to \sBun^\nabla_{p}(M)
\]
of moduli spaces:
\begin{itemize}
\item the middle term in the sequence presents the moduli space $\sBun^\flat_{p}(M)$ of $U(1)$ $p$-bundles with {\em flat} connection on~$M$;
\item the right map is the forgetful map to the moduli space of $U(1)$ $p$-bundles with connection; and
\item the fiber of that map is the data of a trivialization of the curvature of the connection.
\end{itemize}
The connecting map in the long exact sequence on cohomology is given by the curvature of a connection (not necessarily flat) on a $p$-bundle.
That is, it presents a map
\[
\sBun^\nabla_{p}(M) \xto{\text{curv}} \CC {\rm lsdFrm}^{p+1}(M)
\]
of derived stacks.
(We use, just this once, the notation $\CC {\rm lsdFrm}^{p+1}(M)$ for the stack of closed $p+1$-forms on~$M$.)

\subsection{The connection with physics}

We have nearly all the ingredients to define higher form gauge theory,
except the Hodge star $\star$.
We now require the manifold $M$ to be oriented and equipped with a pseudo-metric,
i.e., either an oriented Riemannian manifold or an oriented Lorentzian manifold.

As we will see below, the equation of motion for $p$-form Maxwell gauge theory\footnote{For an early discussion see~\cite{HenTei} and see~\cite{MooSax} for a recent exposition.} is
\[
\d \star F_A = 0
\]
where $A$ is a connection $p$-form and $F_A = \d A$ is its curvature.
Note that $\d \star F_A$ is an $(n-p)$-form for any $p$-form~$A$,
so we can view $\d \star \d$ as a cochain map from $\cBun^\nabla_{p}$ to $\Omega^{n-p}_{\cl}$.
The derived zero locus then models the derived space of solutions of the equation of motion.
It is conventional to include a coupling constant $\kappa \in \RR^\times = \RR -\{0\}$.

\begin{dfn}
\label{dfn: p form Maxwell theory}
For each nonzero coupling constant $\kappa \in \RR^\times$,
let $\Mxw_{p,n}\langle\kappa\rangle$ denote the sheaf on $\Riem^\ort_n$ (or $\Lrtz^\ort_n$) valued in $\cDz$ given locally by 
\[
 \ZZ \to \Omega^0 \xto{\d} \cdots \xto{\d} \Omega^p \xto{\kappa \d \star \d} \Omega^{n-p} \xto{\d} \cdots
\xto{\d} \Omega^n 
\]
where $\Omega^p$ sits in degree zero.\footnote{We use $\langle \kappa \rangle$ to record the coupling constant rather than $(\kappa)$ or $[\kappa]$ to avoid clashes of notation elsewhere.}
\end{dfn}

In other words it is the mapping cone of the map~$\kappa \d \star \d$:  
\[
\cBun^\nabla_{p} \xto{\kappa \d \star \d} \Omega^{n-p}_{\cl}
\]
and hence imposes the equation of motion at the level of cohomology.\footnote{Note that we use {\em closed} $n-p$-forms as the codomain of the map, rather than simply $\Omega^{n-p}$ without the closure condition. 
At one level, this condition takes into account the Bianchi identity, which is convenient.
As a deeper reason, the equation of motion arises from the differential $\delta S$ of the action functional,
and this 1-form $\delta S$ should be a section of the cotangent bundle of the space of fields, which here is $\cBun_p^\nabla$.
In this case, the tangent space to a field in $\cBun_p^\nabla$ consists of $\Omega^{\leq p}$,
and its dual, under Poincar\'e duality on an oriented $n$-manifold, is $\Omega^{\geq n-p}$.
These are precisely the {\em anti}\/fields of the BV formalism.
(This is a rough sketch of the reasoning behind the BV formalism; 
we suppress here further motivations and subtleties.)}

Note that for $M = \RR^n$, there is cohomology only in degree zero and in degree~$-p$. 
The degree zero cohomology consists of $p$-forms $A$ such that $\d \star F_A = 0$, modulo (infinitesimal) gauge equivalence.
The degree~$-p$ cohomology is~$\RR/\ZZ \cong U(1)$, 
which encodes the (higher) global symmetries of a $p$-bundle.
The degree one cohomology also vanishes locally.\footnote{\label{h1 vanishes }The argument here is a little longer, 
so we relegate it to this footnote. 
Let $\alpha \in \Omega^{n-p}(U)$ be annihilated by $\d$ and hence closed.
Since we are on $\RR^n$, it is also exact: $\alpha = \d \beta$ for some $\beta \in \Omega^{n-p-1}(\RR^n)$.  
By the theory of pseudodifferential operators,
the Hodge Laplacian $\Delta$ on $n-p-1$-forms is surjective
 (ellipticity implies local solvability). 
Thus, for some $\gamma$, 
\[
\beta = \Delta \gamma = \d \d^*\gamma + \d^*\d\gamma
\]  
and therefore $\alpha = \d\star\d(\pm \star \d\gamma)$, so that $\alpha \in \mathrm{im}(\d \star \d)$
as desired.}

This sheaf describes solutions to an abelian $p$-form gauge theory
in the BV formalism, 
as we now explain.

\begin{rmk}
\label{rmk: indexing for duality}
We choose to index the theory $\Mxw_{p,n}$ by the degree of the differential form where the {\em connection} (or gauge potential) lives,
but a natural alternative is to label by the degree of the differential form where the {\em curvature} (or field strength) lives.
This second choice has the virtue that it makes duality easier to remember:
duality sends a field strength $F$ of degree $k$ to a field strength $\star F$ of degree~$n-k$.
For us, duality will relate $\Mxw_{p,n}$ to $\Mxw_{n-p-2,n}$,
a more awkward indexing.~\hfill$\Diamond$
\end{rmk}

Recall that the conventional approach to $p$-form gauge theory~\cite{HenTei} starts with a $p$-form $A$ as the field and the action functional
\begin{equation}
\label{action}
S(A) = \frac{1}{g^2}\int_M \d A \wedge \star \d A
\end{equation}
where $g > 0$ denotes a coupling constant as conventionally parameterized in the physics literature. (Above, we used $\kappa = 1/g^2$ for the coupling constant, and will persist in this throughout for simplicity.) 
The equation of motion is 
\[
\kappa \d \star \d A = 0,
\]
so that the coupling constant does not affect the space of solutions.
In parallel with ordinary, 1-form gauge theory,
one introduces gauge symmetries by $p-1$-forms,
gauge symmetries for those symmetries, and so on.

In the perturbative BRST formalism designed to handle such gauge symmetries, 
people work with the complex 
\[
\Omega^0_M \xto{\d} \cdots \xto{\d} \Omega^{p-1}_M \xto{\d} \Omega^p_M
\]
where $\Omega^p$ is in degree zero.
This complex includes the fields, ghosts, ghosts-for-ghosts, etc. in the perturbative theory.
Note that the $p$-forms sit in degree~0 as the fields of obvious physical interest,
the ghost fields in degree~$-1$ are the $p-1$-forms,
the ghosts-for-ghosts in degree~$-2$ are $p-2$-forms, 
and so on.
It describes the tangent complex $\TT_A \sBun_p^\nabla$ to the moduli stack of $p$-forms modulo (abelian) gauge equivalence at any point~$A$ in the moduli space~$\sBun_p^\nabla$.

In the BV formalism, one introduces conjugate or dual fields, called antifields and antighosts, 
so that the complex for the BV classical theory is
\begin{equation}
\label{cplx: pert mxw}
\Omega^0_M \xto{\d} \cdots \xto{\d} \Omega^p_M \xto{\kappa \d \star \d} \Omega^{n-p}_M \xto{\d} \cdots \xto{\d} \Omega^n_M 
\end{equation}
where $\kappa = 1/g^2$ is the coupling constant and where $\Omega^p$ sits in degree zero.
The action functional for this BV theory is
\begin{equation}
\label{eqn: action for mxw}
S_{BV}(\cA) = \int_M \cA \wedge Q(\cA) = \sum_{k=-p}^{p+1} \int_M A_k \wedge Q(A_{-k})
\end{equation}
where an element of the complex has the form 
\[
\cA = (A_{-p}, \ldots, A_0, \ldots, A_{p+1})
\]
with $A_k$ in degree $k$ and with $Q$ the differential.
Note that for a field concentrated in degree 0 of the form
\[
\cA = (0,\ldots, 0 , A_0 = A,0, \ldots, 0)
\]
we recover the action~\eqref{action}.

We denote this complex~\eqref{cplx: pert mxw} by 
\[
\Mxw_{p,n}^\pert\langle\kappa\rangle.
\]
Note that this complex is similar to the complex from Definition~\ref{dfn: p form Maxwell theory},
except that it lacks the constant sheaf $\ZZ_M$ in degree~$-p-1$.
In more geometric terms,
the usual BRST/BV description $\Mxw_{p,n}^\pert\langle\kappa\rangle$ only sees the Lie {\em algebras} of (higher) gauge symmetries,
while $\Mxw_{p,n}\langle\kappa\rangle$ sees the {\em group}. 
The usual BRST/BV description cannot tell the difference between the Lie groups $\RR$ and~$U(1)$.
Remarkably, if we add the constant sheaf, following Deligne,
we recover all the global information about $U(1)$ $p$-bundles (or higher gerbes).

This construction manifestly makes sense on any smooth oriented Riemannian $n$-manifold and hence lives over the site~$\Riem^\ort_n$.
We view it as 
defining classical abelian $p$-form gauge theory on~$M$, as the following definition captures.

\begin{dfn}
For any Riemannian manifold $M \in \Riem^\ort_n$,\footnote{One can also work with Lorentzian manifolds using the same kind of complexes, with the Hodge star operator determined by the Lorentzian metric.}  the {\em moduli of solutions on $M$ to $p$-form Maxwell gauge theory}~$\sMxw_{p,n}\langle\kappa\rangle(M)$ with coupling constant~$\kappa \in \RR^\times$ is the derived abelian group stack presented by
the derived global sections $\Mxw_{p,n}\langle\kappa\rangle(M)$ via the Dold--Kan correspondence.
\end{dfn}

We now give several explicit examples, showing that this cochain complex locally recovers the usual equations of motion and gauge symmetries.

\begin{eg}[Electromagnetism in four dimensions]
For $M = \RR^4$, the cochain complex is
\[
\ZZ \to \Omega^0(\RR^4) \xto{\d} \Omega^1(\RR^4) \xto{\kappa \d \star \d} \Omega^{3}(\RR^4) \xto{\d} \Omega^4(\RR^4)
\]
and so the zeroth cohomology is
\[
\{ A \in \Omega^1(\RR^4) \, | \, \d \star \d A = 0\}/\{\d f \, |\, f \in C^\infty(\RR^4)\},
\]
namely solutions to Maxwell's equations with $F = \d A$ modulo gauge transformations.
There is cohomology in degree $-1$, namely $\RR/\ZZ \cong U(1)$, 
capturing the global symmetries by constant gauge transformations.
(Recall Footnote~\ref{h1 vanishes } for the vanishing of degree one cohomology.)~\hfill$\Diamond$
\end{eg}

\begin{eg}[Compact boson in two dimensions]
For $M = \RR^2$, the cochain complex is
\[
\ZZ \to \Omega^0(\RR^2) \xto{\kappa \d \star \d} \Omega^2(\RR^2)
\]
and so the zeroth cohomology is
\[
\{ \sigma: \RR^2 \to S^1 \, | \, \d \star \d \sigma = \star \Delta \sigma = 0\},
\]
the harmonic maps into $S^1$.
There is trivial cohomology in degree one because the Laplacian is surjective.
Note that this complex is isomorphic to
\[
\ZZ \xto{2\pi r} \Omega^0(\RR^2) \xto{\d \star \d} \Omega^2(\RR^2)
\]
if $\kappa = 2 \pi r$.
This new complex encodes maps into $\RR/2 \pi r\ZZ$, which is a flat circle with radius~$r$.
~\hfill$\Diamond$
\end{eg}

\begin{eg}[Electromagnetism in three dimensions]
For $M = \RR^3$, the cochain complex is
\[
\ZZ \to \Omega^0(\RR^3) \xto{\d} \Omega^1(\RR^3) \xto{\kappa \d \star \d} \Omega^{2}(\RR^3) \xto{\d} \Omega^3(\RR^3)
\]
and so the zeroth cohomology is
\[
\{ A \in \Omega^1(\RR^3) \, | \, \d \star \d A = 0\}/\{\d f \, |\, f \in C^\infty(\RR^3)\},
\]
namely solutions to Maxwell's equations with $F = \d A$ modulo gauge transformations.
There is cohomology in degree $-1$, namely $\RR/\ZZ \cong U(1)$, 
capturing the global symmetries by constant gauge transformations.~\hfill$\Diamond$
\end{eg}

\begin{eg}[Compact boson in three dimensions]
For $M = \RR^3$, the cochain complex is
\[
\ZZ \to \Omega^0(\RR^3) \xto{\kappa \d \star \d} \Omega^{3}(\RR^3)
\]
and so the zeroth cohomology is
\[
\{ \sigma: \RR^3 \to S^1 \, | \, \d \star \d \sigma = \Delta \sigma = 0\},
\]
the harmonic maps into~$S^1$. 
There is trivial cohomology in degree one because the Laplacian is surjective.~\hfill$\Diamond$
\end{eg}

\subsection{A shifted symplectic structure}

At any point $\cA$ in this stack, corresponding to a $p$-bundle with connection solving the gauge equation,
the tangent complex $\TT_\cA \Mxw_{p,n}\langle\kappa\rangle$ is given by the sheaf
\[
\Omega^0_M \xto{\d} \cdots \xto{\d} \Omega^p_M \xto{\d \star \d} \Omega^{n-p}_M \xto{\d} \cdots \xto{\d} \Omega^n_M
\]
with $\Omega^p_M$ in degree~0. 
Note that the major change here is that we only see infinitesimal gauge transformations,
encoded by the de Rham forms from degrees 0 up to~$p-1$;
the discrete sheaf $\ZZ$ is not there.
This tangent complex is precisely the usual BV formulation of abelian $p$-form gauge theory.

\begin{rmk}
This computation is quick using the functor of points.
View the theory as a presented by a functor on a site of spacetimes and a site of test spaces given by derived manifolds.
Now, fixing the spacetime, run over the derived manifolds associated to the shifted dual numbers $\RR[t]/(t^2)$, where the degree $|t|$ runs over the integers.~\hfill$\Diamond$
\end{rmk}

This observation about the tangent complex, and its recovery of the usual BV formulation,
shows that there is a (local) (-1)-shifted symplectic form at any point~$\cA$.
It is manifestly preserved by the translation action of the abelian group stack on itself.
Together with the data of the sheaf of cochain complexes, this local shifted symplectic structure completely determines the action functional underlying these free theories.

\subsection{On coupling constants}

We put a coupling constant $\kappa$ into the differential from degree zero to degree one,
to match the physics,
but there is another natural place to scale the differentials:
we could multiply the first (nontrivial) differential by a real number~$r$ to get the inclusion  
\[
\ZZ \xto{r} \Omega^0 \to \cdots.
\]
In the case of the compact boson, where $p=0$, 
this location seems particularly natural because $\RR/2 \pi R\ZZ$ is a circle of radius $R$ (here $r = 2 \pi R$).

Note that there is an isomorphism of sheaves
\[
\begin{tikzcd}
\ZZ \arrow[r, "1"] \arrow[d, "1"] & \Omega^0 \arrow[r, "\d"] \arrow[d, "r"] & \cdots \arrow[r, "\d"] & \Omega^p \arrow[r, "\kappa \d \star \d"] \arrow[d, "r"] & \Omega^{n-p} \arrow[r, "\d"] \arrow[d, "r\frac{\widetilde{\kappa}}{\kappa}"] & \cdots \arrow[r, "\d"] & \Omega^n \arrow[d, "r\frac{\widetilde{\kappa}}{\kappa}"]\\
\ZZ \arrow[r, "r"] & \Omega^0 \arrow[r, "\d"] & \cdots \arrow[r, "\d"] & \Omega^p \arrow[r, "\widetilde{\kappa} \d \star \d"] & \Omega^{n-p} \arrow[r, "\d"] & \cdots \arrow[r, "\d"] & \Omega^n
\end{tikzcd}
\]
for any nonzero values $r$, $\kappa$, and $\widetilde{\kappa}$. 
In other words, any ways of scaling the fields (i.e., ``coupling constants'') yield isomorphic moduli problems in a very simple way.

\begin{rmk}
Here is another way of describing the situation.
We have described a family of theories parametrized by $(\RR^\times)^2$,
where a point $(r,\widetilde{\kappa})$ denotes the bottom complex we just saw,
but we also gave an isomorphism from the theory at $(1,\kappa)$ to $(r,\widetilde{\kappa})$.
In fact, we can give an isomorphism between any pair of points.
Hence we have a groupoid whose objects form the space $(\RR^\times)^2$ and where every object has exactly one isomorphism to any other point.
This groupoid contains a skeletal subcategory parametrized the point~$(1,1)$, so there is really one theory up to isomorphism. Physicists will not find this surprising: after all, the theory is free, and thus cannot have any measurable coupling constants.~\hfill$\Diamond$
\end{rmk}

\subsection{On magnetic and electric charges}
\label{sec: charges take one}

We now make some remarks on physical interpretation.
The usual object of interest is a form $F \in \Omega^{p+1}$, 
which plays the role of electromagnetic field strength.
The field $A$ provides a potential so that we have $F = \d A$.
The most familiar case is with $p = 1$ and $n = 4$, 
so that $F$ is a 2-form $E \wedge \d t + \star B$, 
where $E$ and $B$ are the electric and magnetic fields, respectively, viewed as differential forms.
Below we go through our running examples, 
but first we make some general, orienting remarks.\footnote{A classic discussion, aimed at physicists, of how cohomology (notably, \v{C}ech and de Rham) interface with gauge theories and sigma models is~\cite{Alv}.}

\subsubsection{Structural features}

It is important to note that in our framework,
the situation is richer:
the sheaf encodes the data of a $p$-bundle with connection
satisfying the Maxwell equation $\d \star \d A = 0$.
In particular, the degree zero cohomology encodes the isomorphism classes of such (higher) bundles with connection.
To say this cleanly, recall that there is a quotient map of sheaves (valued in~$\cDz$)
\[
\Mxw_{p,n}\langle\kappa\rangle \to \cBun^\nabla_{p}
\]
where
\[
\cBun^\nabla_{p} = (\ZZ \to \Omega^0 \xto{\d} \cdots \xto{\d} \Omega^p)
\]
with $\Omega^p$ in degree zero;
this complex is just a shift of the Deligne complex~$\Del^{\leq p}$.
This map is simply ``project onto the complex in degrees zero and below.''

A $p$-bundle, even without connection, is an interesting object topologically: 
every isomorphism class of such a higher bundle on a manifold $M$ corresponds to a cohomology class $H^{p+1}(M,\ZZ)$.
For $p=1$, we are thinking about principal $U(1)$-bundles, 
and so this cohomology class is the first Chern class of the bundle.
For a higher bundle, its cohomology class can be seen as its identifying Chern class.

\subsubsection{Compact boson in two dimensions}

The compact boson has a field 
\[
\sigma: M \to \RR/\ZZ \cong S^1
\] 
where $M$ is a surface.
If we let $\theta$ denote the coordinate on the covering space $\RR$ over $\RR/\ZZ$,
then the 1-form $\d \theta$ is well-defined on $\RR/\ZZ$ and generates its cohomology.
Any field $\sigma$ pulls back this 1-form to a 1-form $\sigma^* \d\theta$ on $M$.
For any loop $\gamma: S^1 \to M$,
the {\em winding number around $\gamma$} of $\sigma$ is the integral
\[
w_\gamma(\sigma) = \int_\gamma \sigma^*\d \theta .
\]
It counts the degree of the composite map
\[
S^1 \xto{\gamma} M \xto{\sigma} S^1
\]
and is thus an integer.

 Since the winding number is constant on (free) homotopy classes of loops in $M$, it is an example of a {\em topological charge}.
In other words, we can view $\sigma^*\d \theta$ as a {\em topological current} whose charge on a codimension one submanifold $\gamma$ is~$w_\gamma(\sigma)$.
This charge is conserved due to topology,
not to the dynamics.

\subsubsection{Electromagnetism in four dimensions}

In the $p=1$ situation, we interpret this Chern class $c_1(P) \in H^2(M,\ZZ)$ as a topological current that captures {\em magnetic charge}.
The motivating example behind this interpretation is Dirac's magnetic monopole,
where we work with a spacetime $M = \RR^4 - L$ where an infinite line (e.g., coordinate axis) is removed, and so one must specify a $U(1)$-bundle on~$M$.
Any choice $P \to M$ that does {\em not} come from the trivial $U(1)$-bundle on $\RR^4$ deserves to be called a magnetic monopole:
a solution to Maxwell's equation is a connection $A$ on $P$ whose curvature $F_A$ represents $c_1(P) \in H^2(M,\RR)$, which is nonzero by hypothesis.
For any embedded 2-sphere $\Sigma$ linking $L$, 
we thus find a magnetic charge
\[
\int_{\Sigma} F_A \neq 0.
\]
As $M$ deformation-retracts onto $S^2$, the possible choices of $U(1)$-bundle form a copy of
\[
\ZZ \cong H^2(S^2,\ZZ) \cong H^2(M,\ZZ),
\]
and the magnetic charge of the monopole on $\Sigma$ is a linear functional on this cohomology group.

In our framework, the map from the constant sheaf $\ZZ$ to $\Omega^0$ 
gives a map from integral cohomology to real cohomology.
In particular, we get a map 
\[
H^2(M,\ZZ) \to H^2(M,\RR)
\]
that identifies an integral lattice 
\[
H^2_{\rm free}(M,\ZZ) \subset H^2(M,\RR).
\]
This lattice can be seen as the magnetic charges assigned to 2-cycles in~$M$.

Alternatively, think about Maxwell's equations {\em with} currents,
i.e., not in a vacuum and allowing both electric and magnetic currents.
The {\em magnetic} current density $j_{\rm mag}$ appears in the equation 
\[
j_{\rm mag} = \d F_A
\]
and so the magnetic charge within a compact oriented $3$-dimensional submanifold-with-boundary $N \subset M$ is
\[
\int_N j_{\rm mag} = \int_{\partial N} F_A.
\]
We note that the $2$-forms $F_A$ that appear in our sheaf of fields have integral periods, 
as they arise from bundles with connection.

So far we have only discussed magnetic charge and its relationship with the quotient complex~$\cBun^\nabla_{1}$.
Recall that the {\em electric} current density $j_{\rm el}$ appears in the Maxwell equation 
\[
j_{\rm el} = \d \star F_A
\]
and so the electric charge within a compact oriented $3$-dimensional submanifold-with-boundary $N \subset M$ is
\[
\int_N j_{\rm el} = \int_{\partial N} \star F_A,
\]
in parallel to the magnetic charge above, albeit without the integrality constraint.
The associated charges are electric and can take any real value,
i.e., they are not discrete --- or ``quantized'' --- in contrast to magnetic charge.

\subsection{The conceptual meaning of two short exact sequences}
\label{sec: conceptual SES}

A useful feature of describing these theories via sheaves of cochain complexes is access to tricks with sheaves, 
such as short exact sequences.
These not only give long exact sequences in cohomology (see the next section);
they also often foreground conceptual features of the problem.

Our first example involves breaking $\Mxw_{p,n}\langle\kappa\rangle$ into the BRST field content and the antifields:
consider the short exact sequence
\[
0 \to \Omega^{n-p}_{\cl}[-1] \to \Mxw_{p,n}\langle\kappa\rangle \to \cBun_p^\nabla \to 0.
\]
The map out of $\Mxw_{p,n}\langle\kappa\rangle$ sends a solution to the equation of motion to its underlying $p$-bundle with connection and thus corresponds to the forgetful map
\[
\sMxw_{p,n}\langle\kappa\rangle \to \sBun_p^\nabla
\]
at the level of derived stacks.
(We have just seen how that encodes ``magnetic charges.'')
The map of closed $n-p$-forms into $\Mxw_{p,n}\langle\kappa\rangle$ describes the fiber.
The connecting morphism in the long exact sequence of cohomology is, in essence, the differential operator $\d \star \d$ in the equation of motion.

In the derived category of sheaves, this short exact sequence corresponds to a distinguished triangle. 
If we ``rotate'' the triangle, we get a map $\cBun_p^\nabla \to \Omega^{n-p}_{\cl}$ by applying $\d \star \d$,
and it corresponds to a map of derived stacks
\[
\sBun_p^\nabla \to \Omega^{n-p}_{\cl}
\]
from the moduli of $p$-bundles with connection  maps to closed $n-p$-forms.
The moduli of solutions is the fiber of this map over zero.

Our second example involves breaking $\Mxw_{p,n}\langle\kappa\rangle$ into the constant sheaf~$\ZZ$ and the perturbative BV complex:
consider the short exact sequence
\[
0 \to \Mxw_{p,n}^\pert\langle\kappa\rangle \to \Mxw_{p,n}\langle\kappa\rangle \to \ZZ[-p-1] \to 0.
\]
The map out of $\Mxw_{p,n}\langle\kappa\rangle$ forgets a solution down to its underlying $p$-bundle.\footnote{Note that the constant sheaf $\ZZ[-p-1]$ encodes the mapping space into $B^p U(1)$ and hence describes the moduli of $U(1)$ $p$-bundles.
In other words, the cochain complex $\ZZ[-p-1]$, obtained by evaluating the sheaf on a point, presents $B^pU(1)$ as a derived stack, and the derived global sections $\ZZ[-p-1](M)$ over a manifold $M$ presents the stack $\Map(M,B^pU(1))$.}
In this sense it forgets the field theory and simply remembers topology.
In more poetic language, one might say it is a topological field theory underlying this gauge theory.
In fact, any functions on the moduli of $p$-bundles produce ``topological observables'' or ``operators''.

The subcomplex $\Mxw_{p,n}^\pert\langle\kappa\rangle$ of $\Mxw_{p,n}\langle\kappa\rangle$ 
encodes the differential geometric information and the equations of motion, and as such
describes the ``dynamics.''

We now unwind the associated long exact sequences.

\subsection{The moduli of solutions on manifolds with nontrivial topology}
\label{sec: coh computations}

We undertake some computations of cohomology with the Deligne complex and the Maxwell complex,
so as to exhibit the kind of global information encoded by them.
In particular, we want to exhibit the emergence of ``topological charges,''
such as winding number for the compact boson or the first Chern class for 1-form gauge theory, as a kind of magnetic charge.\footnote{These are called topological charges because they are conserved quantities, but conserved because they are topological invariants, not due to Noether's theorem (i.e., a symmetry).}

\subsubsection{Compact boson in two dimensions}

Our principal goal is to see how the winding number emerges.

Let $M$ be a connected surface.

We start with the BRST field content $\cBun_0$, the truncated Deligne complex 
\[
\ZZ_M \to \Omega^0_M
\]
concentrated in degrees~$-1$ to $0$,
which is quasi-isomorphic to $\Map_{C^\infty}(-,S^1)$.
That is, 
\[
H^0(\cBun_0(M)) = \Map(M,S^1) 
\]
encodes the smooth maps into $S^1$ (or ``$0$-bundles'').

Consider the short exact sequence of sheaves
\[
0 \to \Omega^0_M \hookrightarrow \cBun_0 \to \ZZ_M[1] \to 0
\]
and note the following identity in sheaf cohomology:
\[
H^k(M,\ZZ_M[1]) = H^{k+1}(M,\ZZ).
\]
This short exact sequence leads to a long exact sequence in cohomology
\[
\begin{tikzcd}
0 \arrow[r] 
& 0 \arrow[r] \arrow[d, phantom, ""{coordinate, name=Z1}] 
& H^{0}(M,\ZZ) \arrow[dll,rounded corners,to path={ -- ([xshift=2ex]\tikztostart.east)|-(Z1) [near end]\tikztonodes-| ([xshift=-2ex]\tikztotarget.west)-- (\tikztotarget)}] & \\
C^\infty(M) \arrow[r] 
& H^0(\cBun_{0}) \arrow[r] \arrow[d, phantom, ""{coordinate, name=Z2}]
& H^{1}(M,\ZZ) \arrow[dll,rounded corners,to path={ -- ([xshift=2ex]\tikztostart.east)|-(Z2) [near end]\tikztonodes-| ([xshift=-2ex]\tikztotarget.west)-- (\tikztotarget)}] &\\
0 \arrow[r] 
& H^{1}(\cBun_{0}) \arrow[r] 
& H^{2}(M,\ZZ) \arrow[r] & \cdots
\end{tikzcd}.
\]
so that 
\[
H^k(\cBun_{0}) = H^{k+1}(M,\ZZ)
\]
for $k \geq 1$. 
In degree zero, we are studying the smooth maps into $S^1$.
The map
\[
H^0(\cBun_{0}) \to H^{1}(M,\ZZ)
\]
encodes the winding number of such a map (or degree zero Deligne cocycle).
Given any two maps $f$ and $g$ with the same winding number,
let \[
c_p = f(p) - g(p) \in \RR/\ZZ \cong S^1
\]
denote their difference at a fixed point $p$ in $M$.
Then their difference everywhere on $M$ satisfies
\[
g(x) - f(x) = c(p) + \phi(x)
\]
for some smooth function $\phi: M \to \RR$ with $\phi(p) = 0$.
Thus the kernel of the winding map is precisely $C^\infty(M)/\ZZ$.
In short,
\[
H^0(\cBun_{0}) \cong H^{1}(M,\ZZ) \times C^\infty(M)/\ZZ,
\]
recording the winding number and a functional representative (up to integral ambiguity).
This information is independent of the dynamics (or equations of motion) by definition.

Now let us examine the full complex encoding the moduli of solutions:
\[
\ZZ_M \to \Omega^0_M \xto{\kappa \d \star \d} \Omega^2_M
\]
concentrated in degrees~$-1$ to~$1$.
Observe there is a short exact sequence
\[
0 \to \Omega^2[-1] \to \Mxw_{0,2}\langle\kappa\rangle \to \cBun_0 \to 0
\]
and hence a long exact sequence in cohomology
\[
\begin{tikzcd}
0 \arrow[r] 
& H^0(\Mxw_{0,2}\langle\kappa\rangle(M)) \arrow[r] \arrow[d, phantom, ""{coordinate, name=Z1}]
& H^0(\cBun_{0}) \arrow[dll,rounded corners,to path={ -- ([xshift=2ex]\tikztostart.east)|-(Z1) [near end]\tikztonodes-| ([xshift=-2ex]\tikztotarget.west)-- (\tikztotarget)}] & \\
\Omega^2(M) \arrow[r] 
& H^1(\Mxw_{0,2}\langle\kappa\rangle(M)) \arrow[r] \arrow[d, phantom, ""{coordinate, name=Z2}]
& H^{1}(\cBun_{0}) \arrow[dll,rounded corners,to path={ -- ([xshift=2ex]\tikztostart.east)|-(Z2) [near end]\tikztonodes-| ([xshift=-2ex]\tikztotarget.west)-- (\tikztotarget)}] &\\
0 \arrow[r] 
& H^2(\Mxw_{0,2}\langle\kappa\rangle(M)) \arrow[r] 
& H^{2}(\cBun_{0}) \arrow[r] & \cdots
\end{tikzcd}.
\]
Here the only interesting connecting map is the equation of motion $\kappa \d \star \d$ acting on $\Map(M,S^1)$ so
\[
H^0(\Mxw_{0,2}\langle\kappa\rangle(M)) = H^{1}(M,\ZZ) \times \ker(\kappa \d \star \d),
\]
as we expect.\footnote{When $M$ is closed, oriented, and connected, the second, dynamical factor is $\RR$ as it is the de Rham cohomology group $H^0(M,\RR)$, by Hodge theory.}
For $k > 1$, we have
\[
H^k(\Mxw_{0,2}\langle\kappa\rangle(M)) = H^k(\cBun_{0}) = H^{k+1}(M,\ZZ).
\]
The only remaining case is 
\[
H^1(\Mxw_{0,2}\langle\kappa\rangle(M)) \cong H^2(M,\ZZ) \times \Omega^2(M)/{\rm im}(\kappa \d \star \d)  
\]
where the second factor is dynamical (it is the cokernel of the linear operator encoding the equation of motion) 
and where the first factor is topological.\footnote{When $M$ is closed, oriented, and connected, the topological factor is just $\ZZ$. When it is not closed, it is zero. Similarly, when $M$ is closed, oriented, and connected, the topological factor is just $\RR$ as it is the de Rham cohomology group $H^2(M,\RR)$, by Hodge theory. In this setting, it is the linear dual (``antifields'') to the zeroth cohomology (the naive fields). In general, there is the Poincar\'e pairing between $\ker(\kappa \d \star \d)$ and $\Omega^2(M)/{\rm im}(\kappa \d \star \d)$.}

\subsubsection{Electromagnetism in four dimensions}

For simplicity, take $p=1$ and $n=4$, so that we focus on the usual setting of electromagnetism.
Let $(N,h)$ be a closed, oriented, connected Riemannian 3-manifold,
and consider the 4-manifold 
\[
M = N \times \RR
\] 
equipped with the product metric $g = h \pm \d t^2$, 
where the sign depends on the reader's preference.

The BRST field content of electromagnetism is the sheaf cohomology of $\cBun_1^\nabla$,
the truncated Deligne complex
\[
\ZZ_M \to \Omega^0_M \xto{\d} \Omega^1_M
\]
concentrated in degrees~$-2$ to~$0$. 
Using the short exact sequence of sheaves
\[
0 \to \Omega^{\leq 1}_M[1] \hookrightarrow \cBun_1^\nabla \to \ZZ_M[2] \to 0
\]
we obtain a long exact sequence in cohomology
\[
\begin{tikzcd}
0 \arrow[r] 
& 0 \arrow[r] \arrow[d, phantom, ""{coordinate, name=Z1}]
& H^{0}(N,\ZZ) \arrow[dll,rounded corners,to path={ -- ([xshift=2ex]\tikztostart.east)|-(Z1) [near end]\tikztonodes-| ([xshift=-2ex]\tikztotarget.west)-- (\tikztotarget)}]\\
H^0(N,\RR) \arrow[r] 
& H^{-1}(M,\cBun_1^\nabla) \arrow[r] \arrow[d, phantom, ""{coordinate, name=Z2}]
& H^{1}(N,\ZZ) \arrow[dll,rounded corners,to path={ -- ([xshift=2ex]\tikztostart.east)|-(Z2) [near end]\tikztonodes-| ([xshift=-2ex]\tikztotarget.west)-- (\tikztotarget)}]\\
\Omega^1(M)/\d \Omega^0(M) \arrow[r] 
& H^{0}(M,\cBun_1^\nabla) \arrow[r] \arrow[d, phantom, ""{coordinate, name=Z3}]
& H^{2}(N,\ZZ) \arrow[dll,rounded corners,to path={ -- ([xshift=2ex]\tikztostart.east)|-(Z3) [near end]\tikztonodes-| ([xshift=-2ex]\tikztotarget.west)-- (\tikztotarget)}]\\
0 \arrow[r] 
& H^{1}(M,\cBun_1^\nabla) \arrow[r] \arrow[d, phantom, ""{coordinate, name=Z4}]
& H^{3}(N,\ZZ) \arrow[dll,rounded corners,to path={ -- ([xshift=2ex]\tikztostart.east)|-(Z4) [near end]\tikztonodes-| ([xshift=-2ex]\tikztotarget.west)-- (\tikztotarget)}]\\
0 \arrow[r] 
& H^{2}(M,\cBun_1^\nabla) \arrow[r] 
& H^{4}(N,\ZZ) \cdots
\end{tikzcd}.
\]
That means 
\[
H^{k}(M,\cBun_1^\nabla) \cong H^{k+1}(N,\ZZ)
\]
for $k \geq 1$ since the complex $\Omega^{\leq 1}_M[1]$ has cohomology only in two degrees.
As the {\em free} part of integral cohomology maps into de Rham cohomology, we find
\[
H^{-1}(M,\cBun_1^\nabla) \cong \RR/\ZZ \oplus {\rm Tors}(H^{1}(N,\ZZ)) = \RR/\ZZ,
\]
since $H^1(N,\ZZ)$ has no torsion (by the universal coefficien theorem).
Finally
\[
H^{0}(M,\cBun_1^\nabla) \cong H^{2}(N,\ZZ) \oplus (\Omega^1(M)/\d \Omega^0(M))/H^1(N,\ZZ).
\]
That last summand (the iterated quotient) admits a simpler description:
first observe that
\[
\Omega^1(M)/\d \Omega^0(M) = H^1(N,\RR) \oplus \d \Omega^1(M)
\]
and so
\[
(\Omega^1(M)/\d \Omega^0(M))/H^1(N,\ZZ) \cong H^1(N,\RR)/H^1(N,\ZZ) \oplus \d \Omega^1(M).
\]
In sum,
\[
H^{0}(M,\cBun_1^\nabla) \cong H^{2}(N,\ZZ) \oplus H^1(N,\RR/\ZZ) \oplus \d \Omega^1(M).
\]
We note that $H^2(N,\ZZ)$ encodes the first Chern class, or {\em magnetic charge} of the $U(1)$-bundle,
while the torus $H^1(N,\RR/\ZZ)$ parametrizes the flat $U(1)$-connections by their monodromy.
The large factor $\d \Omega^1(M)$ parametrizes the 2-forms that could appear as possible field strengths (before imposing the equations of motion).

Now let us examine the full complex encoding the moduli of solutions:
\[
\ZZ_M \to \Omega^0_M \xto{\d} \Omega^1_M \xto{\kappa \d \star \d} \Omega^{3}_M \xto{\d} \Omega^4_M.
\]
Observe that there is a short exact sequence
\[
0 \to \Omega^3_{\cl}[-1] \to \Mxw_{2,4}\langle\kappa\rangle \to \cBun_1^\nabla \to 0
\]
and hence a long exact sequence
\[
\begin{tikzcd}
0 \arrow[r] 
& H^{-1}(\Mxw_{1,4}\langle\kappa\rangle(M)) \arrow[r]  \arrow[d, phantom, ""{coordinate, name=Z1}]
& H^{-1}(M,\cBun_1^\nabla) \arrow[dll,rounded corners,to path={ -- ([xshift=2ex]\tikztostart.east)|-(Z1) [near end]\tikztonodes-| ([xshift=-2ex]\tikztotarget.west)-- (\tikztotarget)}]\\
0 \arrow[r] 
& H^{0}(\Mxw_{1,4}\langle\kappa\rangle(M)) \arrow[r] \arrow[d, phantom, ""{coordinate, name=Z2}]
& H^{0}(M,\cBun_1^\nabla) \arrow[dll,rounded corners,to path={ -- ([xshift=2ex]\tikztostart.east)|-(Z2) [near end]\tikztonodes-| ([xshift=-2ex]\tikztotarget.west)-- (\tikztotarget)}]\\
\ker\left( \d|_{\Omega^3(M)}\right) \arrow[r] 
& H^{1}(\Mxw_{1,4}\langle\kappa\rangle(M)) \arrow[r] \arrow[d, phantom, ""{coordinate, name=Z3}]
& H^{1}(M,\cBun_1^\nabla) \arrow[dll,rounded corners,to path={ -- ([xshift=2ex]\tikztostart.east)|-(Z3) [near end]\tikztonodes-| ([xshift=-2ex]\tikztotarget.west)-- (\tikztotarget)}]\\
0 \arrow[r] 
& H^{2}(\Mxw_{1,4}\langle\kappa\rangle(M)) \arrow[r] \arrow[d, phantom, ""{coordinate, name=Z4}]
& H^{2}(M,\cBun_1^\nabla) \arrow[dll,rounded corners,to path={ -- ([xshift=2ex]\tikztostart.east)|-(Z4) [near end]\tikztonodes-| ([xshift=-2ex]\tikztotarget.west)-- (\tikztotarget)}]\\
0 \arrow[r] 
& H^{3}(\Mxw_{1,4}\langle\kappa\rangle(M)) \arrow[r] 
& H^{3}(M,\cBun_1^\nabla) \cdots\\
\end{tikzcd}.
\]
That means 
\[
H^{k}(\Mxw_{1,4}\langle\kappa\rangle(M)) \cong H^{k}(M,\cBun_1^\nabla)
\]
for $k <0$ and for $k > 1$ since the complex $\Omega^3_{\cl}[-1]$ has cohomology only in two degrees.
In particular, note that
\[
H^{-1}(\Mxw_{1,4}\langle\kappa\rangle(M)) \cong \RR/\ZZ \oplus {\rm Tors}(H^{1}(N,\ZZ))
\]
so that the global symmetries $\RR/\ZZ$ are enhanced by a finite group.
The connecting map in the long exact sequence is $\kappa \d \star \d$, so
\[
H^{0}(\Mxw_{1,4}\langle\kappa\rangle(M)) \cong \ker\left(\kappa \d \star \d|_{H^{0}(M,\cBun_1^\nabla)} \right)
\]
and thus consists of those bundles with connections whose curvatures satisfy the equation of motion $\d \star \d A = 0$.
In degrees above zero, the Deligne complex becomes integral cohomology so the connecting map acts trivially.
The last interesting cohomology group is
\[
H^{1}(\Mxw_{1,4}\langle\kappa\rangle(M)) \cong H^3(N,\ZZ) \oplus \ker\left( \d|_{\Omega^3(M)}\right)/\d\star \d\Omega^1(M),
\]
which has an interpretation in the BV formalism.

\subsection{A first and failed attempt at abelian duality}

For Maxwell theory on $\RR^4$ in a vacuum, 
one can replace $F$ with $\star F$ and get another solution to Maxwell's equations.
Roughly speaking, one might hope to replace the connection 1-form $A$ by a ``dual'' 1-form $C$ such that $F_C = \star F_A$.
In this spirit, 
we might hope to relate a $p$-form generalized Maxwell theory to an $n-p-2$-form generalized Maxwell theory,
where the curvature $F_A$ of a $p$-form $A$ is Hodge dual to the curvature $F_C$ of an $(n-p-2)$-form~$C$.

This version of duality does not hold at the level of the complexes of BV fields (or the derived stacks they present).

For the $n$-manifold $\RR^n$, the zeroth cohomology groups are isomorphic, 
but the negative cohomology groups differ for most values of~$p$: 
there is a copy of $S^1$ in degree~$-p$ for the $p$-form theory and in degree~$-(n-p-2)$ for the putative dual.

For other $n$-manifolds, even the zeroth cohomology groups differ.

\section{An alternative presentation of Maxwell theory}

In this section we offer an alternative sheaf that presents $p$-form Maxwell theory.
It makes clear why we might hope for a duality, 
but also suggests why we need to ``quantize'' electric charge to obtain duality.

\subsection{Rewriting the antifields}

Recall that on any $n$-manifold $M$, there is a cochain complex of sheaves
\begin{equation}
\label{eqn: hat omega}
\RR \to \Omega^0 \xto{\d} \cdots \xto{\d} \Omega^n
\end{equation}
concentrated in degrees~$-1$ to $n$,
and this complex is acyclic (i.e., the derived global sections are always zero) by the Poincar\'e lemma.
Let's denote this cochain complex by $\widehat{\Omega}^\bullet$ to emphasize how it is an extension of the de Rham complex.
(It is sometimes called the augmented de Rham complex.)

\begin{rmk}
In \eqref{eqn: hat omega}, the sheaf $\RR$ denotes the locally constant sheaf,
whose value on a manifold $M$ is the continuous maps from $M$ into $\RR$ with the {\em discrete} topology.
This is {\em not} the same as smooth maps into $\RR$ with the Euclidean topology;
that sheaf is denoted by $C^\infty$ or $\Omega^0$.~\hfill$\Diamond$
\end{rmk}

There is then a short exact sequence of sheaves
\[
0 \to \Omega^{\geq k} \to \widehat{\Omega}^\bullet \to \widehat{\Omega}^{< k} \to 0
\]
and hence there is a quasi-isomorphism of sheaves
\[
\widehat{\Omega}^{< k} \xto{\d} \Omega^{\geq k}[1].
\]
(Note that $\Omega^{\geq k}$ is the closed $k$-forms, up to shift.)
Let's write out the details to make the claim transparent:
\[
\begin{tikzcd}
\RR \ar[r] & \Omega^0 \ar[r,"\d"] & \cdots \ar[r,"\d"]& \Omega^{k-2} \ar[r] \ar[d] & \Omega^{k-1} \ar[d, "\d"] \ar[r] & 0 \ar[d] \ar[r] & 0 \ar[r] & \cdots & \\ 
& & \cdots \ar[r] & 0 \ar[r] & \Omega^k \ar[r, "\d"] & \Omega^{k+1} \ar[r, "\d"] & \cdots \ar[r, "\d"]& \Omega^n 
\end{tikzcd}
\]
This is just the ``boundary'' map arising from the short exact sequence above.

\begin{rmk}
The complex $\widehat{\Omega}^{< k}$, if shifted to put its rightmost term $\Omega^{k-1}$ in degree zero,
resembles the sheaf $\cBun^\nabla_{k-1}$ of Definition~\ref{dfn: bun nabla},
except that the leftmost factor is changed from $\ZZ$ to~$\RR$.
At the level of moduli spaces, the complex $\widehat{\Omega}^{< k}[k-1]$ presents the moduli of $k-1$-bundles --- with gauge group $\RR$ --- with a connection,
but where the global symmetries are trivialized (i.e., we only consider ``small'' gauge transformations\footnote{In the literature, ``small'' gauge transformations sometimes refers to the connected component of the identity in the global sections of the group of gauge transformations. Here, we use this term differently; it will do no harm if the reader imagines, for example,  gauge transformations with compact support.}).~\hfill$\Diamond$
\end{rmk}

Now we can use this quasi-isomorphism to modify $\Mxw_{p,n}\langle\kappa\rangle$ by replacing 
\[
\Omega^{n-p}_{\cl}[-1] = \Omega^{\geq n-p}[n-p-1]
\]
with $\widehat{\Omega}^{< n-p}[n-p-2]$.
In other words, consider the complex
\[
\begin{tikzcd}[row sep = tiny]
\ZZ \ar[r] & \Omega^0 \ar[r, "\d"] & \cdots \ar[r, "\d"]& \Omega^p \ar[dr, "\star \d"]& \\
&&&& \Omega^{n-p-1} \\
\RR \ar[r] & \Omega^0 \ar[r, "\d"] & \cdots \ar[r, "\d"]& \Omega^{n-p-2} \ar[ur, "-\kappa \d"]&
\end{tikzcd}
\]
where $\Omega^p$ and $\Omega^{n-p-2}$ sit in degree zero\footnote{Bear in mind that the copies of $\ZZ$ and $\RR$ do not (typically) sit in the same degree, even if the diagram might suggest that. On the other hand, $\Omega^p$ and $\Omega^{n-p-2}$ {\em do} both sit in degree zero. This proviso applies for many cochain complexes we write.}
Note that on $\RR^n$, the cohomology is straightforward to compute:
\begin{itemize}
\item it is manifestly zero in degree~2 and above,
\item it is $\RR/\ZZ$ in degree~$-p$, and
\item it is zero for any other negative degree, by the Poincar\'e lemma or manifestly.
\end{itemize}
In degree zero, a pair $(A,C) \in \Omega^p \oplus \Omega^{n-p-2}$ is closed if and only if
\[
\star F_A = \star \d A = \kappa \d C = \kappa F_C.
\]
Such a closed element $(A,C)$ satisfies the vacuum Maxwell equation
\[
\d \star F_A = \kappa \d^2 C = 0.
\]
On the other hand, the field $C$ is uniquely determined by $\star F_A$, up to exact terms, by Poincar\'e lemma.
Hence the zeroth cohomology consists of $A$ such that $\d \star F_A = 0$, up to exact terms (i.e., infinitesimal gauge equivalence).

Hodge theory tells us the first cohomology group vanishes:
every form is a unique linear combination of an exact $n-p-1$-form with a coexact form.

This complex is quasi-isomorphic to~$\Mxw_{p,n}$,
and we will use it as an alternative approach to abelian $p$-form gauge theory. The following definition fixes our notation for it.

\begin{dfn}
\label{dfn: Maxwell-tilde theory}
Let $\widetilde{\Mxw}_{p,n}\langle\kappa\rangle$ denote the sheaf on $\Riem^\ort_n$ (or $\Lrtz^\ort_n$) valued in $\cDz$ given locally by
\begin{equation}
\label{eqn: tildemax}
\begin{tikzcd}[row sep = tiny]
\ZZ \ar[r] & \Omega^0 \ar[r, "\d"] & \cdots \ar[r, "\d"]& \Omega^p \ar[dr, "\star \d"]& \\
&&&& \Omega^{n-p-1} \\
\RR \ar[r] & \Omega^0 \ar[r, "\d"] & \cdots \ar[r, "\d"]& \Omega^{n-p-2} \ar[ur, "-\kappa \d"]&
\end{tikzcd},
\end{equation}
where $\Omega^p$ and $\Omega^{n-p-2}$ are concentrated in degree zero.
\end{dfn}

This sheaf of cochain complexes presents a derived stack:
a point in this stack is a $p$-bundle with connection $A$ and an $(n-p-2)$-form $C$ such that $\star \d A - \kappa \d C = 0$, i.e.,
\[
\star F_A = \kappa F_C
\]
so that the curvatures (or field strengths) are Hodge dual, up to a scalar factor.
Due to the quasi-isomorphism of sheaves
\[
\Mxw_{p,n}\langle\kappa\rangle \xto{\simeq} \widetilde{\Mxw}_{p,n}\langle\kappa\rangle
\]
we see that $\widetilde{\Mxw}_{p,n}\langle\kappa\rangle$ also presents the derived stack~$\sMxw_{p,n}\langle\kappa\rangle$. 


\begin{eg}[Electromagnetism in four dimensions]
For $M = \RR^4$, the cochain complex is
\[
\begin{tikzcd}[row sep = tiny]
\ZZ \ar[r] & \Omega^0 \ar[r, "\d"] & \Omega^1 \ar[dr, "\star \d"]& \\
&&& \Omega^{2} \\
\RR \ar[r] & \Omega^0 \ar[r, "\d"] & \Omega^{1} \ar[ur, "-\kappa \d"]&
\end{tikzcd},
\]
with the $\Omega^1$ terms in degree zero.~\hfill$\Diamond$
\end{eg}

\begin{eg}[Compact boson in two dimensions]
For $M = \RR^2$, the cochain complex is
\[
\begin{tikzcd}[row sep = tiny]
\ZZ \ar[r] & \Omega^0 \ar[dr, "\star \d"]& \\
&& \Omega^{2} \\
\RR \ar[r] & \Omega^0 \ar[ur, "-\kappa \d"]&
\end{tikzcd},
\]
with the $\Omega^0$ terms in degree zero.
To be more explicit, note that a degree zero cocycle, locally on Euclidean $\RR^2$, is a smooth map 
\[
u: \RR^2 \to \RR
\]
such that there is another function 
\[
v: \RR^2 \to \RR
\]
where
\[
\star \d u = \d v
\]
or more explicitly
\[
\partial_x v = - \partial_y u \quad \text{ and }\quad \partial_y v = \partial_x u,
\]
which are the Cauchy-Riemann equations.
In other words, a degree zero cocycle determines a holomorphic function $u + iv$ on~$\CC$.
At the level of cohomology, we only care about $u$ up to a shift by an integer and $v$ up to a shift by a real number.~\hfill$\Diamond$
\end{eg}

\begin{eg}[Electromagnetism in three dimensions]
For $M = \RR^3$, the cochain complex is
\[
\begin{tikzcd}[row sep = tiny]
\ZZ \ar[r] & \Omega^0 \ar[r, "\d"] & \Omega^1 \ar[dr, "\star \d"]& \\
&&& \Omega^{1} \\
&\RR \ar[r] & \Omega^0 \ar[ur, "-\kappa \d"]&
\end{tikzcd},
\]
with the $\Omega^1$ terms in degree zero.~\hfill$\Diamond$
\end{eg}

\begin{eg}[Compact boson in three dimensions]
For $M = \RR^3$, the cochain complex is
\[
\begin{tikzcd}[row sep = tiny]
& \ZZ \ar[r] & \Omega^0 \ar[dr, "\star \d"]& \\
&&& \Omega^{1} \\
\RR \ar[r] & \Omega^0 \ar[r, "\d"] & \Omega^1 \ar[ur, "-\kappa \d"]&
\end{tikzcd},
\]
with the $\Omega^1$ terms in degree zero.~\hfill$\Diamond$
\end{eg}

\subsection{On electric charges}
\label{sec: elec charges}

This alternative formulation provides a slick way to describe electric charges using sheaves and moduli spaces.
The main object of interest, the field strength $F \in \Omega^{p+1}$, 
now has a potential $A$ and dual potential $C$ with $F = \d A$ and $F = \star \d C$.
We can rephrase that efficiently using our description.

First, recall how to get ``magnetic'' information. 
As we saw in Section~\ref{sec: charges take one}, there is the quotient map of sheaves (valued in~$\cDz$)
\[
\widetilde{\Mxw}_{p,n}\langle\kappa\rangle \to \cBun^\nabla_{p}
\]
by projecting onto the top row of the complex~\eqref{eqn: tildemax}.
It presents the map of moduli spaces
\[
\sMxw_{p,n}\langle\kappa\rangle \to \sBun^\nabla_p
\]
where one remembers just the underlying $p$-bundle with connection~$A$.
The magnetic charge means the integral cohomology class of the $p$-bundle.\footnote{Rather, the charge ``inside'' a closed and oriented $p+1$-manifold $\Sigma$ is obtained by pairing the homology class $[\Sigma]$ with this degree~$p+1$ cohomology class.}

Now, by projecting onto the bottom row of the complex~\eqref{eqn: tildemax}, 
we have a similar quotient map of sheaves (valued in~$\cDz$)
\[
\widetilde{\Mxw}_{p,n}\langle\kappa\rangle \to 
\widehat\Omega^{\leq n-p-2}
\]
where
\[
\widehat\Omega^{\leq n-p-2} = \RR \to \Omega^0 \xto{\d}  \cdots \xto{\d} \Omega^{n-p-2}
\]
denotes the bottom row.
This sheaf $\widehat\Omega^{\leq n-p-2}$ presents the moduli of $(n-p-2)$-bundles {\em with gauge group~$\RR$} and with connection
(but with large gauge transformations --- or global symmetries --- killed).
In concrete terms it remembers the dual connection $C$ for the field strength~$F$,
and hence its {\em real} cohomology class encodes electric charge.
These electric charges can take any real value,
i.e., they are not discrete --- or ``quantized'' --- in contrast to magnetic charge.

\section{Abelian duality}
\label{sec: main result}

Here we will offer a version of abelian duality (or electromagnetic duality) in terms of derived stacks.

\subsection{Encoding charge quantization (or discretization)}

At some level, duality is really about a theory that is different than, but very close to, abelian gauge theory as traditionally defined by a Lagrangian. 
The intuitive idea of duality is that we should be able to view either $A$ or $C$ as the ``potential'' for a field strength $F$.
But we saw in the preceding section that, for Maxwell $p$-form theory,
there is an asymmetry:
the connection $A$ lives on a $U(1)$-bundle while the connection $C$ lives on an $\RR$-bundle.
Following the path we have outlined, though, it is straightforward to put them on equal footing.

\begin{dfn}
For nonzero values $\kappa, e \in \RR$, 
let $\Mxw_{p,n}^\circ\langle\kappa,e\rangle$ denote the sheaf valued in $\cDz$ given locally by
\[
\begin{tikzcd}[row sep = tiny]
\ZZ \ar[r] & \Omega^0 \ar[r, "\d"] & \cdots \ar[r, "\d"]& \Omega^p \ar[dr, "\star \d"]& \\
&&&& \Omega^{n-p-1} \\
\ZZ \ar[r, "e"] & \Omega^0 \ar[r, "\d"] & \cdots \ar[r, "\d"]& \Omega^{n-p-2} \ar[ur, "-\kappa \d"']&
\end{tikzcd}
\]
where $e$ denotes an {\em electric} coupling constant and where $\Omega^p$ and $\Omega^{n-p-2}$ sit in degree zero.
\end{dfn}

We call this theory $\Mxw_{p,n}^\circ\langle\kappa,e\rangle$ the {\em charge-discretized} $p$-form gauge theory.
This sheaf $\Mxw_{p,n}^\circ\langle\kappa,e\rangle$ manifestly maps to $\widetilde{\Mxw}_{p,n}\langle\kappa\rangle$ by an obvious inclusion of sheaves
\begin{equation}
\label{eqn: forget mag charge}
\Mxw_{p,n}^\circ\langle\kappa,e\rangle \hookrightarrow \widetilde{\Mxw}_{p,n}\langle\kappa\rangle
\end{equation}
that forgets the electric charge.
Using 
\[
\sMxw_{p,n}^\circ\langle\kappa,e\rangle(M)
\] 
to denote the derived stack presented by the derived global sections~$\Mxw_{p,n}^\circ\langle\kappa,e\rangle(M)$,
the moduli of solutions for this theory on a Riemannian $n$-manifold~$M$,
the map \eqref{eqn: forget mag charge} presents a map 
\begin{equation}
\label{project to usual}
\sMxw_{p,n}^\circ\langle\kappa,e\rangle \to \sMxw_{p,n}\langle\kappa\rangle
\end{equation}
of derived stacks.

Physically, we are making electric charge {\em discrete}, rather than continuous, 
when we restrict to $\Mxw_{p,n}^\circ\langle\kappa,e\rangle$ inside~$\widetilde{\Mxw}_{p,n}\langle\kappa\rangle$.\footnote{People often say that electric charge is then quantized, since it comes in discrete values (``quanta''), 
but this use of ``quantize'' is different than computing a path integral or applying canonical quantization. 
Hence we avoid this terminology for clarity's sake.}
In a sense,
we are changing the generalized global symmetries by working with the theory described by this ``substack.''

When $M$ is closed,
the stack $\sMxw_{p,n}^\circ\langle\kappa,e\rangle(M)$ is $(-1)$-shifted {\em pre}\/symplectic,
where this presymplectic structure is simply the shifted symplectic structure of gauge theory pulled back along the inclusion~\eqref{eqn: forget mag charge}.
(For $M$ not closed, there is a ``local'' presymplectic structure,
as a kind of pairing valued in volume forms on~$M$.)

In short, this sheaf $\Mxw^\circ_{p,n}\langle\kappa,e\rangle$ describes a (slightly degenerate) classical field theory where both electric and magnetic charges are discrete.
We need the language of moduli spaces or derived geometry to articulate this theory carefully,
because a traditional Lagrangian (or action functional) is not sufficient to describe this condition.\footnote{A theory presented by a Lagrangian (i.e., the solutions to its equations of motion) is always $-1$-shifted symplectic when constructed in the BV formalism. 
The charge-discretized theory is presymplectic, and thus cannot be fully described by an action functional in traditional fashion.}

This derived stack $\sMxw_{p,n}^\circ\langle\kappa,e\rangle(M)$ can be seen as a fiber product in the following way.
For ease of notation, we set $e =1$.
Consider the sheaf map
\[
\Mxw_{p,n}^\circ\langle\kappa,1 \rangle \to \cBun_p^\nabla \times \cBun_{n-p-2}^\nabla
\]
and, furthermore, that $\Mxw_{p,n}^\circ\langle\kappa,1\rangle$ is a (co)cone 
\[
\Cone\left( \cBun_p^\nabla \oplus \cBun_{n-p-2}^\nabla \xto{\star\d - \kappa \d} \Omega^{n-p-1}\right).
\]
In stack-theoretic terms, we have a fiber product
\[
\begin{tikzcd}
\sMxw_{p,n}^\circ\langle\kappa,1\rangle \ar[r] \ar[d] & \sBun_p^\nabla  \ar[d, "\star\d "]  \\
\sBun_{n-p-2}^\nabla \ar[r,"- \kappa \d"] & \Omega^{n-p-1}
\end{tikzcd},
\]
identifying the charge-discretized solutions as pairs of higher bundles with connection whose curvatures are Hodge dual.
(Equivalently, there is the fiber product
\[
\begin{tikzcd}
\sMxw_{p,n}^\circ\langle\kappa,1\rangle \ar[r] \ar[d] & \sBun_p^\nabla \times \sBun_{n-p-2}^\nabla \ar[d, "\star\d - \kappa \d"]  \\
\ast \ar[r,"\text{zero}"] & \Omega^{n-p-1}
\end{tikzcd},
\]
of derived stacks.)

We emphasize that this fiber product description is a literal-minded interpretation of the idea that every field-strength $F_A$ must have a dual $F_A = \star F_C$ for some dual connection~$C$.
(Bunster and Henneaux develop closely related ideas in~\cite{BunHen}.)

We discuss examples in detail below, but it may be illuminating to see a special case now.
Take $n=4$ and $p=1$, which is ordinary electromagnetism. 
For simplicity, we take $\kappa = 1 = e$.
The complex is then
\[
\begin{tikzcd}[row sep = tiny]
\ZZ \ar[r] & \Omega^0 \ar[r, "\d"] & \Omega^1 \ar[dr, "\star \d"]& \\
&&& \Omega^{2} \\
\ZZ \ar[r] & \Omega^0 \ar[r, "\d"] &  \Omega^{1} \ar[ur, "-\d"']&
\end{tikzcd}
\]
with the 1-forms (both copies) sitting in degree zero.

This example makes apparent how abelian duality should work:
the top and bottom rows are isomorphic and hence can be swapped,
while $-\star$ is an isomorphism of $\Omega^2$.
This combination of isomorphisms (swap on the rows and $-\star$ on the far right term) 
\[
\begin{tikzcd}[row sep = tiny]
\ZZ \ar[r] \arrow[dd, leftrightarrow, dashed, bend right] & \Omega^0 \ar[r, "\d"] \arrow[dd, leftrightarrow, dashed, bend right] & \Omega^1 \ar[dr, "\star \d"] \arrow[dd, leftrightarrow, dashed, bend right]& \\
&&& \Omega^{2} \arrow[loop right, leftrightarrow, dashed, "-\star"] \\
\ZZ \ar[r] & \Omega^0 \ar[r, "\d"] &  \Omega^{1} \ar[ur, "-\d"']&
\end{tikzcd}
\]
encodes electromagnetic duality because it swaps the role of the field (or connection) and its dual.

What happens if we reintroduce the constants $\kappa$ and~$e$?
We turn to that question before articulating a version of abelian duality for all $p$-form gauge theories.
In particular, we identify when different choices of coupling constant yield equivalent stacks (or theories);
these are not dualities but merely rescalings of the field content.

\subsection{A digression on coupling constants}

There are four locations where it is natural, from our point of view, to scale the differentials: 
the two inclusions of integers and the two maps into $\Omega^{n-p-1}$.
Diagrammatically, 
\[
\begin{tikzcd}[row sep = tiny]
\ZZ \ar[r, "m"] & \Omega^0 \ar[r, "\d"] & \cdots \ar[r, "\d"]& \Omega^p \ar[dr, "\lambda \star \d"]& \\
&&&& \Omega^{n-p-1} \\
\ZZ \ar[r, "e"] & \Omega^0 \ar[r, "\d"] & \cdots \ar[r, "\d"]& \Omega^{n-p-2} \ar[ur, "-\kappa \d"']&
\end{tikzcd}
\]
where these constants are $m,e,\lambda, \kappa \in \RR^\times$.
Note that we do {\em not} scale the de Rham differentials $\d$ that lie between $\Omega^0$ and $\Omega^p$ on the top row and between $\Omega^0$ and $\Omega^{n-p-2}$ on the bottom row. 

We denote this complex as
\[
T\langle m,e,\lambda, \kappa\rangle
\]
and we wish to describe simple isomorphisms between these complexes so as to pin down how many meaningful coupling constants there are.
(We fix $n$ and $p$ for this section, so do not include in the notation here.)

For brevity, let $\Omega^{[0,p]}$ denote
\[
\Omega^0 \xto{\d} \cdots \xto{\d} \Omega^p
\]
for the noninteger part of the top row,
and similarly $\Omega^{[0,n-p-2]}$ for the noninteger part of the bottom row.

We will allow ourselves to scale (uniformly) the component $\Omega^{[0,p]}$ of the top row,
to scale the component $\Omega^{[0,n-p-2]}$ of the bottom row,
and to scale the rightmost factor~$\Omega^{n-p-1}$.
Thus we have three nonzero parameters for constructing maps between theories.

We then have the following isomorphisms:
\begin{itemize}

\item $T\langle m,e,\lambda, \kappa \rangle \xto{\cong} T \langle 1,e,m\lambda, \kappa \rangle$: 
Apply the identity to the bottom row and $\Omega^{n-p-1}$ 
but apply $(1/m) \id$ to $\Omega^{[0,p]}$ and the identity to the top factor of~$\ZZ$. 
This is a cochain isomorphism.

\item $T \langle m,e,\lambda, \kappa \rangle \xto{\cong} T \langle m,1,\lambda, e\kappa \rangle$: 
Apply the identity to the top row and $\Omega^{n-p-1}$ 
but apply $(1/e) \id$ to $\Omega^{[0,n-p-2]}$ and the identity to the bottom factor of~$\ZZ$.
This is a cochain isomorphism.

\item $T \langle m,e,\lambda, \kappa \rangle \xto{\cong} T \langle m,e,1, \kappa \lambda \rangle$: 
Apply the identity to the top and bottom rows 
but apply $(1/\lambda)\id$ to $\Omega^{n-p-1}$. 
This is a cochain isomorphism.

\end{itemize}
And by composing these types of isomorphisms, we can produce an isomorphism
\[
T \langle m,e,\lambda, \kappa \rangle \xto{\cong} T\langle 1,1,1, \tfrac{e\kappa}{m\lambda} \rangle
\]
and so we see that, up to isomorphism, there is precisely {\em one} coupling constant. 

In particular, we see
\begin{equation}
\label{eqn: iso of coupling constants}
\Mxw_{p,n}^\circ\langle\kappa,e\rangle = T \langle 1,e,1,\kappa \rangle \cong T \langle 1,1,1,e \kappa \rangle = \Mxw_{p,n}^\circ \langle e\kappa,1 \rangle
\end{equation}
which will play a role in abelian duality.

\begin{rmk}
Here is another way of describing the situation that some readers may find convivial.
We have described a family of theories parametrized by $(\RR^\times)^4$,
but it is better to consider the groupoid whose objects form the space $(\RR^\times)^4$ and whose morphisms are generated by the scaling isomorphisms (so there are $(\RR^\times)^3$ morphisms out of each object).
We have just shown that this groupoid contains a skeletal subcategory parametrized by the discrete groupoid with objects~$\RR^\times$.~\hfill$\Diamond$
\end{rmk}

\subsection{A statement of duality}

Here is the lovely consequence of our technological build-up, 
and it justifies thinking about this non-obvious variant $\Mxw_{p,n}^\circ\langle\kappa,e\rangle$ of abelian $p$-form gauge theory.

\begin{thm}
There is an isomorphism 
\[
\sMxw_{p,n}^\circ\langle\kappa,e\rangle \xto{\cong} \sMxw_{n-p-2,n}^\circ\langle 1/\kappa,1/e\rangle
\]
as sheaves on $\Riem_n^{\ort}$ (or $\Lrtz_n^{\ort}$) valued in derived stacks
for all natural numbers $n$ and $0\leq p \leq n-2$ and for all coupling constants $e, \kappa \in \RR^\times$.
\end{thm}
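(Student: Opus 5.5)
Since the derived stacks $\sMxw^\circ_{p,n}\langle\kappa,e\rangle$ are presented via Dold--Kan by the sheaves $\Mxw^\circ_{p,n}\langle\kappa,e\rangle$ valued in $\Chz$, it suffices to produce an isomorphism of $\Chz$-valued presheaves on $\Riem^\ort_n$ (levelwise isomorphisms of complexes, natural in orientation-preserving isometric embeddings and in the test objects). Such an isomorphism induces an equivalence of $\cDz$-valued sheaves after sheafification. Truncation and Dold--Kan then give an equivalence of sheaves valued in derived abelian group stacks, by Remarks~\ref{rmk: truncation preserves lims} and~\ref{rmk: truncation on D}. So the plan is to write down an explicit cochain isomorphism, as in the three-dimensional example of the introduction.

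First, I will use the rescaling isomorphism \eqref{eqn: iso of coupling constants} on both sides to reduce to the case $e=1$. This gives $\Mxw^\circ_{p,n}\langle\kappa,e\rangle \cong T\langle 1,1,1,e\kappa\rangle$ and $\Mxw^\circ_{n-p-2,n}\langle 1/\kappa,1/e\rangle \cong T'\langle 1,1,1,1/(e\kappa)\rangle$, where $T'$ denotes the complexes with the roles of $p$ and $n-p-2$ exchanged. It then remains to build an isomorphism
\[
T\langle 1,1,1,\mu\rangle \xto{\cong} T'\langle 1,1,1,1/\mu\rangle, \qquad \mu = e\kappa .
\]
Note that the bottom row of $T$, namely $\ZZ\to\Omega^0\to\cdots\to\Omega^{n-p-2}$ with $\Omega^{n-p-2}$ in degree $0$, coincides as a complex (including degrees) with the top row of $T'$. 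Likewise the top row of $T$ coincides with the bottom row of $T'$, and both have $\Omega^{n-p-1}$ in degree $1$.

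Next, I define the map as follows. The top row of $T$ goes to the bottom row of $T'$ by the identity (times a scalar $a$ on the forms and $1$ on $\ZZ$). The bottom row of $T$ goes to the top row of $T'$ by the identity (scalar $b$ on forms, $1$ on $\ZZ$). On $\Omega^{n-p-1}$ I use $c\star$. Commutativity imposes two conditions. For the $A$-component, $c\star(\star\d A) = -(1/\mu)\,\d(aA)$. For the $C$-component, $c\star(-\mu\,\d C) = \star\d(bC)$. Using $\star\star = \pm 1$ on $(p+1)$-forms (sign $s=(-1)^{(p+1)(n-p-1)}$ in Riemannian signature, with an extra sign in Lorentzian signature), the first condition reads $cs = -a/\mu$. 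The second reads $-c\mu = b$ after checking that $\star\d$ and $\d$ on the appropriate degrees match. Choosing $c=1$ forces $b=-\mu$ and $a=-s\mu$. However, a scalar on the forms must be compatible with the inclusion $\ZZ\to\Omega^0$ fixed at $1$. So instead I will absorb $a,b$ by allowing the scaled inclusions $T\langle m,e,\lambda,\kappa\rangle$ and then reapplying the digression's isomorphisms. Alternatively, I can keep $e$ general until this step, so that the scalings $1/e$ on the bottom forms match $\ZZ\xto{e}$ to $\ZZ\xto{1}$, exactly as in the introduction.

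The main obstacle is this bookkeeping. The signs from $\star\star$ and the consistency of the integer inclusions must be arranged so that the coupling constants come out as exactly $(1/\kappa,1/e)$ rather than some rescaling. I would first do the $n=3$, $p=1$ check from the introduction, then track general signs, using sign changes $\ZZ\xto{-1}$ (isomorphic via $-1$ on $\ZZ$) to absorb stray minus signs. Finally, naturality is automatic: $\d$, $\star$, scalars and identities commute with pullback along orientation-preserving isometric embeddings, and with the test-object parameter, since $\star$ acts fiberwise on $\pi^*\Lambda^\bullet T^*U$.
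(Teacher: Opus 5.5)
Your proposal is correct and is essentially the paper's argument: swap the two rows, apply the Hodge star to the rightmost term, and normalize the coupling constants with the rescaling isomorphisms of the digression. Your fallback of passing through $T\langle m,e,\lambda,\kappa\rangle$ and using $-1$ on $\ZZ$ for signs is exactly this route.

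Two small remarks. The false start with $c=1$ can be avoided: taking $c=-1/\mu$, $b=1$, $a=s$ (with $s$ also used on the corresponding copy of $\ZZ$) satisfies all the constraints at once. Also, your tracking of the sign of $\star\star$, which the paper's proof silently takes to be $+1$, is a small improvement over the paper's argument.
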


This map realizes ``electromagnetic duality'' in a very direct way, at the classical level.
After the proof, we describe the concrete examples of 4d electromagnetic duality,
T-duality for the compact boson,
and the 3d duality between electromagnetism and a compact boson.

\begin{rmk}
The awkwardness of the indexing by $p$ was discussed in Remark~\ref{rmk: indexing for duality}.
If one indexes by the degree of the curvature (or field strength), 
the duality is easier to remember:
for $p$-form gauge theory, the field strength $F$ has degree $p+1$ and its dual $\star F$ has degree $n-(p+1)$.
That is, duality swaps (co)dimension.~\hfill$\Diamond$
\end{rmk}

\begin{proof}
This map is the composite of two maps we have discussed already.
We start with the complex $\Mxw_{p,n}^\circ\langle\kappa,e\rangle$.
The first step is to swap the top and bottom rows and apply the Hodge dual to $\Omega^{p+1}$ to get $\Omega^{n-p-1}$.
The new complex is
\[
\begin{tikzcd}[row sep = tiny]
\ZZ \ar[r, "e"] & \Omega^0 \ar[r, "\d"] & \cdots \ar[r, "\d"]& \Omega^{n-p-2} \ar[dr, "-\kappa \d"]&\\
&&&& \Omega^{n-p-1} \\
\ZZ \ar[r, "1"] & \Omega^0 \ar[r, "\d"] & \cdots \ar[r, "\d"]& \Omega^p \ar[ur, "\star \d"']& 
\end{tikzcd}
\]
which would be $T\langle e,1,-\kappa,-1\rangle$ in the notation from the preceding section.
Now we know 
\[
T \langle e,1,-\kappa,-1\rangle \cong T\langle 1,1,1,1/\kappa e\rangle
\]
by~\eqref{eqn: iso of coupling constants} and, moreover, 
\[
T \langle 1,1,1,1/\kappa e\rangle \cong T\langle 1,1/e,1,1/\kappa\rangle
\]
That last theory is $\Mxw_{n-p-2,n}^\circ\langle 1/\kappa,1/e\rangle$ as claimed.
\end{proof}

\begin{rmk}
As a heuristic comment, note that in the limit where $e \to 0$, with $\kappa = 1$,
the lattice of electric charges $\ZZ e$ ``fills up'' the whole real line $\RR$
and hence (roughly) converges to the theory $\widetilde{\Mxw}_{p,n}\langle \kappa\rangle$. 
Under duality, the dual theory's (action functional) coupling constant is growing as $1/e \to \infty$. 

One can imagine other versions of such ``limiting procedures'' at an intuitive level. For instance, taking $e \to 0$ and $m \to \infty$, we get a family of theories for which the coupling constant tends to zero. If we imagine that the lattice of electric charges ``fills up'' a copy of $\RR$ and the lattice of magnetic charges ``disappears at infinity,'' one schematically imagines the limit as a complex quasi-isomorphic just to~$\Mxw_{p,n}^\pert\langle\kappa\rangle$ as in~\eqref{cplx: pert mxw}. Of course, this limiting procedure does not make rigorous sense, but does illustrate how our constructions dovetail with ideas about the validity of perturbation theory in the weak-coupling limit.~\hfill$\Diamond$
\end{rmk}

\begin{eg}[Electromagnetic duality in four dimensions]
The 1-form gauge theory $\Mxw_{1,4}^\circ\langle\kappa,e\rangle$ is
\[
\begin{tikzcd}[row sep = tiny]
\ZZ \ar[r] & \Omega^0 \ar[r, "\d"] & \Omega^1 \ar[dr, "\star \d"]& \\
&&& \Omega^{2} \\
\ZZ \ar[r,"e"] & \Omega^0 \ar[r, "\d"] &  \Omega^{1} \ar[ur, "-\kappa \d"']&
\end{tikzcd}
\]
while its dual is the 1-form gauge theory $\Mxw_{1,4}^\circ \langle 1/\kappa,1/e\rangle$ where
\[
\begin{tikzcd}[row sep = tiny]
\ZZ \ar[r] & \Omega^0 \ar[r, "\d"] & \Omega^1 \ar[dr, "\star \d"]& \\
&&& \Omega^{2} \\
\ZZ \ar[r,"1/e"] & \Omega^0 \ar[r, "\d"] &  \Omega^{1} \ar[ur, "-(1/\kappa) \d"']&
\end{tikzcd}
\]
generalizing our discussion from above.
In particular, when $\kappa = 1 = e$, the theory is self-dual.~\hfill$\Diamond$
\end{eg}

\begin{eg}[T-duality\footnote{The classic reference is \cite{Buscher}, but note also \cite{Sath}. Subsequent important sources includes \cite{RocVer,BouEvsMat, BunSch}. The example here is the simplest case and avoids the rich phenomena that the subsequent work explores, but one should be able to synthesize these views.} in two dimensions]
This duality looks most like the conventional statement if we take $\kappa = 1$ and $e = R$. 
The compact boson theory $\Mxw_{0,2}^\circ \langle 1,R \rangle$ is
\[
\begin{tikzcd}[row sep = tiny]
\ZZ \ar[r] & \Omega^0  \ar[dr, "\star \d"]& \\
&& \Omega^{1} \\
\ZZ \ar[r,"R"] & \Omega^0 \ar[ur, "-\d"']&
\end{tikzcd}
\]
while its dual is the compact boson theory $\Mxw_{0,2}^\circ \langle 1,1/R \rangle$ where
\[
\begin{tikzcd}[row sep = tiny]
\ZZ \ar[r] & \Omega^0  \ar[dr, "\star \d"]& \\
&& \Omega^{1}. \\
\ZZ \ar[r,"1/R"] & \Omega^0 \ar[ur, "-\d"']&
\end{tikzcd}
\]
When $R =1$, the theory is self-dual.~\hfill$\Diamond$
\end{eg}

\begin{eg}[Duality in three dimensions: gauge and $\sigma$-model]
The 1-form gauge theory $\Mxw_{1,3}^\circ\langle\kappa,e\rangle$ is
\[
\begin{tikzcd}[row sep = tiny]
\ZZ \ar[r, "1"] & \Omega^0 \ar[r, "\d"] & \Omega^1 \ar[dr, "\star \d"]& \\
&&& \Omega^{1} \\
& \ZZ \ar[r, "e"] & \Omega^0 \ar[ur, "-\kappa \d"']&
\end{tikzcd}
\]
while its dual is the compact boson theory $\Mxw_{0,3}^\circ \langle 1/\kappa,1/e\rangle$, i.e.
\[
\begin{tikzcd}[row sep = tiny]
&\ZZ \ar[r, "1"] & \Omega^0  \ar[dr, "\star \d"]& \\
&&& \Omega^{2} \\
\ZZ \ar[r, "1/e"] & \Omega^0 \ar[r, "\d"] & \Omega^1 \ar[ur, "-(1/\kappa) \d"']&
\end{tikzcd}.
\]
This is our first example where no {\em self}\/-duality is possible.~\hfill$\Diamond$
\end{eg}

\subsection{On lattices related to charges}
\label{sec: on lattices}

We revisit the decomposition of gauge theory into a topological and a dynamical part,
as in Section~\ref{sec: conceptual SES}.

\begin{dfn}
Let $\widetilde{\Mxw}{}^{\pert}_{p,n}$ denote the sheaf
\[
\begin{tikzcd}[row sep = tiny]
\Omega^0 \ar[r, "\d"] & \cdots \ar[r, "\d"]& \Omega^p \ar[dr, "\star\d"]& \\
&&& \Omega^{n-p-1} \\
\Omega^0 \ar[r, "\d"] & \cdots \ar[r, "\d"]& \Omega^{n-p-2} \ar[ur, "-\kappa \d"]&
\end{tikzcd},
\]
where $\Omega^p$ and $\Omega^{n-p-2}$ are concentrated in degree zero.
\end{dfn}

This dynamical theory looks like $\Mxw^\circ_{p,n}\langle\kappa,e\rangle$ with the $\ZZ$ factors sheared off the left side.
On a contractible spacetime, it has the same zeroth cohomology as the $p$-form gauge theories we've seen,  
so locally it seems the ``same'' if one only examines the ``physical'' zeroth cohomology.
But on a contractible spacetime, 
its cohomology in degrees $-p$ and~$-n+p+2$ is $\RR$ by the Poincar\'e lemma,
and hence differs from the cohomology of~$\widetilde{\Mxw}_{p,n}$ or~$\Mxw^\circ_{p,n}\langle\kappa,e\rangle$.

This discrepancy can be captured very cleanly using sheaf-theoretic language.
There is a short exact sequence of sheaves
\begin{equation}
\label{SES for circ mxw}
0 \to \widetilde{\Mxw}{}^{\pert}_{p,n} \to \Mxw^\circ_{p,n}\langle\kappa,e\rangle \to \ZZ[p+1] \oplus \ZZ[n-p-1] \to 0
\end{equation}
that can be interpreted as extending a purely topological field theory
\[
\ZZ[p+1] \oplus \ZZ[n-p-1],
\] 
encoding just the electric and magnetic charges,
by a dynamical theory~$\widetilde{\Mxw}{}^{\pert}_{p,n}$.

At the level of derived stacks, it says we have a map
\[
\sMxw^\circ_{p,n}\langle\kappa,e\rangle(M) \to \Map(M,B^{p+1} \ZZ) \times \Map(M,B^{n-p-1} \ZZ)
\]
or, taking connected components, 
\[
\pi_0(\sMxw^\circ_{p,n}\langle\kappa,e\rangle(M)) \to H^{p+1}(M,\ZZ) \times H^{n-p-1}(M,\ZZ).
\]
In other words, each solution in $\sMxw^\circ_{p,n}\langle\kappa,e\rangle(M)$ has underlying cohomology classes in $H^{p+1}(M,\ZZ)$ and in $H^{n-p-1}(M,\ZZ)$.
For the case of 4-dimensional electromagnetism, these two classes live in $H^2(M,\ZZ)$ and are precisely the first Chern classes of line bundles, and we have already interpreted these classes in terms of electric and magnetic charges.

After fixing a point in the base (essentially, an element of the charge lattice), 
the fiber of the map is described by the perturbative theory~$\widetilde{\Mxw}{}^{\pert}_{p,n}$,
which captures the dynamics.

\subsection{Higher form symmetries in this context}
\label{sec: higher form sym via derived stacks}

We make here an aside on how to see typical magnetic and electric higher form symmetries in this framework,
and we explain how it provides more sophisticated refinements.

Consider the forgetful map
\[
\sMxw^\circ_{p,n} \to \sBun^\nabla_p \times \sBun^\nabla_{n-p-2}
\]
that remembers the underlying bundles with connection. 
Every bundle with connection has an underlying bundle so we have a map
\begin{equation}
\label{forget to charges}
\sMxw^\circ_{p,n}(M) \to H^{p+1}(M,\ZZ) 
\end{equation}
that encodes the ``magnetic charge,'' as just discussed.
Such a map determines an observable, typically known as a higher form symmetry; 
we now explain how to get the usual formula for the magnetic $n-p-2$-form symmetry.

Notice that pulling back along~\eqref{forget to charges} determines a map
\[
U^{\rm mag}: \Hom_{\Ab}(H^{p+1}(M,\ZZ), U(1)) \to \cO(\sMxw^\circ_{p,n}(M))^\times
\]
that takes a character on the abelian group $H^{p+1}(M,\ZZ)$ and returns a nowhere-vanishing function on the derived stack $\sMxw^\circ_{p,n}(M)$.
In more concrete terms there is a character
\[
\begin{tikzcd}[row sep=tiny]
H^{p+1}(M,\ZZ) \ar[r] & U(1)\\
c \ar[r, mapsto] & e^{i\alpha \int_N c} = g^{\langle [N], c\rangle}
\end{tikzcd}
\]
for each an element $g = e^{i\alpha} \in U(1)$ and each choice of a closed $n-p-1$-dimensional submanifold $N \subset M$ (which determines a homology class $[N] \in H_{n-p-1}(M,\ZZ)$). 
Here $c$ denotes an integral cohomology class and $\langle [N], c\rangle$ denotes the co/homology pairing.
This character becomes the observable
\[
U^{\rm mag}_g(N^{(n-p-1)}): (P \to M, A) \mapsto e^{i\alpha \int_N F_A} = g^{\langle N, c(P)\rangle}
\]
where $P \to M$ is a $p$-bundle and $A$ is its connection.
Here $c(P)$ is the characteristic class of $P$ and hence can be identified with the curvature $F_A$ by Chern-Weil theory.
Note that this formula is precisely formula~(4.2) from~\cite{GKSW}, if we take~$p=1$.

There is a parallel construction for the electric $p$-form symmetry, recovering formula~(4.1) of~\cite{GKSW}.

Finally, we observe that this construction admits a substantial improvement,
by replacing $H^{p+1}(M,\ZZ)$ with the corresponding differential cohomology group or, even better, the derived stack~$\sBun^\nabla_p$.
The forgetful map
\[
f: \sMxw^\circ_{p,n} \to \sBun^\nabla_p
\]
induces, by pullback, a map
\[
f^*: \cO(\sBun^\nabla_p) \to \cO(\sMxw^\circ_{p,n}(M))
\]
between algebras of functions.\footnote{A character is a special kind of function, which is nowhere vanishing, and hence the pullback $f^*$ maps $\Hom(\sBun^\nabla_p, U(1))$ to~$\cO(\sMxw^\circ_{p,n}(M))^\times$. Such characters are particularly useful when quantizing these abelian gauge theories.}
The observables produced by pullback can be viewed as  nontrivial refinements of the usual higher form symmetry operators.
In particular, these include functions that depend on the differential cohomology class of the bundle with connection,
not just the underlying bundle.

\subsection{Moduli-theoretic meaning of charge discretization}

In mathematical terms, the quotient sheaf 
\[
\widetilde{\Mxw}_{p,n}\langle\kappa\rangle/\Mxw_{p,n}^\circ\langle\kappa,e\rangle
\]
is the dg abelian group $(\RR/\ZZ e)[n-p-1]$ corresponding to a classifying space $B^{n-p-1}(\RR/\ZZ e)$ where $\ZZ e \subset \RR$.
Note that this quotient group $\RR/\ZZ e$ has the discrete topology
(it is not isomorphic to $U(1)$ with its manifold structure),
so this dg abelian group models $B^{n-p-1}(\RR/\ZZ e)$.
In other words, it presents the classifying stack for a higher abelian group $B^{n-p-2}(\RR/\ZZ e)$ and hence parametrizes bundles for this higher group.
We have a map of stacks
\[
q: \widetilde{\sMxw}_{p,n}\langle\kappa\rangle \to \widetilde{\sMxw}_{p,n}\langle\kappa\rangle/\sMxw_{p,n}^\circ\langle\kappa,e\rangle \simeq \sBun_{B^{n-p-2}(\RR/\ZZ e)}
\]
so that a solution to $p$-form gauge theory has an underlying $B^{n-p-2}(\RR/\ZZ e)$-bundle, as kind of ``electric charge.''
Moreover, we can view $\sMxw_{p,n}^\circ\langle\kappa,e\rangle$ as a pullback
\begin{equation}
\label{fiber bundle}
\begin{tikzcd}
\sMxw_{p,n}^\circ\langle\kappa,e\rangle \ar[r] \ar[d] & \sMxw_{p,n}\langle\kappa\rangle \ar[d, "q"] \\
\ast \ar[r] & \sBun_{B^{n-p-2}(\RR/\ZZ e)}
\end{tikzcd},
\end{equation}
so that $\sMxw_{p,n}^\circ\langle\kappa,e\rangle$ is the fiber of the theory $\sMxw_{p,n}\langle\kappa\rangle$ over $\sBun_{B^{n-p-2}(\RR/\ZZ e)}$.
Such a pullback amounts to giving the data of a trivialization of the charge,\footnote{As an analogy, consider how lifting an $SO(3)$-bundle to an $SU(2)$-bundle amounts to trivializing its Stiefel-Whitney class~$w_2$, a similar construction appearing in the center symmetry of nonabelian gauge theory.}
and the choice is not unique:
there is a space of trivializations described by the derived abelian group stack~$\sBun_{B^{n-p-2}(\RR/\ZZ e)}$. 

\begin{rmk}
Alternatively, swapping the corners of this diagram, we have a pullback square
\[
\begin{tikzcd}
\sMxw_{p,n}^\circ\langle\kappa,e\rangle \ar[r] \ar[d] & \ast \ar[d]  \\
\sMxw_{p,n}\langle\kappa\rangle \ar[r, "q"] & \sBun_{B^{n-p-2}(\RR/\ZZ e)}
\end{tikzcd},
\]
and so view $\sMxw_{p,n}^\circ\langle\kappa,e\rangle$ as the total space of a bundle with fiber~$\sBun_{B^{n-p-3}(\RR/\ZZ e)}$.~\hfill$\Diamond$
\end{rmk}

Let's unpack our story on a spacetime manifold $M$.
A point in $\sMxw_{p,n}^\circ\langle\kappa,e\rangle(M)$\footnote{That is, a map from the derived manifold $\ast$ (with $\RR$ as ring of functions) to~$\sMxw_{p,n}^\circ\langle\kappa,e\rangle(M)$.} amounts to 
an element $\alpha$ of $\sMxw_{p,n}\langle\kappa\rangle(M)$ along with a trivialization of its charge
\[
q(\alpha) \in \Map(M,B^{n-p-1} (\RR/\ZZ e)).
\]
That mapping space has connected components
\[
\pi_0 \Map(M,B^{n-p-1} (\RR/\ZZ e))
= H^{n-p-1}(M,\RR/\ZZ e)
\]
which sits in a short exact sequence
\[
0 \to H^{n-p-1}(M,\RR)/eH^{n-p-1}_\free(M,\ZZ) \to H^{n-p-1}(M,\RR/\ZZ e) \to H^{n-p}_\tors(M,\ZZ) \to 0
\]
arising from the Bockstein sequence for $0 \to e\ZZ \to \RR \to \RR/e\ZZ \to 0$.
So the charge $q(\alpha)$ determines a cohomology class in $H^{n-p-1}(M,\RR/\ZZ e)$, which combines continuous data (from real cohomology) with a discrete torsion invariant in $H^{n-p}_\tors(M,\ZZ)$.
The charge $q(\alpha)$ itself lives in the mapping space,
which we view as an enhancement of this cohomology group.

The space of such trivializations is parametrized by $\Map(M,B^{n-p-2}(\RR/\ZZ e))$,
whose connected components sit in the analogous short exact sequence
\[
0 \to H^{n-p-2}(M,\RR)/eH^{n-p-2}_\free(M,\ZZ) \to H^{n-p-2}(M,\RR/\ZZ e) \to H^{n-p-1}_\tors(M,\ZZ) \to 0.
\]
In this sense, the field theory $\sMxw_{p,n}^\circ\langle\kappa,e\rangle(M)$ can be viewed as consisting of the charge-zero fields of $\widetilde{\sMxw}_{p,n}\langle\kappa\rangle(M)$ and background fields given by $(n-p-2)$-bundles for group~$\RR/\ZZ e$.
This description offers another mathematically precise way of saying it is gauge theory equipped with a ``higher form symmetry.''

\subsection{On correspondences}
\label{sec: correspondence}

By using the duality isomorphism and the maps of forgetting charges, 
we get a correspondence
\[
\begin{tikzcd}
& \sMxw_{p,n}^\circ\langle\kappa,e\rangle \ar[dl, swap, "\pi_{L}"] \ar[dr, "\pi_{R}"] & \\
\sMxw_{p,n}\langle\kappa \rangle & &\sMxw_{n-p-2,n}\langle 1/\kappa \rangle
\end{tikzcd}
\]
of derived stacks,
where each leg is a version of~\eqref{project to usual}.
Moreover, in light of~\eqref{fiber bundle}, 
this correspondence realizes $\sMxw_{p,n}^\circ\langle\kappa,e\rangle$ as the total space of two distinct ``fiber bundles.''
Locally in spacetimes, the left leg $\pi_L$ realizes $\sMxw_{p,n}^\circ\langle\kappa,e\rangle$ as a bundle for $B^{n-p-2} (\RR/e\ZZ)$,
while the right leg $\pi_R$ realizes it as a bundle for~$B^p (\RR/\ZZ)$.

Note that this statement about bundles is formulated in terms of sheaves on spacetimes valued in derived stacks.
If we evaluate on a spacetime manifold~$M$,
the fiber is essentially $\Map(M,B^{n-p-2}(\RR/e\ZZ))$, 
a moduli of higher $\RR/e\ZZ$-bundles on~$M$,
itself a derived abelian group stack.

This correspondence yields integral transforms between the two theories, 
relating path integrals (at a physical level) and categories of sheaves (at a mathematical level, e.g., Fourier-Mukai transforms).

For instance, it seems to match with the argument from \cite{WittenIAS} for these dualities: 
\begin{itemize}
\item an observable $f$ for $p$-form Maxwell theory can be pulled back to an observable $\pi_{L}^* f$ for the charge-discretized theory, and
\item by integrating over the fiber, one obtains an observable ${\pi_R}_! \pi_{L}^* f$ for the $n-p-2$-form Maxwell theory.
\end{itemize}
In other words, we only need to treat the fiber direction as ``quantum'' (i.e., it is the direction along which we path-integrate).

It is interesting to wonder how categorical analogs --- pushing and pulling sheaves along this correspondence,
following work such as \cite{DonPan, AriBloPan, BunSchSpiTho, GamHilMaz} (among many others) --- 
might illuminate features of duality.

\begin{rmk}
In \cite{Elliott}, a story is articulated about transferring observables between generalized Maxwell theories, 
but it involves a correspondence of {\it observables} rather than field theories.
It uses, as an intermediary, the theory encoded by the cochain complex
\[
\Omega^p \xto{\d \star \d} \Omega^{n-p}
\]
that is ignorant of the (higher) gauge symmetries and their antifields.~\hfill$\Diamond$
\end{rmk}

\subsection{Duality as a $p$-form theory coupled to a topological $BF$ theory}

It is natural to wonder if there is a form of abelian duality {\em without} imposing charge discretization.
After all, the correspondence shows that $p$-form and $n-p-2$-form Maxwell theories differ only by discrete gauge fields,
so one might hope to turn these extra fields on or off.

Such a relationship should hold, albeit with subtleties: 
as shown in \cite[\S 47.2]{MooSax},
a careful computation of the partition functions shows that they agree up to an explicit factor depending on Ray-Singer torsion and the torsion in the integral cohomology.
One might hope that this factor is the partition function of a topological field theory, of BF-type.

In the formulation of our paper, we ask: can one describe $\sMxw_{p,n}\langle\kappa\rangle$ in terms of $\sMxw_{n-p-2,n}\langle 1/\kappa\rangle$, 
but perhaps coupled to another theory?

We offer here a formulation in terms of the model $\widetilde{\Mxw}_{p,n}$ as follows.
Recall that $\widetilde{\Mxw}_{p,n}$ is
\[
\begin{tikzcd}[row sep = tiny]
\ZZ \ar[r] & \Omega^0 \ar[r, "\d"] & \cdots \ar[r, "\d"]& \Omega^p \ar[dr, "\star \d"]& \\
&&&& \Omega^{n-p-1} \\
\RR \ar[r] & \Omega^0 \ar[r, "\d"] & \cdots \ar[r, "\d"]& \Omega^{n-p-2} \ar[ur, "-\kappa \d"]&
\end{tikzcd}
\]
and it is very similar to $\widetilde{\Mxw}_{n-p-2,n}$ except that the placement of $\ZZ$ and $\RR$ must be swapped.
It is possible to implement that swap by adding some extra terms to~$\widetilde{\Mxw}_{p,n}$:
consider the complex
\begin{equation}
\label{eqn: ugly duality}
\begin{tikzcd}[row sep = tiny]
\ZZ \ar[r, hook] \ar[dr, "-1"'] & \RR \ar[dr, hook] &&&&\\
&\ZZ \ar[r] & \Omega^0 \ar[r, "\d"] & \cdots \ar[r, "\d"]& \Omega^p \ar[dr, "\star \d"]& \\
&&&&& \Omega^{n-p-1} \\
&\RR \ar[r] \ar[rd] & \Omega^0 \ar[r, "\d"] & \cdots \ar[r, "\d"]& \Omega^{n-p-2} \ar[ur, "-\kappa \d"]&\\
&\ZZ \ar[r] \ar[ur, crossing over, very near end, "-1"'] & \RR &&&
\end{tikzcd}
\end{equation}
where in the bottom left, we send $(x,n) \in \RR \oplus \ZZ$ to $(x-n,x+n) \in \Omega^0 \oplus \RR$.
Observe that this complex~\eqref{eqn: ugly duality} contains
\[
\begin{tikzcd}[row sep = tiny]
\RR \ar[r] & \Omega^0 \ar[r, "\d"] & \cdots \ar[r, "\d"]& \Omega^p \ar[dr, "\star \d"]& \\
&&&& \Omega^{n-p-1} \\
\ZZ \ar[r] & \Omega^0 \ar[r, "\d"] & \cdots \ar[r, "\d"]& \Omega^{n-p-2} \ar[ur, "-\kappa \d"]&
\end{tikzcd}
\]
as a quasi-isomorphic subcomplex.
This complex~\eqref{eqn: ugly duality} looks like $\widetilde{\Mxw}_{p,n}$ coupled to a locally constant sheaf,
which can be viewed as a version of a TFT.

There is a version that is even nicer and manifestly related to BF theory,
by replacing each copy of $\RR$ in the topmost and bottommost rows of~\eqref{eqn: ugly duality} by a de Rham complex (inspired by the Poincar\'e lemma).

\begin{dfn}
Let $\mathrm{MxwBF}_{p,n}\langle\kappa\rangle$ denote the sheaf on $\Riem^\ort_n$ (or $\Lrtz^\ort_n$) valued in $\cDz$ given locally by
\begin{equation}
\begin{tikzcd}[row sep = tiny]
\ZZ \ar[r, hook] \ar[dr, "-1"'] & \Omega^0 \ar[dr, hook] \ar[r, "\d"] & \cdots \ar[r, "\d"] & \Omega^p \ar[dr, "\pm 1"] \ar[r, "\d"] & \Omega^{p+1} \ar[r, "\d"] \ar[rdd, bend left, "\mp \star"'] & \cdots \ar[r, "\d"] & \Omega^n&\\
&\ZZ \ar[r] & \Omega^0 \ar[r, "\d"] & \cdots \ar[r, "\d"]& \Omega^p \ar[dr, "\star \d"]& &&\\
&&&&& \Omega^{n-p-1} &\\
&\RR \ar[r] \ar[dr]  & \Omega^0 \ar[r, "\d"] & \cdots  \ar[r, "\d"]& \Omega^{n-p-2} \ar[ur, "-\kappa \d"] & & &\\
&\ZZ \ar[r] \ar[ur, crossing over, very near end, "-1"'] & \Omega^0 \ar[r,"\d"] & \cdots \ar[r,"\d"] &\Omega^{n-p-2} \ar[r, "\d"] & \Omega^{n-p-1} \ar[r,"\d"]& \cdots \ar[r,"\d"]& \Omega^n
\end{tikzcd}
\end{equation}
where, row by row, $\Omega^{p+1}$, $\Omega^p$,  $\Omega^{n-p-2}$, and $\Omega^{n-p-2}$ are concentrated in degree zero.
\end{dfn}

This complex contains $\widetilde{\Mxw}_{p,n}$ as its middle two rows but has a topological BF theory as its outer rows.
More accurately, the outer rows form a $p+1$-form topological BF theory with {\em discrete} electric charge.
Thus the full complex $\mathrm{MxwBF}_{p,n}\langle\kappa\rangle$ is an interesting coupling of a Maxwell theory to a charge-discretized BF theory.

\begin{rmk}
It is straightforward to write an action functional for this theory, 
via the method described for constructing the action functional~\eqref{eqn: action for mxw},
and it amounts to including a higher-form topological BF theory along with quadratic couplings to the Maxwell theory.
But the action functional does not witness the discretization of the charges;
that has to be spelled out as a constraint,
which is visible in the sheaf description.~\hfill$\Diamond$
\end{rmk}

By construction, $\mathrm{MxwBF}_{p,n}\langle\kappa\rangle$ is quasi-isomorphic to $\widetilde{\Mxw}_{n-p-2,n}\langle 1/\kappa\rangle$.
Hence we have the following form of duality.

\begin{prp}
There is an isomorphism of moduli stacks
\[
\MM\mathrm{xwBF}_{p,n}\langle\kappa\rangle \simeq \sMxw_{n-p-2,n}\langle 1/\kappa\rangle
\]
where the left hand side is the moduli stack presented by~$\mathrm{MxwBF}_{p,n}\langle\kappa\rangle$.
\end{prp}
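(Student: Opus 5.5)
The plan is to produce an explicit zigzag of quasi-isomorphisms of $\Chz$-valued sheaves on $\Riem^\ort_n$ (or $\Lrtz^\ort_n$),
\[
\mathrm{MxwBF}_{p,n}\langle\kappa\rangle \xleftarrow{\ \simeq\ } S \xto{\ \simeq\ } \widetilde{\Mxw}_{n-p-2,n}\langle 1/\kappa\rangle \xleftarrow{\ \simeq\ } \Mxw_{n-p-2,n}\langle 1/\kappa\rangle,
\]
and then apply the Dold--Kan presentation functor $[-]$ objectwise in the test site~$\mathsf{T}$. Since quasi-isomorphisms of presheaves become equivalences after sheafification, and truncation preserves limits (Remark~\ref{rmk: truncation on D}), this gives an equivalence of sheaves valued in derived abelian group stacks. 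The last arrow is the quasi-isomorphism $\Mxw \xto{\simeq} \widetilde{\Mxw}$ established when $\widetilde{\Mxw}$ was introduced. Every map will be built from $\d$, $\star$ and scalars, so naturality along isometric orientation-preserving embeddings is automatic. It therefore suffices to check quasi-isomorphisms stalkwise, i.e.\ on coordinate balls, where the Poincar\'e lemma applies.

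For the first step, let $S$ be the subcomplex with the same middle two rows, where the top outer row is replaced by $\ZZ \hookrightarrow \RR$ and the bottom outer row by $\ZZ \hookrightarrow \RR$, using the augmentations $\RR \to \Omega^0$. This is exactly the intermediate complex~\eqref{eqn: ugly duality}. Locally, the quotient $\mathrm{MxwBF}/S$ consists of the two augmented de Rham complexes $\widehat{\Omega}^\bullet$ modulo $\RR$, which are acyclic by the Poincar\'e lemma, so the inclusion is a quasi-isomorphism. One must check that the maps from the outer $\Omega^p$ and $\Omega^{p+1}$ into the middle rows, labelled $\pm 1$ and $\mp\star$, restrict compatibly. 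I would do this by filtering by rows and comparing associated graded pieces, after fixing the signs so that $d^2=0$. Concretely, $\Omega^{p+1}\to\Omega^{n-p-1}$ must satisfy $(\mp\star)\circ\d = -(\star\d)\circ(\pm1)$.

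For the second step, I will exhibit~\eqref{eqn: ugly duality} as containing the swapped complex $\ZZ\!\leftrightarrow\!\RR$ as a quasi-isomorphic subcomplex, as the text asserts. The inclusion sends the new $\RR$ (top) diagonally into $\RR\oplus\ZZ$ and the new $\ZZ$ (bottom) into $\RR\oplus\ZZ$ via $n\mapsto(n,-n)$ or similar. The cokernel is then a complex of constant sheaves of the form $\ZZ\to\RR\to\RR/\ZZ$ type pieces; I will verify it is acyclic by a direct computation with the given matrix $(x,n)\mapsto(x-n,x+n)$. The swapped complex is $T$-type data, namely $\widetilde{\Mxw}_{n-p-2,n}$ with rows exchanged. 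Applying the row swap together with $\star$ (rescaled by $-1/\kappa$) on $\Omega^{n-p-1}$, exactly as in the proof of the duality theorem, identifies it isomorphically with $\widetilde{\Mxw}_{n-p-2,n}\langle1/\kappa\rangle$. This uses the coupling-constant rescalings from the digression on coupling constants, now with $\RR$ in place of one $\ZZ$, since $\RR$ is preserved by scaling.

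The main obstacle is the bookkeeping of degrees and signs in the two constant-sheaf corners. The $\ZZ$ and $\RR$ terms sit in degrees $-p-1$ and $-(n-p-1)$, which generally differ, and the diagonal maps $-1$ and the maps $\RR\oplus\ZZ\to\Omega^0\oplus\RR$ must be cochain maps with $d^2=0$. Acyclicity of the constant-sheaf quotient is where I would start: write it as a cone of $\ZZ\to\RR\oplus\RR$ and check directly that it is exact on stalks.
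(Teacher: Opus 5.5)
Your proposal is correct and is essentially the paper's argument: the paper just asserts that $\mathrm{MxwBF}_{p,n}\langle\kappa\rangle$ is quasi-isomorphic ``by construction'' to $\widetilde{\Mxw}_{n-p-2,n}\langle 1/\kappa\rangle$ via the intermediate complex~\eqref{eqn: ugly duality} and its swapped subcomplex, and you make that zigzag explicit. A few minor fixes are needed. The top $\RR$ must embed as $x\mapsto(x,0)$, not ``diagonally'', since there is no nonzero map $\RR\to\ZZ$. The middle arrow of your zigzag should point the other way, because the swapped complex includes into $S$. With $m\mapsto(m,-m)$ the bottom row becomes $\ZZ\xto{2}\Omega^0$, and both cokernels are two-term complexes with isomorphism differentials rather than $\RR/\ZZ$ pieces. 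All of this is harmless, since the $\RR$-row absorbs any rescaling.
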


It would be enlightening to compute the partition function of $\mathrm{MxwBF}_{p,n}\langle\kappa\rangle$ so as to relate to the discussion in~\cite{MooSax}.

\section{Compactification of abelian gauge theories}

We now turn to studying the compactifications (or pushforward) of $n$-dimensional $p$-form gauge theories to produce theories on lower dimensional manifolds.

The basic situation is to study a theory on a product manifold $X \times Y$, 
where $Y$ is closed and oriented.
We might then think of theory as living on~$X$.
Using the sheaf-theoretic language of this paper, 
we have a projection map
\[
\pi: X \times Y \to X
\]
and we want to understand the pushforward~$\pi_* \cF$ of the sheaf $\cF$ on $X \times Y$ that encodes the theory.
(Recall that pushforward for us means the derived pushforward.)

More precisely, we expect that the pushforward is equivalent to theories that naturally occur on $X$.
Here, we expect that a generalized Maxwell theory compactifies to a collection of generalized Maxwell theories,
albeit with abelian groups other than $U(1)$.
Before we state our main result,  
we quickly describe generalized Maxwell theories for higher-dimensional tori.

\subsection{Higher rank abelian gauge theories}

Above we have discussed $U(1)$-gauge theories,
but it is straightforward to generalize to any torus:
we simply take the $r$-fold direct sum of each component in the $U(1)$ case and generalize $\kappa$ from a real number to a $r \times r$ matrix~$K$.

\begin{dfn}
Given a matrix $K \in \GL(r,\RR)$,
let $\Mxw_{p,n}\langle \ZZ^r, K \rangle$ be the sheaf defined on $\Riem^{\ort}_n$ and valued in $\cDz$, 
whose value is
\[
\ZZ^r \to (\Omega^0)^{\oplus r} \xto{\d} \cdots (\Omega^p)^{\oplus r} \xto{K \d \star \d} (\Omega^{n-p})^{\oplus r} \xto{\d} \cdots \xto{\d} (\Omega^n)^{\oplus r}
\] 
on any contractible such manifold.
Its derived global sections $\Mxw_{p,n}\langle \ZZ^r, K \rangle(M)$
presents a derived stack 
\[
\sMxw_{p,n}\langle \ZZ^r, K \rangle(M)
\]
for each Riemannian manifold~$M$.
The sheaf is equipped with a natural local $(-1)$-symplectic structure. 
\end{dfn}

We refer to $\Mxw_{p,n}\langle \ZZ^r,K\rangle$ as {\em $p$-form gauge theory with group $U(1)^r$} and coupling constant $K \in \GL(r,\RR)$.
We have the charge-discretized version too.

\begin{dfn}
Given a matrix $E \in \GL(r,\RR)$,
{\em the charge-discretized $p$-form gauge theory with group $U(1)^r$}, denoted by $\Mxw^\circ_{p,n}\langle \ZZ^r, K,E \rangle$, is encoded by the sheaf defined on $\Riem^{\ort}_n$ and valued in $\cDz$, 
whose value is
\[
\begin{tikzcd}
\ZZ^r \ar[r] & (\Omega^0)^{\oplus r} \ar[r, "\d"] & \cdots \ar[r, "\d"]& (\Omega^p)^{\oplus r} \ar[dr, "\star \d"]& \\
&&&& (\Omega^{n-p-1})^{\oplus r} \\
\ZZ^r \ar[r, "E"] & (\Omega^0)^{\oplus r} \ar[r, "\d"] & \cdots \ar[r, "\d"]& (\Omega^{n-p-2})^{\oplus r} \ar[ur, "-K \d"']&
\end{tikzcd}
\]
on any contractible such manifold.
Its derived global sections on $M$ presents a derived stack $\sMxw^\circ_{p,n}\langle \ZZ^r, K,E \rangle(M)$
that is $-1$-presymplectic on a closed Riemannian manifold~$M$.
\end{dfn}

Just as earlier, there are isomorphisms
\[
\Mxw^\circ_{p,n}\langle \ZZ^r, K,E \rangle \cong \Mxw^\circ_{p,n}\langle \ZZ^r, KE,1 \rangle
\]
and dualities
\[
\Mxw^\circ_{p,n}\langle \ZZ^r, K,E \rangle \cong \Mxw^\circ_{n-p-2,n}\langle \ZZ^r, K^{-1}, E^{-1} \rangle
\]
by parallel arguments.
(In fact, the situation gets more interesting because we can ask about symmetries arising from $\GL(r,\ZZ)$ acting on the ``discrete charges.'')

We note that, similarly, we can introduce a higher rank variant of the Deligne complex of Definition~\ref{Deligne cplx}.

\begin{dfn}
Given a matrix $L \in \GL(r,\RR)$,
the {\em Deligne complex for torus $\RR^r/L(\ZZ^r)$}, 
denoted by $\Del_n\langle \ZZ^r, L \rangle$, 
is encoded by the sheaf defined on $\Riem^{\ort}_n$ and valued in $\cDz$ 
whose value is
\[
\ZZ^{\oplus r}[1] \xto{L} (\Omega^0)^{\oplus r} \xto{\d} \cdots \xto{\d} (\Omega^n)^{\oplus r}
\]
on any contractible such manifold.
\end{dfn}

\subsection{The main result and some special cases}

Recall that we can view charge-discretized Maxwell theory $\Mxw_{p,n}^\circ\langle\kappa,e\rangle$ as assembled from a topological theory and a perturbative theory,
following Section~\ref{sec: on lattices}.
In explicit terms, we have 
\[
0 \to \widetilde{\Mxw}^{\pert}_{p,n} \to \Mxw_{p,n}^\circ\langle\kappa,e\rangle \to \ZZ[p+1] \oplus \ZZ[n-p-1] \to 0
\]
already seen as the short exact sequence~\eqref{SES for circ mxw}.
To describe the pushforward $\pi_* \Mxw_{p,n}^\circ\langle\kappa,e\rangle$,
we can work out the pushforwards of the topological and perturbative sheaves and then assemble them.
(We discuss how this process works in Section~\ref{sec: comp by triangles}.)
Describing a pushforward like $\pi_* \ZZ$ is a classic problem of topology,
and it is quasi-isomorphic to the constant sheaf valued in~$H^\bullet(Y,\ZZ)$.
Describing the pushforward of the perturbative theory in convenient terms involves Hodge theory and the homological perturbation lemma; 
we undertake it in Section~\ref{sec: comp by HPL}.

When these ingredients are assembled, we obtain the result stated below.
It involves some interplay between the dimensions of the manifolds $X$ and $Y$ and the value of $p$,
so the statement is somewhat involved. 
Below we unpack this general statement for 4-dimensional Maxwell theories,
running over all possible cases.

\begin{thm}
Let $X$ be an oriented Riemannian manifold of dimension~$x$ and let $Y$ be a closed, oriented Riemannian manifold of dimension~$y$.
Let $\pi: X \times Y \to X$ be the projection.
Let $E_k: H^k_\free(Y,\ZZ) \to H^k(Y,\RR)$ denote the inclusion of integral cohomology into real cohomology.

The pushforward sheaf $\pi_* \Mxw^\circ_{p,n}\langle\kappa,e\rangle$ on $X$ is quasi-isomorphic to the direct sum of four sheaves
\begin{itemize}
\item $\bigoplus_{\ell = \max(p-x+1,0)}^{\min(y,p)} \Mxw^\circ_{p-\ell, x}\langle H^\ell(Y,\ZZ), \kappa\, {\rm Id}, e E_\ell\rangle$ 
\item $\bigoplus_{\ell = 0}^{p-x} \Del_x\langle H^\ell(Y,\ZZ), E_\ell \rangle[p-\ell]$ 
\item $\bigoplus_{k = 0}^{y-p-1} \Del_x\langle H^k(Y,\ZZ), E_k \rangle[n-p-2-k] $
\item $H^\bullet_\tors(Y,\ZZ)[p+1] \oplus H^\bullet_\tors(Y,\ZZ)[n-p-1]$
\end{itemize}
where $H^\bullet_\tors(Y,\ZZ)$ denotes the torsion subgroup of the integral cohomology of~$Y$.
\end{thm}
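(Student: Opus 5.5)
The plan is to compute $\pi_*$ of the two pieces of the short exact sequence~\eqref{SES for circ mxw} separately, and then reassemble them using the connecting map.

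\textbf{Topological piece.} By proper base change, $\pi_*\ZZ_{X\times Y}$ is locally constant with stalk $C^\bullet(Y,\ZZ)$. Over $\ZZ$ (a hereditary ring) any complex is quasi-isomorphic to the sum of its shifted cohomology groups. Hence
\[
\pi_*\ZZ \simeq \bigoplus_\ell H^\ell(Y,\ZZ)[-\ell],
\]
and each $H^\ell(Y,\ZZ)$ splits noncanonically as $H^\ell_\free(Y,\ZZ)\oplus H^\ell_\tors(Y,\ZZ)$.

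\textbf{Perturbative piece.} For $\widetilde{\Mxw}{}^\pert_{p,n}$, first write $\Omega^\bullet_{X\times Y}$ fiberwise, so that locally on $X$ it is the completed tensor product $\Omega^\bullet_X\widehat\otimes\Omega^\bullet(Y)$. Next use the Hodge decomposition on the closed manifold $Y$ to get the standard contraction data
\[
\Omega^\bullet(Y)\rightleftarrows \mathcal H^\bullet(Y)\cong H^\bullet(Y,\RR),
\]
given by inclusion of harmonic forms, harmonic projection, and homotopy $\d^*_Y G_Y$. The sheaf $\widetilde{\Mxw}{}^\pert_{p,n}$ is a two-row de Rham-type complex whose rows are glued along $\Omega^{n-p-1}$ by $\star\d$ and $-\kappa\d$. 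I regard the gluing maps as a perturbation of the direct sum of truncated de Rham complexes and apply the homological perturbation lemma.

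Because the product metric gives
\[
\star_{X\times Y}(\alpha\wedge h)=\pm\star_X\alpha\wedge\star_Y h
\]
for harmonic $h$, and $\star_Y$ identifies $\mathcal H^\ell$ with $\mathcal H^{y-\ell}$, the transferred differential on the harmonic part pairs the top-row component in $\Omega^{p-\ell}_X\otimes\mathcal H^\ell$ with the bottom-row component in $\Omega^{n-p-2-(y-\ell)}_X\otimes \mathcal H^{y-\ell}$. Since $n-p-2-y+\ell = x-(p-\ell)-2$, this is exactly an $x$-dimensional $(p-\ell)$-form Maxwell theory, with coupling $\kappa$ and with $H^{y-\ell}$ identified with $H^\ell$. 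The harmonic components are not annihilated by the perturbation, so I expect the higher perturbation terms to vanish. The truncation bounds then sort the summands. For $0\le p-\ell\le x-2$ one gets a genuine Maxwell pair. For $p-\ell\ge x-1$ the top row has no partner (the $\star\d$ target vanishes on $X$), leaving an untruncated de Rham complex $\Omega^\bullet_X\otimes H^\ell$ in shift $p-\ell$. Symmetrically, for $k\le y-p-1$ the bottom row contributes free de Rham complexes in shift $n-p-2-k$.

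\textbf{Reassembly.} Next I reassemble using the triangle whose connecting map goes from $\pi_*(\ZZ[p+1]\oplus\ZZ[n-p-1])$ to $\pi_*\widetilde{\Mxw}{}^\pert_{p,n}[1]$. On free parts this map is the inclusion $H^\ell_\free(Y,\ZZ)\to H^\ell(Y,\RR)$ into the constants of the harmonic de Rham summands. On the bottom row it is scaled by $e$, which produces $eE_\ell$. Taking cones summand by summand, each Maxwell pair acquires its two integral lattices. This gives $\Mxw^\circ_{p-\ell,x}\langle H^\ell(Y,\ZZ),\kappa\,{\rm Id},eE_\ell\rangle$; here I use the identification of $H^{y-\ell}$ with $H^\ell$ via Poincaré duality, under which the bottom lattice matches. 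Each unpaired de Rham summand acquires its lattice and becomes $\Del_x\langle H^\ell(Y,\ZZ),E_\ell\rangle$ with the stated shift.

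The torsion parts split off as the fourth summand. Every map from a torsion constant sheaf to a sheaf of $\RR$-vector spaces is zero, so the extension class restricted to the torsion summands vanishes, in the derived category as well, since $\mathrm{Ext}$ from a torsion group into an $\RR$-module is zero.

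I expect the main obstacle to be the middle step. It requires checking that the perturbation lemma really yields the diagonal pairing of $H^\ell$ with $H^{y-\ell}$, with the correct signs and coupling, so that it is compatible with the charge lattices $E_\ell$ and $E_{y-\ell}$ under Poincaré duality (the integral pairing is unimodular only modulo torsion). It also requires bookkeeping the index ranges $\max(p-x+1,0)\le\ell\le\min(y,p)$ at the boundaries where the Maxwell operator degenerates. I would first verify the whole picture in the case $y=1$, $Y=S^1$, and then argue in general.
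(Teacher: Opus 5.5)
Your outline follows the paper's route: the sequence~\eqref{SES for circ mxw}, formality of $\pi_*\ZZ$ over $\ZZ$, Hodge theory plus the perturbation lemma, then cones, with torsion splitting off. But the perturbative step has a genuine gap.

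The rows of $\widetilde{\Mxw}{}^{\pert}_{p,n}$ are \emph{truncated} de Rham complexes, and the Hodge contraction of $\Omega^\bullet(Y)$ onto harmonic forms does not restrict to them. In the top degree of a truncated column, a coexact form $\beta$ has $\pi\beta=0$ but is not exact, so the homotopy identity fails there. The correct small model (the paper's lemma on truncated de Rham complexes) carries extra top-degree summands $\Omega^{j}_X\widehat\otimes K^\ell(Y)$ with $K^\ell(Y)=\mathrm{im}(\d_Y^\star)^\ell$. Your argument never confronts them. (Also, higher transfer terms vanish on harmonic pieces because the homotopy $\eta$ kills harmonic forms, not for the reason you give.)

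The paper removes these summands by asserting that $T_K\xrightarrow{\star\d}B_K$ is acyclic, since $\star_Y\d_Y\colon K^\ell\to K^{y-\ell-1}$ is invertible. You cannot close your gap that way. The point is that $\star_Y$ sends $K^\ell$ to \emph{exact} forms, so the $\star\d_X$ component is killed by $\pi$. It then returns, through the correction $\pi\alpha\eta$, into the \emph{same} block as the $\star_Y\d_Y$ term. Up to signs and nonzero constants, the transferred map is $\star_X\otimes\star_Y\d_Y+\d_X\star_X\d_X\otimes\d_Y^\star G_Y\star_Y$. This is a Kaluza--Klein mass operator, and it has an infinite-dimensional kernel.

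Concretely, take $p=0$, $Y=S^1$, $x\ge 2$, and $U$ a ball. The function $f=e^{ku_1}\cos(k\theta)$ is harmonic on $U\times Y$. It therefore gives a degree-$0$ cocycle $(f,C)$ for a suitable $C$, and this is not a coboundary because the top row has nothing in degree $-1$. Its harmonic projection vanishes. So massive modes survive, and they are absent from all four listed summands; no contraction onto harmonic forms can produce the stated answer.

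The reassembly also has holes.
\begin{enumerate}
\item At $p-\ell=x-1$, the target $\Omega^{x-(p-\ell)-1}_X=\Omega^0_X$ of $\star\d_X$ is nonzero. This case is therefore a degenerate Maxwell pair (the first bullet starts at $\ell=p-x+1$), not an unpaired de Rham complex. Your ranges are off by one there, and the bottom lattice $H^{y-\ell}(Y,\ZZ)$ is left unplaced.
\item The top lattices $H^\ell_\free(Y,\ZZ)$ with $p<\ell\le y$, coming from $\pi_*\ZZ[p+1]$, have no harmonic partner in the truncated top row. ``Summand by summand'' does not say where they go. For $p=0$ and $Y=S^1$ this is the winding-number lattice in degree $0$, which appears in none of the bullets.
\item The identification $H^{y-\ell}(Y,\RR)\cong H^\ell(Y,\RR)$ that makes $\star\d$ diagonal is $\star_Y$ on harmonic forms, which depends on the metric. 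It does not carry $H^{y-\ell}_\free(Y,\ZZ)$ onto $H^\ell_\free(Y,\ZZ)$; Poincar\'e duality only identifies it with the dual lattice. Already on a circle of length $L$ one picks up a factor $L^{\pm 1}$. So the bottom lattice does not literally ``match'', and the embedding written $eE_\ell$ must absorb this metric dependence.
\end{enumerate}
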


The first contribution is a product of generalized Maxwell theories on~$X$ of higher rank.

\begin{rmk}
Our results lift to statements about the derived stacks presented by these complexes because our methods work in families of test objects:
we are not varying the metric on $Y$ and hence need not grapple with subtleties about the Hodge decomposition varying in families.\hfill$\Diamond$
\end{rmk}

\begin{rmk}
We have considered products of Riemannian manifolds here,
but the techniques can accommodate fiber bundles $M \to X$ with compact fibers\footnote{That is, a proper surjective submersion.}
where the metric varies 
(such as warped products, for a simple type of example).
The arguments involve more elaborate bookkeeping for the perturbative factor,
as one must deal with Hodge decompositions for the fibers that vary along the base.
The theories on $X$ then have coupling ``constants'' that depend in an interesting way on the metric of the total space.~\hfill$\Diamond$
\end{rmk}

We now consider the situation where $\dim(X \times Y) = 4$.
There are six cases to examine:
we let $Y$ have dimensions 1, 2, or~3 and let $p = 0$ or~1,
because the $2$-form theory is dual to $0$-form theory.
Note that torsion in cohomology can only appear when~$y = 3$, 
since $S^1$ and oriented surfaces have no torsion.

\begin{eg}[$x+y = 4$, $p=0$, $y=1$]
The sheaf on the 3-manifold~$X$ is locally given by evaluating the following
\[
\begin{tikzcd}[row sep = tiny]
&&H^0(Y,\ZZ) \ar[r] & \Omega^0_X \otimes H^0(Y,\RR) \ar[dr, "\star \d"]& \\
&&&& \Omega^{2}_X \otimes H^1(Y,\RR) \\
&H^1(Y,\ZZ) \ar[r,"e"] & \Omega^0_X \otimes H^1(Y,\RR) \ar[r, "\d"] &  \Omega^{1}_X \otimes H^1(Y,\RR) \ar[ur, "-\kappa \d"']&\\
H^0(Y,\ZZ) \ar[r,"e"] & \Omega^0_X \otimes H^0(Y,\RR)\ar[r, "\d"] &  \Omega^{1}_X\otimes H^0(Y,\RR) \ar[r, "\d"] &  \Omega^{2}_X \otimes H^0(Y,\RR)\ar[r, "-\kappa\d"] &  \Omega^{3}_X\otimes H^0(Y,\RR)
\end{tikzcd}
\]
where in degree zero we have
\[
\Omega^0_X \otimes H^0(Y,\RR) 
\oplus \Omega^{1}_X \otimes H^1(Y,\RR)
\oplus \Omega^{2}_X \otimes H^0(Y,\RR).
\]
Note that $\star$ means $\star_X \otimes \star_Y$ (up to sign) and $\d$ denotes the exterior derivative on~$X$.~\hfill$\Diamond$
\end{eg}

\begin{eg}[$x+y = 4$, $p=0$, $y=2$]
The sheaf on the 2-manifold~$X$ is locally given by evaluating the following
\[
\begin{tikzcd}[row sep = tiny]
&&H^0(Y,\ZZ) \ar[r] & \Omega^0_X \otimes H^0(Y,\RR) \ar[dr, "\star \d"]& \\
&&&& \Omega^{1}_X \otimes H^2(Y,\RR) \\
&&H^2(Y,\ZZ) \ar[r,"e"] & \Omega^0_X \otimes H^2(Y,\RR) \ar[ur, "-\kappa \d"']&\\
H^0(Y,\ZZ) \ar[r,"e"] & \Omega^0_X \otimes H^0(Y,\RR)\ar[r, "\d"] &  \Omega^{1}_X\otimes H^0(Y,\RR) \ar[r, "\d"] &  \Omega^{2}_X \otimes H^0(Y,\RR)\\
&H^1(Y,\ZZ) \ar[r,"e"] & \Omega^0_X \otimes H^1(Y,\RR)\ar[r, "\d"] &  \Omega^{1}_X\otimes H^1(Y,\RR) \ar[r, "-\kappa\d"] &  \Omega^{2}_X \otimes H^1(Y,\RR)
\end{tikzcd}
\]
where in degree zero we have
\[
\Omega^0_X \otimes H^0(Y,\RR) 
\oplus \Omega^{0}_X \otimes H^2(Y,\RR)
\oplus \Omega^{2}_X \otimes H^0(Y,\RR)
\oplus \Omega^{1}_X \otimes H^1(Y,\RR).
\]
Note that $\star$ means $\star_X \otimes \star_Y$ (up to sign) and $\d$ denotes the exterior derivative on~$X$.~\hfill$\Diamond$
\end{eg}

\begin{eg}[$x+y = 4$, $p=0$, $y=3$]
The sheaf on the 1-manifold~$X$ is locally given by evaluating the following
\[
\begin{tikzcd}[row sep = tiny]
&&H^0(Y,\ZZ) \ar[r] & \Omega^0_X \otimes H^0(Y,\RR) \ar[dr, "\star \d"]& \\
&&&& \Omega^{0}_X \otimes H^3(Y,\RR) \\
H^0(Y,\ZZ) \ar[r,"e"] & \Omega^0_X \otimes H^0(Y,\RR)\ar[r, "\d"] &  \Omega^{1}_X\otimes H^0(Y,\RR)\\
&H^1(Y,\ZZ) \ar[r,"e"] & \Omega^0_X \otimes H^1(Y,\RR)\ar[r, "\d"] &  \Omega^{1}_X\otimes H^1(Y,\RR)\\
&&H^2(Y,\ZZ) \ar[r,"e"] & \Omega^0_X \otimes H^2(Y,\RR)\ar[r, "-\kappa\d"] &  \Omega^{1}_X\otimes H^2(Y,\RR)
\end{tikzcd}
\]
where in degree zero we have
\[
\Omega^0_X \otimes H^0(Y,\RR) 
\oplus \Omega^{1}_X \otimes H^1(Y,\RR)
\oplus \Omega^{0}_X \otimes H^2(Y,\RR).
\]
Note that the factor $H^2(Y,\ZZ)$ might have torsion!
Note that $\star$ means $\star_X \otimes \star_Y$ (up to sign) and $\d$ denotes the exterior derivative on~$X$.~\hfill$\Diamond$
\end{eg}

\begin{eg}[$x+y = 4$, $p=1$, $y=1$]
The sheaf on the 3-manifold~$X$ is locally given by evaluating the following
\[
\begin{tikzcd}[row sep = tiny]
&H^0(Y,\ZZ) \ar[r] & \Omega^0_X \otimes H^0(Y,\RR) \ar[r, "\d"] & \Omega^1_X \otimes H^0(Y,\RR) \ar[dr, "\star \d"]& \\
&&&& \Omega^{1}_X \otimes H^1(Y,\RR) \\
&&H^1(Y,\ZZ) \ar[r,"e"] & \Omega^0_X \otimes H^1(Y,\RR) \ar[ur, "-\kappa \d"']&\\
&&H^1(Y,\ZZ) \ar[r] & \Omega^0_X \otimes H^1(Y,\RR) \ar[dr, "\star \d"]&\\
&&&& \Omega^{2}_X \otimes H^0(Y,\RR) \\
&H^0(Y,\ZZ) \ar[r,"e"] & \Omega^0_X \otimes H^0(Y,\RR) \ar[r, "\d"] & \Omega^1_X \otimes H^0(Y,\RR) \ar[ur, "-\kappa \d"']&
\end{tikzcd}
\]
where in degree zero we have
\[
\Omega^1_X \otimes H^0(Y,\RR) 
\oplus \Omega^{0}_X \otimes H^1(Y,\RR)
\oplus \Omega^{0}_X \otimes H^1(Y,\RR)
\oplus \Omega^1_X \otimes H^0(Y,\RR).
\]
Note that $\star$ means $\star_X \otimes \star_Y$ (up to sign) and $\d$ denotes the exterior derivative on~$X$.~\hfill$\Diamond$
\end{eg}

\begin{eg}[$x+y = 4$, $p=1$, $y=2$]
The sheaf on the 2-manifold~$X$ is locally given by evaluating the following
\[
\begin{tikzcd}[row sep = tiny]
&H^0(Y,\ZZ) \ar[r] & \Omega^0_X \otimes H^0(Y,\RR) \ar[r, "\d"] & \Omega^1_X \otimes H^0(Y,\RR) \ar[dr, "\star \d"]& \\
&&& H^2(Y,\ZZ) \ar[r] &\Omega^{0}_X \otimes H^2(Y,\RR)\\
&&H^1(Y,\ZZ) \ar[r] & \Omega^0_X \otimes H^1(Y,\RR) \ar[dr, "\star \d"]&\\
&&&& \Omega^{1}_X \otimes H^1(Y,\RR) \\
&&H^1(Y,\ZZ) \ar[r,"e"] & \Omega^0_X \otimes H^1(Y,\RR) \ar[ur, "-\kappa \d"']&\\
&H^0(Y,\ZZ) \ar[r,"e"] & \Omega^0_X \otimes H^0(Y,\RR) \ar[r, "\d"] & \Omega^1_X \otimes H^0(Y,\RR) \ar[r, "-\kappa\d"]& \Omega^2_X \otimes H^0(Y,\RR)
\end{tikzcd}
\]
where in degree zero we have
\[
\Omega^1_X \otimes H^0(Y,\RR) 
\oplus \Omega^{0}_X \otimes H^1(Y,\RR)
\oplus \Omega^{0}_X \otimes H^1(Y,\RR)
\oplus \Omega^1_X \otimes H^0(Y,\RR).
\]
Note that $\star$ means $\star_X \otimes \star_Y$ (up to sign) and $\d$ denotes the exterior derivative on~$X$.~\hfill$\Diamond$
\end{eg}

\begin{eg}[$x+y = 4$, $p=1$, $y=3$]
The sheaf on the 1-manifold~$X$ is locally given by evaluating the following
\[
\begin{tikzcd}[row sep = tiny]
&&H^1(Y,\ZZ) \ar[r] & \Omega^0_X \otimes H^1(Y,\RR) \ar[dr, "\star \d"]&\\
&&&H^2(Y,\ZZ) \ar[r] &\Omega^{0}_X \otimes H^2(Y,\RR) \\
&H^0(Y,\ZZ) \ar[r] & \Omega^0_X \otimes H^0(Y,\RR) \ar[r, "\d"] & \Omega^1_X \otimes H^0(Y,\RR) & \\
&H^0(Y,\ZZ) \ar[r,"e"] & \Omega^0_X \otimes H^0(Y,\RR) \ar[r, "\d"] & \Omega^1_X \otimes H^0(Y,\RR) & \\
&&H^1(Y,\ZZ) \ar[r,"e"] & \Omega^0_X \otimes H^1(Y,\RR) \ar[r, "-\kappa\d"]&\Omega^{0}_X \otimes H^1(Y,\RR)
\end{tikzcd}
\]
where in degree zero we have
\[
\Omega^1_X \otimes H^0(Y,\RR) 
\oplus \Omega^{0}_X \otimes H^1(Y,\RR)
\oplus \Omega^{0}_X \otimes H^1(Y,\RR)
\oplus \Omega^1_X \otimes H^0(Y,\RR).
\]
Note that $\star$ means $\star_X \otimes \star_Y$ (up to sign) and $\d$ denotes the exterior derivative on~$X$.~\hfill$\Diamond$
\end{eg}

\subsection{Compactification, via triangulation}
\label{sec: comp by triangles}

Roughly speaking, a compactification of a $p$-form theory is a product of abelian $k$-form gauge theories with $k$ running from $0$ to $p$.
That is, if $\pi: X \times Y \to X$ is the projection, 
then the pushforward $\pi_* \Mxw^\circ_{p,n}\langle\kappa,e\rangle$ is approximately
\begin{equation}
\label{eqn: rough approx}
\Mxw^\circ_{0,k}\langle H^p_{\free}(Y), \kappa, eE_p\rangle \times \Mxw^\circ_{1,k}\langle H^{p-1}_{\free}(Y), \kappa, eE_{p-1}\rangle \times \cdots \times \Mxw^\circ_{p,k}\langle H^0_{\free}(Y), \kappa, eE_0\rangle
\end{equation}
where we write $H^k_{\free}(Y)$ for the integral lattice given by the {\em free} part of the abelian group~$H^k(Y,\ZZ)$ and where $E_k$ encodes the inclusion of this lattice into the real cohomology group.

In~\eqref{eqn: rough approx}, we are missing the contribution of the torsion in the cohomology of~$Y$.
This torsion contributes Dijkgraaf-Witten-like topological gauge theories that couple to the gauge theories appearing on the right hand side of~\eqref{eqn: rough approx}.

Before jumping into details,
we point out what can be seen from general principles.
We will describe here the situation with $\Mxw^\circ_{p,n}\langle \kappa, e\rangle$,
but the same techniques work for any variation on generalized Maxwell theory.

Note that any short exact sequence of sheaves on $M = X \times Y$ determines a distinguished triangle in the derived category of sheaves on~$M$,
and the (derived) pushforward along $\pi$ of this distinguished triangle is a distinguished triangle in the derived category of sheaves on~$X$.

We now apply that observation to 
\[
0 \to \widetilde{\Mxw}^{\pert}_{p,n} \to \Mxw_{p,n}^\circ\langle\kappa,e\rangle \to \ZZ[p+1] \oplus \ZZ[n-p-1] \to 0
\]
the short exact sequence~\eqref{SES for circ mxw} from Section~\ref{sec: on lattices}.
We obtain a distinguished triangle
\[
\pi_* \widetilde{\Mxw}^{\pert}_{p,n} \to \pi_* \Mxw^\circ_{p,n}\langle\kappa,e\rangle \to \pi_* \left(\ZZ[p+1]\oplus \ZZ[n-p-1]\right) \to \pi_* \widetilde{\Mxw}^{\pert}_{p,n}[1]
\]
in the derived category of sheaves for~$X$.
(Recall that we use $\pi_*$ to denote the derived pushforward.)

Now the sheaf $\pi_* \ZZ[p+1]$ is well-studied in algebraic topology,
where it is typically studied via the Leray--Serre spectral sequence.
Recall that $\pi_* \ZZ$ is a locally constant sheaf whose value on a contractible open $U \subset X$ is quasi-isomorphic to the singular cochain complex~$C^\bullet(Y,\ZZ)$. 
But it is a classical fact\footnote{For a proof, see \cite[Corollary 13.1.20]{KasSch}.} that $C^\bullet(Y,\ZZ)$ is quasi-isomorphic to its cohomology $H^\bullet(Y,\ZZ)$, 
so we find
\[
\pi_* \ZZ[p+1] \simeq H^\bullet(Y,\ZZ)[p+1]
\]
where the right hand side is viewed as a locally constant sheaf on $X$.
In short, we have good control of this term in the distinguished triangle.

The sheaf $\pi_* \Mxw^{\pert}_{p,n}$ is also well-understood:
\begin{itemize}
\item it is built from smooth sections of vector bundles, 
so the derived pushforward agrees with the naive pushforward and
\item so long as $Y$ is closed and oriented, Hodge theory allows us, in essence, to replace differential forms on $Y$ with harmonic forms on~$Y$.\footnote{Using $L^2$ cohomology, one can work with $Y$ not closed and our arguments below carry over nicely.}
\end{itemize}
In consequence we can view $\pi_*\Mxw^{\pert}_{p,n}$ as equivalent to a field theory on~$X$. 
We will see that it is, in essence, a product of perturbative Maxwell theories on~$X$ itself.

Putting our observations together,
we see that 
\[
\pi_* \Mxw_{p,n}^\circ\langle\kappa,e\rangle
\]
can be described as the (co)cone of a sheaf map 
\[
H^\bullet(Y,\ZZ)[p+1] \oplus H^\bullet(Y,\ZZ)[n-p-1] \to \pi_* \widetilde{\Mxw}^{\pert}_{p,n}
\]
on~$X$.\footnote{This situation is similar to the setting of variations of Hodge structures,
as we are working with a cousin of the Deligne complexes.
Many results appearing in that subject ought to admit analogues here. 
In the setting of holomorphically twisted supersymmetric theories,
our methods become extremely closely intertwined with variations of Hodge structures.}
Since the codomain of that map is built of vector spaces over~$\RR$,
it cannot receive nonzero maps from torsion groups, 
so the map factors through the free part $H^\bullet_\free(Y,\ZZ)[p+1]$ of integral cohomology.

Thus, roughly, the compactification is the product of two kinds of theories:
\begin{itemize}
\item a {\em topological} field theory arising from torsion in the integral cohomology~$H^\bullet(Y,\ZZ)$ and
\item a nonperturbative generalized Maxwell theories, determined by the free part of the integral cohomology~$H^\bullet(Y,\ZZ)$.
\end{itemize}
We now turn to pinning down the second factor precisely.

\begin{rmk}
A general mechanism we have not exploited here is the Leray spectral sequence,
but our arguments above make this rather computable,
inasmuch as one knows the Hodge theory and integral cohomology of~$Y$.
It just gives less sharp statements than we can obtain via homological perturbation lemma.~\hfill$\Diamond$
\end{rmk}

\subsection{Compactification, via homological perturbation}
\label{sec: comp by HPL}

We now work out a convenient model for $\pi_* \widetilde{\Mxw}^\pert$ using the homological perturbation lemma,
which is reviewed in Section~\ref{sec: HPL}.
This section provides a proof of the following result.

\begin{prp}
\label{prp: pert compactification}
Let $X$ be an oriented Riemannian manifold of dimension~$x$ and let $Y$ be closed, oriented Riemannian manifold of dimension~$y$.

The pushforward $\pi_* \widetilde{\Mxw}^\pert$ is quasi-isomorphic to the sheaf {\em on~$X$} 
\[
\bigoplus_{\ell = \max(p-x+1,0)}^{\min(y,p)} \widetilde{\Mxw}^\pert_{p-\ell, x} \otimes H^\ell(Y) 
\oplus \bigoplus_{\ell = 0}^{\min(p-x,y)} \Omega^\bullet_X \otimes H^\ell(Y)[p-\ell]
\oplus \bigoplus_{k = 0}^{y-p-1} \Omega^\bullet_X \otimes H^k(Y)[n-p-2-k]
\]
describing a product of perturbative generalized Maxwell theories {\em on $X$} and some de Rham complexes on $X$ valued in the cohomology of~$Y$.
\end{prp}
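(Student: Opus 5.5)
The plan is to compute $\pi_*\widetilde{\Mxw}{}^{\pert}_{p,n}$ directly on product opens $U \times Y$ with $U \subset X$ a coordinate ball, and then observe that every construction is natural in $U$. Since $\widetilde{\Mxw}{}^{\pert}_{p,n}$ consists of smooth sections of vector bundles (fine sheaves), the derived pushforward agrees with the naive one. Thus $(\pi_*\widetilde{\Mxw}{}^{\pert})(U)$ is the complex obtained from the two rows of truncated de Rham complexes on $U\times Y$ together with the term $\Omega^{n-p-1}(U\times Y)$.

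First, I decompose forms on $U\times Y$ via the (completed) Künneth splitting $\Omega^k_{U\times Y} = \bigoplus_{i+j=k}\Omega^i_U\widehat\otimes\Omega^j(Y)$ and write $\d = \d_X + \d_Y$. The Hodge star on the product factors as $\star = \pm\star_X\otimes\star_Y$.

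Next I use Hodge theory on the closed manifold $Y$. The inclusion of harmonic forms $\mathcal{H}^\bullet(Y)\cong H^\bullet(Y)$ into $\Omega^\bullet(Y)$ is a deformation retract, with contracting homotopy $h = \d_Y^* G$ ($G$ the Green operator). I tensor this with $\Omega^\bullet_U$ row by row, including the target term $\Omega^{n-p-1}$. The result is a strong deformation retract for the complex whose differential is $\d_Y$ alone. The remaining differential, consisting of $\d_X$ together with the ``diagonal'' maps $\star\d$ and $-\kappa\d$, is treated as a perturbation.

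I then apply the homological perturbation lemma (Section~\ref{sec: HPL}). The key simplification is that harmonic forms are preserved by $\star_Y$, and that $\d_X$ commutes with the projection and inclusion while anticommuting with $h$. I therefore expect all correction terms $\cdots h\delta h\cdots$ to vanish on the harmonic subcomplex. The reason is that $h$ kills harmonic forms and the projection kills the image of $h$. The transferred differential should then be just $\d_X\otimes 1$ together with $\star_X\d_X\otimes\star_Y$ and $-\kappa\d_X$ on harmonic pieces. Checking this vanishing carefully, including the signs in $\star$ and the fact that $\d_Y$-terms of the perturbation land in $\mathrm{im}\,\d_Y$, is the step I expect to be the main obstacle.

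Finally comes bookkeeping. The top row $\Omega^{\le p}_{U\times Y}$ becomes $\bigoplus_\ell \Omega^{\le p-\ell}_U\otimes H^\ell(Y)$, and the bottom row becomes $\bigoplus_k\Omega^{\le n-p-2-k}_U\otimes H^k(Y)$. The target becomes $\bigoplus_j\Omega^{n-p-1-j}_U\otimes H^j(Y)$. Hodge star on $Y$ pairs $H^\ell$ with $H^{y-\ell}$, and $x - (p-\ell) - 2 = n-p-2-(y-\ell)$. So a top summand indexed by $\ell$ combines with the bottom summand $k=y-\ell$ into $\widetilde{\Mxw}{}^{\pert}_{p-\ell,x}\otimes H^\ell(Y)$ whenever both truncation degrees make sense, namely $0\le p-\ell\le x-1$, giving $\max(p-x+1,0)\le\ell\le\min(y,p)$.

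The unpaired pieces come in two kinds. Top summands with $p-\ell\ge x$ are full de Rham complexes $\Omega^\bullet_X\otimes H^\ell(Y)[p-\ell]$. Bottom summands with $k\le y-p-1$, whose partner $\ell=y-k$ exceeds $p$, give $\Omega^\bullet_X\otimes H^k(Y)[n-p-2-k]$. Their maps into the target must vanish or be absorbed, and I will verify this by matching degrees. Naturality in $U$ then gives the claimed quasi-isomorphism of sheaves on $X$.
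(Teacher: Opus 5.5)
Your argument breaks at the claimed strong deformation retract onto harmonic forms for the complex with differential $\mathrm{d}_Y$ alone.

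The rows of $\widetilde{\Mxw}{}^{\pert}_{p,n}$ are de Rham complexes on $U\times Y$ truncated in \emph{total} form degree. So for fixed $U$-degree $i$, the $\mathrm{d}_Y$-column of the top row is the truncated complex $\Omega^i(U)\widehat{\otimes}\,\Omega^{\le p-i}(Y)$. Its cohomology in top degree is $\Omega^i(U)\widehat{\otimes}\bigl(\Omega^{p-i}(Y)/\mathrm{d}_Y\Omega^{p-i-1}(Y)\bigr)\cong \Omega^i(U)\otimes H^{p-i}(Y)\oplus \Omega^i(U)\widehat{\otimes}\,K^{p-i}(Y)$. Here $K^j(Y)=\mathrm{im}(\mathrm{d}^\star_Y)^j$, which is infinite-dimensional for $j<y$. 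Concretely, the homotopy relation between $1$ and $\iota\pi$ fails on top-degree coexact forms, because the term $h\,\mathrm{d}_Y$ that would recover them has been cut off by the truncation.

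The bottom row behaves the same way. Moreover, since you put $-\kappa\mathrm{d}$ into the perturbation, the isolated term $\Omega^{n-p-1}(U\times Y)$ is entirely $\mathrm{d}_Y$-cohomology, exact and coexact parts included. So any correct small model for the HPL contains large pieces built from $K^\bullet(Y)$ (and, in your setup, from $\mathrm{im}\,\mathrm{d}_Y$). Your bookkeeping only ever sees $H^\bullet(Y)$ and silently discards these pieces. In particular, the target does not reduce to $\bigoplus_j\Omega^{n-p-1-j}_U\otimes H^j(Y)$.

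What is missing is an argument that these non-harmonic pieces cancel. The paper keeps $-\kappa\mathrm{d}$ in the unperturbed bottom row, so the unperturbed complex is the sum of the truncated de Rham complexes $\Omega^{\le p}$ and $\Omega^{\le n-p-1}$. Hodge theory plus the HPL along $\mathrm{d}_X$ (Lemma~\ref{lmm: decomp for trunc DR}) reduces each to $C_H\oplus C_K$, with $C_K$ concentrated in top degree. Only then is $\star\mathrm{d}$ treated as the perturbation.

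The result splits as $(T_H\to B_H)\oplus(T_K\to B_K)$. The piece $T_H\to B_H$ is exactly your harmonic bookkeeping. The piece $T_K\to B_K$ is a two-term complex, and it is acyclic for the following reason: $\mathrm{d}_Y\colon K^\ell(Y)\to\mathrm{im}(\mathrm{d}_Y)^{\ell+1}$ is an isomorphism, and $\star_Y$ carries exact forms to coexact ones, so $\star_Y\mathrm{d}_Y\colon K^\ell(Y)\to K^{y-\ell-1}(Y)$ is an isomorphism.

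This also corrects your expectation that the $\mathrm{d}_Y$-parts of the perturbation land in $\mathrm{im}\,\mathrm{d}_Y$ and are projected away. The term $\star\mathrm{d}_Y$ lands in \emph{coexact} forms, and it is precisely this map that kills the $K$-contributions. Once you add this step, your index ranges and your pairing of $H^\ell(Y)$ with $H^{y-\ell}(Y)$ agree with the statement.
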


In this section we use $H^\ell(Y)$ to denote the de Rham cohomology $H^\ell(Y,\RR)$.
Note that we get as copies of the $p-\ell$-form gauge theory on $X$ with multiplicity~$\dim(H^\ell(Y))$.

\subsubsection{Strategy of proof}

The perturbative complex is the truncation of $\Mxw^\circ_{p,n}$ where we slice off the copies of $\ZZ$ on the far left.
That is, consider the complex
\[
\begin{tikzcd}[row sep = tiny]
\Omega^0 \ar[r, "\d"] & \cdots \ar[r, "\d"]& \Omega^p \ar[dr, "\star \d"]& \\
&&& \Omega^{n-p-1} \\
\Omega^0 \ar[r, "\d"] & \cdots \ar[r, "\d"]& \Omega^{n-p-2} \ar[ur, "-\kappa \d"]&
\end{tikzcd}.
\]
It can be seen as a deformation of the complex
\[
\begin{tikzcd}[row sep = tiny]
\Omega^0 \ar[r, "\d"] & \cdots \ar[r, "\d"]& \Omega^p & \\
\Omega^0 \ar[r, "\d"] & \cdots \ar[r, "\d"]& \Omega^{n-p-2} \ar[r, "-\kappa \d"]&\Omega^{n-p-1}
\end{tikzcd},
\]
which is a direct sum of two truncated de Rham complexes.\footnote{Bear in mind that the copies of $\Omega^0$ do not (typically) sit in the same degree, even if the diagram might suggest that. On the other hand, $\Omega^p$ and $\Omega^{n-p-2}$ {\em do} both sit in degree zero.}
This deformation by $\star \d$ is small,
in the sense of the homological perturbation lemma.

Hence, we break up the proof into a few steps:
\begin{enumerate}
\item We analyze pushforwards of truncated de Rham complexes, using a combination of Hodge theory and homological perturbation theory.
We find that this pushforward is quasi-isomorphic to a small complex with two components: 
a piece that depends on cohomology $H^\ell(Y)$ and a piece that depends on the coclosed forms 
\[
K^\ell(Y) = {\rm im}(\d^\star_Y) \subset \Omega^\ell(Y),
\]
due to the Hodge decomposition.
\item We view $\widetilde{\Mxw}^\pert$ as a perturbation of a sum of two truncated de Rham complexes, 
and we apply the homological perturbation lemma to find a small model built from these kinds of two components.
\item We show that the first component, 
which involves the cohomology~$H^\bullet(Y)$, 
yields generalized Maxwell theories on~$X$,
along with some topological theories
\item We analyze the second component, 
which involves the coclosed forms on $Y$,
and we show it is acyclic, 
so it does not contribute (nontrivially) to our main result.
\end{enumerate}
The arguments are lengthy but straightforward manipulations,
so we go through them discursively in the two subsections below.
The conclusion is summarized as follows.

\subsubsection{Half of the perturbative complex}

As a first step towards the perturbative Maxwell theory, consider a truncated de Rham complex on $X \times Y$:
\[
\Omega^0 \xto{\d} \cdots \xto{\d} \Omega^k.
\]
Note that differential forms on $X \times Y$ admit a decomposition
\begin{equation}
\label{bigrading}
\Omega^k(X\times Y) =  \bigoplus_{j =0}^k \Omega^j(X) \,\widehat{\otimes} \, \Omega^{k-j} (Y) 
\end{equation}
where $\widehat{\otimes}$ denotes the completed projective tensor product with respect to the standard Fr\'echet topology on these vector spaces.
We abbreviate these summands as
\[
\Omega^{[j,l]} =  \Omega^j(X) \,\widehat{\otimes} \, \Omega^l(Y)
\]
for concision.
In this subsection, we will view $k$-forms as having cohomological degree~$k$, 
so that the total degree of an element in $\Omega^{[j,l]}$ is~$j+l$.

In terms of the bigrading above, 
the truncated de Rham complex is the totalization of a double complex
\[
\begin{tikzcd}
0 & \cdots&&&&\\
\Omega^{[0,k]} 
& 0  
& \cdots 
&&&\\
\Omega^{[0,k-1]} \ar[r, "\d_X"] \ar[u, "\d_Y"]
& \Omega^{[1,k-1]}  
& 0
& \cdots
&&\\
\vdots \ar[u, "\d_Y"]
& \vdots \ar[u, "\d_Y"]  
& \ddots & 0
&
\cdots
&\\
\Omega^{[0,1]} \ar[r, "\d_X"] \ar[u, "\d_Y"]
& \Omega^{[1,1]} \ar[r, "\d_X"] \ar[u, "\d_Y"]
& \cdots \ar[r, "\d_X"] 
& \Omega^{[k-1,1]} 
& 0 & \cdots\\
\Omega^{[0,0]} \ar[u, "\d_Y"] \ar[r, "\d_X"] 
& \Omega^{[1,0]} \ar[r, "\d_X"] \ar[u, "\d_Y"] 
& \cdots \ar[r, "\d_X"] 
& \Omega^{[k-1,0]} \ar[r, "\d_X"] \ar[u, "\d_Y"] 
& \Omega^{[k,0]} 
& 0 & \cdots
\end{tikzcd}
\]
Each column corresponds to a truncated de Rham complex for $Y$ tensored with some vector space $\Omega^j(X)$.

As $Y$ is closed, Hodge theory provides an explicit decomposition
\[
\Omega^k(Y) = H^k(Y) \oplus {\rm im}(\d)^k \oplus {\rm im}(\d^\star)^k
\]
where $\d^\star = \pm \star \d \star$ (the sign depends on the form degree and the metric signature) and $H^k(Y)$ here consists of the harmonic $k$-forms.
We include the superscript $k$ on the images to make the form-degree manifest.
This decomposition determines a deformation retract for the truncated de Rham complex
\[
\Omega^0(Y) \xto{\d_Y} \cdots \xto{\d_Y} \Omega^{\ell-1}(Y) \xto{\d_Y} \Omega^\ell(Y)
\]
onto its cohomology
\[
H^0(Y) \xto{0} \cdots \xto{0} H^{\ell-1}(Y) \xto{0}  \Omega^\ell(Y)/\d_Y(\Omega^{\ell-1}(Y)).
\]
We can (and do) require that the cochain homotopy $\eta$ vanishes on each summand~$H^k(Y)$.
Note the rightmost term is very large, 
but it decomposes as
\[
\Omega^\ell(Y)/\d_Y(\Omega^{\ell-1}(Y)) = H^\ell(Y) \oplus {\rm im}(\d_Y^\star)^\ell
\]
where ${\rm im}(\d_Y^\star)^\ell$ denotes the image of $\d^\star_Y$ in~$\Omega^\ell(Y)$.
Set 
\begin{equation}
\label{coclosed forms}
K^\ell(Y) = {\rm im}(\d_Y^\star)^\ell
\end{equation}
for the sake of concise notation.
Note that~$K^y(Y) = 0$.

Tensoring with $\Omega^j(X)$,
we obtain a deformation complex for the cochain complex $\Omega^j(X) \widehat{\otimes}\, \Omega^{\leq \ell}(Y)$ as well.
Corollary~\ref{crl: hpl for double} (of the homological perturbation lemma) then guarantees that we have a homotopy equivalence
\[
\begin{tikzcd}
(H^\bullet(\Omega^{\leq k}(X \times Y),\d_Y), \pi \alpha \iota) \ar[r,shift left, "\iota + \eta \alpha \iota"] & (\Omega^{\leq k}(X \times Y), \d_Y + \d_X) \ar[l, shift left, "\pi + \pi \alpha \eta"] \ar[loop right, "\eta + \eta \alpha \eta"] 
\end{tikzcd}
\]
where 
\[
\alpha = (1-\d_X \eta)^{-1} \d_X = \sum_{n \geq 0} (\d_X \eta)^n \d_X
\] 
and $\eta$ is the cochain homotopy from the Hodge theory for~$Y$.
In explicit terms, the left hand complex is the totalization of the bigraded vector space
\[
\begin{tikzcd}[column sep = tiny]
\begin{matrix} 
\Omega^{0}(X) \otimes H^k(Y) \\ 
\,\oplus \, \Omega^{0}(X) \widehat{\otimes} \, K^k(Y)
\end{matrix}
& 0  
& \cdots 
&&&\\
\Omega^{0}(X) \otimes H^{k-1}(Y)  
& 
\begin{matrix} 
\Omega^{1}(X) \otimes H^{k-1}(Y) \\ 
\,\oplus \, \Omega^{1}(X) \widehat{\otimes} \, K^{k-1}(Y)
\end{matrix}
& 0
& \cdots
&&\\
\vdots 
& \vdots   
& \ddots & 0
&
\cdots
&\\
\Omega^{0}(X) \otimes H^{1}(Y)  
& \Omega^{1}(X) \otimes H^{1}(Y)  
& \cdots  
& 
\begin{matrix} 
\Omega^{k-1}(X) \otimes H^{1}(Y) \\ 
\oplus \, \Omega^{k-1}(X) \widehat{\otimes} \, K^1(Y)
\end{matrix}
& 0 & \cdots\\
\Omega^{0}(X) \otimes H^{0}(Y)   
& \Omega^{1}(X) \otimes H^{0}(Y)   
& \cdots 
& \Omega^{k-1}(X) \otimes H^{0}(Y)  
& \Omega^{k}(X) \widehat{\otimes} \, H^{0}(Y)
& 0 & \cdots
\end{tikzcd}
\]
with the transferred differential~$\pi\alpha\iota$ (given by ``descending the staircase'').
But because $\eta$ vanishes on the summand $\Omega^j(X) \widehat{\otimes} H^{k-j}(Y)$,\footnote{The notation $\widehat{\otimes}$ denotes the completed projective tensor product of these Fr\'echet vector spaces. Recall that $C^\infty(X \times Y) = C^\infty(X) \widehat{\otimes}\, C^\infty(Y)$ for any smooth manifolds $X$ and $Y$, and a similar statement holds for differential forms.}
this differential simplifies to~$\pi \d_X \iota$,
i.e., only the horizontal differential survives.

We record the conclusion.

\begin{lmm}
\label{lmm: decomp for trunc DR}
When $Y$ is closed and oriented of dimension~$y$,
the truncated de Rham complex
\[
(\Omega^{\leq k}(X \times Y), \d_Y + \d_X)
\]
is quasi-isomorphic to the direct sum of cochain complexes
\begin{equation}
\label{decomp for trunc DR}
C_H \oplus C_K
\end{equation}
where
\[
C_H = \left(\bigoplus_{\ell = 0}^{\min(y,k)} \Omega^{\leq k-\ell}(X) \otimes H^\ell(Y) \; , \; \d_X \otimes \id_{H^\ell(Y)}\right)
\]
and
\[
C_K = \bigoplus_{\ell = 0}^{\min(y-1,k)} \Omega^{k - \ell}(X) \widehat{\otimes}\, K^\ell(Y).
\]
Note that $C_K$ is concentrated wholly in cohomological degree~$k$ while $C_H$ lives between degrees 0 and~$k$.
\end{lmm}

The first upper bound is $\min(y,k)$ because $H^\ell(Y) = 0$ if $\ell > y$.
Note that $K^y(Y) = 0$, yielding the second upper bound.

\subsubsection{The whole perturbative complex}

Recall that the perturbative complex
\begin{equation}
\label{cplx: mxwcircpert}
\begin{tikzcd}[row sep = tiny]
&\Omega^0 \ar[r, "\d"] & \cdots \ar[r, "\d"]& \Omega^p \ar[dr, "\star \d"]& \\
&&&& \Omega^{n-p-1} \\
\Omega^0 \ar[r, "\d"] & \cdots & \cdots \ar[r, "\d"]& \Omega^{n-p-2} \ar[ur, "-\kappa \d"]&
\end{tikzcd}
\end{equation}
can be seen as a small deformation of the complex
\begin{equation}
\label{cplx: sum of truncs}
\begin{tikzcd}[row sep = tiny]
& \Omega^0 \ar[r, "\d"] & \cdots \ar[r, "\d"]& \Omega^p & \\
\Omega^0 \ar[r, "\d"] & \cdots & \cdots \ar[r, "\d"]& \Omega^{n-p-2} \ar[r, "-\kappa \d"]&\Omega^{n-p-1}
\end{tikzcd}
\end{equation}
which is a direct sum of two truncated de Rham complexes.

The complex~\eqref{cplx: sum of truncs} admits a simple description by our work in the preceding section:
it admits a homotopy equivalence with a direct sum $T \oplus B$ where
\begin{equation}
\label{cplx: top row}
T = \bigoplus_{\ell = 0}^{\min(y,p)} (\Omega^{\leq p-\ell}(X) \otimes H^\ell(Y)[p], \d_X) 
\oplus  \bigoplus_{\ell = 0}^{\min(y-1,p)} \Omega^{p-\ell}(X)[p] \widehat{\otimes}\, K^\ell(Y)
\end{equation}
and
\begin{equation}
\label{cplx: bottom row}
B = \bigoplus_{k = 0}^{\min(y,n-p-1)} (\Omega^{\leq (n-p-1)-k}(X) \otimes H^k(Y)[n-p-2], \d_X) 
\oplus  
\bigoplus_{k = 0}^{\min(y-1,n-p-1)} \Omega^{(n-p-1)-k}(X) \widehat{\otimes}\, K^k(Y)[n-p-2]
\end{equation}
by using the description of~\eqref{decomp for trunc DR} for the top and bottom rows of~\eqref{cplx: sum of truncs}.\footnote{Note that we included downward shifts, relative to the preceding section where we used the total degree of the differential forms.}

Recall the further decomposition $T = T_H \oplus T_K$ where
\[
T_H = \bigoplus_{\ell = 0}^{\min(y,p)} (\Omega^{\leq p-\ell}(X) \otimes H^\ell(Y), \d_X)
\]
and
\[
T_K = \bigoplus_{\ell = 0}^{\min(y-1,p)} \Omega^{p-\ell}(X) \widehat{\otimes}\, K^\ell(Y).
\]
Likewise, we decompose~$B = B_H \oplus B_K$.

Since the complex~\eqref{cplx: mxwcircpert} is a small deformation of~\eqref{cplx: sum of truncs},
it remains to apply the homological perturbation lemma to obtain a simple, smaller replacement of the complex~\eqref{cplx: mxwcircpert} of fundamental interest.

Here we need to describe $\star \d$ in terms of the bigrading of~\eqref{bigrading}.
It suffices to understand $\star$ on~$X \times Y$.
Let $x = \dim(X)$ and $y = \dim(Y)$ so that $n = x + y$.
Then $\star$ sends $\Omega^{[j,k]}$ to $\Omega^{[x-j, y-k]}$,
and for any $\alpha \in \Omega^j(X)$ and $\beta \in \Omega^k(Y)$, one finds
\[
\star( \alpha \wedge \beta) = \pm (\star_X \alpha) \wedge (\star_Y \beta)
\]
where we postpone pinning down the sign. 
(It depends on the signature of the metrics, as well as~$j, k$.)

Now $\star \d = \star(\d_X + \d_Y)$ and it acts differently on the two summands of~$T$.
In fact, the total complex 
\[
T \xto{\star \d} B
\]
decomposes as a direct sum 
\[
(T_H \xto{\star \d_X} B_H) \oplus (T_K \xto{\star \d} B_K).
\]
We now show this and examine its consequence.

On the summand $T_H$ of~\eqref{cplx: top row}, 
$\star \d$ acts only by~$\star \d_X$ since $\d_Y$ annihilates $H^\bullet(Y)$.
The contribution of the summand $\Omega^{\leq p-\ell}_X \otimes H^\ell(Y)$ depends on how $\ell$ relates to $x = \dim(X), y = \dim(Y)$, and~$p$.
Note a basic dichotomy:
\begin{itemize}
\item[(i)] If $p - \ell \geq x$, then $\d_X$ acts by zero on the final term~$\Omega^{p-\ell}_X$.
\item[(ii)] If $p-\ell < x$, then $\d_X$ acts nontrivially. (Note this implies $x - (p-\ell) -1 \geq 0$ so its codomain $\Omega^{x-(p-\ell)-1}_X$ is nonzero too.)
\end{itemize}
In case~(i) with $\ell \leq p-x$, we find a copy of the full de Rham complex
\[
\Omega^0_X \otimes H^\ell(Y) \xto{\d_X} \cdots \xto{\d_X} \Omega^x_X \otimes H^\ell(Y)
\]
and so this subcomplex of $T_H$ is not deformed.
In case~(ii) with $\ell > p-x$,
for each value of $p-x +1 \leq \ell \leq \min(y,p)$, we get a copy of the complex
\begin{equation}
\label{cplx ofr gen mxw}
\begin{tikzcd}[row sep = tiny]
\Omega^0_X \otimes H^{\ell}(Y) \ar[r, "\d_X"] & \cdots \ar[r, "\d_X"]& \Omega_X^{p -\ell} \otimes H^{\ell}(Y) \ar[dr, "\star \d_X"]& \\
&&& \Omega_X^{x-(p-\ell)-1} \otimes H^{y-\ell}(Y) \\
\Omega_X^0 \otimes H^{y-\ell}(Y) \ar[r, "\d_X"] & \cdots \ar[r, "\d_X"]& \Omega_X^{x-(p-\ell)-2} \otimes H^{y-\ell}(Y) \ar[ur, "-\kappa \d_X"]&
\end{tikzcd}
\end{equation}
which is perturbative $p-\ell$-form Maxwell theory on~$X$, for the abelian Lie algebra~$H^{\ell}(Y)$.
(Poincar\'e duality on $Y$ guarantees $H^{y-\ell}(Y) \cong H^\ell(Y)$.)
Thus we have
\[
\bigoplus_{\ell = \max(p-x+1,0)}^{\min(y,p)} \widetilde{\Mxw}^\pert_{p-\ell, x} \otimes H^\ell(Y)
\]
{\it recovers the claim of~\eqref{eqn: rough approx},
identifying the pushforward with Maxwell theories valued in the cohomology of~$Y$.}

A similar dichotomy appears for the summand~$B_H$:
\begin{itemize}
\item[(iii)] If $x+y- p -1 -k \geq x$ (so $y-p-1 \geq k$), 
then the final term~$\Omega^{x+y- p -1 -k}_X$ is zero, 
so $\star \d$ must act trivially.
\item[(iv)] If $x+y- p -1 -k < x$ (so $y-p \leq k$), 
then $\star \d_X$ has a nontrivial codomain. 
\end{itemize}
In case~(iii) we find a copy of the full de Rham complex
\[
\Omega^0_X \otimes H^k(Y) \xto{\d_X} \cdots \xto{\d_X} \Omega^x_X \otimes H^k(Y)
\]
and so this subcomplex of $B_H$ is not deformed.
Case~(iv) provides the bottom row of~\eqref{cplx ofr gen mxw},
where the range of values for $\ell$ on the top row matches with the range of values for $k = y - \ell$ on the bottom row.

Putting these cases together, 
we find that $T_H \to B_H$ is isomorphic to the direct sum
\[
\bigoplus_{\ell = \max(p-x+1,0)}^{\min(y,p)} \widetilde{\Mxw}^\pert_{p-\ell, x} \otimes H^\ell(Y) 
\oplus \bigoplus_{\ell = 0}^{\min(p-x,y)} \Omega^\bullet_X \otimes H^\ell(Y)[p-\ell]
\oplus \bigoplus_{k = 0}^{y-p-1} \Omega^\bullet_X \otimes H^k(Y)[n-p-2-k]
\]
as claimed in Proposition~\ref{prp: pert compactification}.

We now turn to analyzing the complex $T_K \to B_K$.

Note that by Hodge theory $\d_Y$ is an isomorphism on the image of $\d^\star_Y$ inside~$\Omega^\ell(Y)$:
\[
\d_Y: K^\ell(Y) = {\rm im}(\d_Y^\star)^\ell \xto{\cong} {\rm im}(\d_Y)^{\ell+1}.
\]
Thus
\[
\star_Y \d_Y: K^\ell(Y) \xto{\cong} K^{y-(\ell+1)}(Y)
\]
is an isomorphism as well.

Hence, from the summand $T_K$ of~\eqref{cplx: top row},  
we get a two-term cochain complex
\[
\bigoplus_{\ell = 0}^p \Omega^{p - \ell}(X) \widehat{\otimes}\, K^\ell(Y) 
\xto{\star \d} 
\bigoplus_{\ell = 0}^{n-p-1} \Omega^{(n-p-1)-\ell}(X) \widehat{\otimes}\, K^\ell(Y)
\]
concentrated in degrees 0 and~1.
Since there are no ghost fields, 
{\it this complex can be viewed as a kind of scalar field theory}, albeit valued in an enormous vector space.

In this complex~$B$, both operators $\d_X$ and $\d_Y$ might act nontrivially.
On each summand, the operator $\star \d_X$ is a map
\[
\Omega^{p - \ell}(X) \widehat{\otimes}\, K^\ell(Y) \xto{\star \d_X} \Omega^{x- (p-\ell)+1}(X) \widehat{\otimes}\, K^{y-\ell}(Y)
\]
while the operator $\star \d_Y$ is a map
\[
\Omega^{p - \ell}(X) \widehat{\otimes}\, K^\ell(Y) \xto{\star \d_Y} \Omega^{x- (p-\ell)}(X) \widehat{\otimes}\, K^{y-(\ell+1)}(Y)
\]
We want to understand the cohomology of the total operator~$\star \d$.

\begin{lmm}
The complex $T_K \xto{\star \d} B_K$ has trivial cohomology (i.e., it is acyclic).
\end{lmm}

This lemma ensures that only the complex $T_H \to B_H$ contributes nontrivially to the pushforward theory.

\begin{proof}
We consider two cases separately: $p < y= \dim(Y)$ and~$y \leq p$.

Let $p < y $. 
Recall that $K^y(Y) = 0$.
Then the complex can be written as
\[
\begin{tikzcd}
\Omega^p(X) \otimes K^0(Y) \ar[r, "\star \d_Y"] 
& \Omega^{x-p}(X) \otimes K^{y-1}(Y) \\
\Omega^{p-1}(X) \otimes K^1(Y) \ar[r, "\star \d_Y"] \ar[ur, "\star \d_X"] 
& \Omega^{x-p+1}(X) \otimes K^{y-2}(Y) \\
\Omega^{p-2}(X) \otimes K^2(Y) \ar[r, "\star \d_Y"]
 \ar[ur, "\star \d_X"] & \vdots \\
\vdots \ar[r, "\star \d_Y"] & \Omega^{x-1}(X) \otimes K^{y-p}(Y)\\
\Omega^{0}(X) \otimes K^p(Y) \ar[r, "\star \d_Y"] \ar[ur, "\star \d_X"] 
& \Omega^{x}(X) \otimes K^{y-(p+1)}(Y)
\end{tikzcd}
\]
where the left hand column is $T_K$ and the right hand is~$B_K$.
(Note that $y \geq p+1$ so the bottom right term is nonzero.)
As $\star \d_Y$ is an isomorphism, 
this complex is acyclic.

Now let $y \leq p$. 
Then the complex can be written as
\[
\begin{tikzcd}
\Omega^p(X) \otimes K^0(Y) \ar[r, "\star \d_Y"] 
& \Omega^{x-p}(X) \otimes K^{y-1}(Y) \\
\Omega^{p-1}(X) \otimes K^1(Y) \ar[r, "\star \d_Y"] \ar[ur, "\star \d_X"] 
& \Omega^{x-p+1}(X) \otimes K^{y-2}(Y) \\
\Omega^{p-2}(X) \otimes K^2(Y)
\vdots \ar[ur, "\star \d_X"] \ar[r, "\star \d_Y"] & \vdots \\
\vdots \ar[r, "\star \d_Y"] & \Omega^{x-(p-y-2)}(X) \otimes K^{1}(Y)\\
\Omega^{0}(X) \otimes K^y(Y) \ar[r, "\star \d_Y"] \ar[ur, "\star \d_X"] 
& \Omega^{x-(p-y+1)}(X) \otimes K^{0}(Y)
\end{tikzcd}
\]
since $x-(p-y+1) = x+y -p-1 = n-p-1$.

As $\star \d_Y$ is an isomorphism, 
this complex is also acyclic.
\end{proof}

\appendix

\section{Moduli stacks and derived geometry in more detail}
\label{sec: dag stuff}

In this section we describe the mathematical home for the constructions used in this paper,
namely sheaves on a site of spacetimes valued in derived stacks.
The spacetimes are simple enough, such as $\Riem_n$, the site of Riemannian $n$-manifolds and isometric embeddings; 
but we need to clarify what we mean by {\em derived stack}.
Our needs for this paper are minimal,
since we only obtain them from sheaves built with differential forms $\Omega^k$ or with a constant sheaf~$\ZZ$.
In consequence, we do not examine geometry here closely,
although undoubtedly there are interesting phenomena to explore.

We note that these objects, as sheaves on the site of smooth manifolds, are already well-explored.\footnote{
Freed and Hopkins~\cite{FreHopCW} offers a lovely exposition for a broad audience; it is
a useful companion to their work on differential cohomology for physics.
The book \cite{AmaDebHai} is a very accessible and more extensive treatment of differential cohomology and sheaves on manifolds.
Finally, \cite{MooSax} is a physics-oriented exposition.}
For instance, the $k$-forms provide a functor
\[
\Omega^k: \Mfld^\opp \to \Set
\]
assigning $\Omega^k(M)$ to each smooth manifold~$M$,
and (truncated) de Rham and Deligne complexes likewise provide functors like
\[
\Omega^{\leq k}[k]: \Mfld^\opp \to \Ch
\]
that can then be further mapped to $\Spaces$.
Hence the field content (in the BRST sense) of abelian gauge theories is naturally accommodated by this setting.

This setting, however, has some limitations:
\begin{itemize}
\item it is not easy to extract the perturbative features and
\item it is not easy to encode the equations of motion (and antifield content of the BV formalism).
\end{itemize}
A modest enhancement of $\Mfld$, however, can obviate these issues,
and that is by using derived manifolds instead of manifolds as the test objects.

Our discussion here will thus emphasize the {\em derived} direction to complement the {\em higher} direction already developed in the community living between physics and homotopy theory.

Our approach is to view a derived stack as a moduli space,
so we define it as functor.
As is typical currently,
such a functor takes values in $\Spaces$,
the $\infty$-category of $\infty$-groupoids (or homotopy types).
The domain of such a functor is a site,
where we view objects in the site as parameter spaces.
For our purposes, it is useful for this site to contain both formal directions and derived directions;
we also wish to incorporate differential geometry, 
since we work in field theory.
Thus we will sketch the rudiments of derived differential geometry, following \cite{CarSte, CarDG, SteDDG, Tar1, Tar2, AlfYou}.
Indeed, our overview is merely a riff on that from~\cite{Tar2}.

First, we need to build up to the site,
and we do that by defining {\em derived manifolds} as enhancements of ordinary manifolds.
Then we will characterize derived stacks among the $\Spaces$-valued functors on this site.
It will take a few definitions.

\subsection{Derived rings for smooth geometry}

We start with identifying the natural class of algebras,
guided by algebraic geometry, 
where the model spaces are the affine schemes and those are formally dual to commutative rings.

\begin{dfn}
Let $\CartSp$ denote the category whose objects are the smooth manifolds $\RR^n$ for each natural number $n \in \NN$ and whose morphisms are smooth maps between them.
\end{dfn}

It is the full subcategory of the category of smooth manifolds $\Mfld$ given by the {\em Cartesian} spaces~$\RR^n$.
Every smooth manifold $M$ determines a functor
\[
C^\infty_M: \CartSp \to \Set
\]
where
\[
C^\infty_M(\RR) = C^\infty(M) = \Map_{C^\infty}(M,\RR)
\]
and
\begin{align*}
C^\infty_M(\RR^n) &= \Map_{C^\infty}(M,\RR^n)\\
&= C^\infty(M)^{\times n} = C^\infty(M) \times \cdots \times C^\infty(M) 
\end{align*}
where we take $n$ copies on the right.
Any smooth map $\phi: \RR^m \to \RR^n$ induces a map
\[
C^\infty_M(\phi): \Map_{C^\infty}(M,\RR^m) \to \Map_{C^\infty}(M,\RR^n)
\]
by postcomposition.

For example, the multiplication map for the algebra of real numbers $m_\RR: \RR^2 \to \RR$ induces the multiplication map on functions
\[
C^\infty(M) \times C^\infty(M) = C^\infty_M(\RR^2) \xto{m_\RR \circ -} C^\infty_M(\RR) = C^\infty(M)
\]
by postcomposition.
In other words, this functor $C^\infty_M$ encodes the algebra of {\em smooth} functions on~$M$.

Note that by construction this functor $C^\infty_M$ preserves finite products.
With a little work, one sees that there is a fully faithful functor
\[
\Mfld \hookrightarrow \Fun^{\rm prod}(\CartSp, \Set)
\]
assigning a manifold $M$ its functor~$C^\infty_M$;
the subscript denotes functors that preserve finite products.

This larger category of functors encodes rings that behave like smooth functions on manifolds.

\begin{dfn}
A {\em $C^\infty$-ring} is a functor 
\[
R: \CartSp \to \Set
\] 
that preserves finite products.

Let $\Ring_{C^\infty}$ denote the category~$\Fun^{\rm prod}(\CartSp, \Set)$.
\end{dfn}

We have seen how every smooth manifold gives such a ring, 
but it is easy to produce new examples as quotients.

\begin{constr}
Let $f_1, \ldots, f_k$ be a collection of smooth functions on a manifold~$M$.
Let $I = (f_1, \ldots, f_k) \subset C^\infty(M)$ be the ideal they generate.
Then the quotient ring $C^\infty(M)/I$ determines a $C^\infty$-ring.

More explicitly, these $f_j$ determine a map of smooth manifolds $F: M \to \RR^k$ and hence a map of $C^\infty$-rings 
\[
F^*: C^\infty_{\RR^k} \to C^\infty_M
\]
and we define the ``quotient $C^\infty$-ring'' $R$ to be the pushout, as product-preserving functors,
\[
\begin{tikzcd}
C^\infty_{\RR^k} \ar[d,"F^*"] \ar[r] & C^\infty_{*} \ar[d] \\
C^\infty_M \ar[r] & R
\end{tikzcd}
\]
where $* = \RR^0$ is a point so $C^\infty_{*}$ is simply the ring~$\RR$ of real numbers.~\hfill$\Diamond$
\end{constr}

Artinian algebras, and hence deformation theory, naturally arise this way.

\begin{eg}
The dual numbers\footnote{This algebra is useful for computing tangent spaces, in the functorial style.} $\RR[x]/(x^2)$ appear as special case of the construction above.  
Take $M =\RR$ with coordinate $x$ and the function $f(x) = x^2$,
so that  the $C^\infty$-ring associated to $C^\infty(\RR)/(x^2)$ has $\RR[x]/(x^2)$ as its underlying commutative ring.~\hfill$\Diamond$
\end{eg}

Such a functor $R$ automatically determines a commutative ring but this ring admits more operations.
If one views a $C^\infty$-ring $R$ as functions on some space (a generalized manifold),
then an element $r \in R(\RR)$ should be viewed as encoding a map from this space to the real line~$\RR$,
so one should be able to compose with smooth functions on the real line.
For example, there is some composite $\exp \circ \,r$ with $\exp$ the exponential map.
For a typical ring, by contrast, there is no way to talk about~$\exp\circ\, r$.

\begin{rmk}
To make this idea precise, note that an element $r \in R(\RR)$ corresponds to a natural transformation $t_r: C^\infty_{\RR} \to R$ of $C^\infty$-rings.\footnote{If $R = C^\infty_M$, then $r \in C^\infty_M(\RR) = C^\infty(\RR)$ corresponds to a map of manifolds $r: M \to \RR$ and this natural transformation is the pullback of functions $r^*: C^\infty(\RR) \to C^\infty(M)$.}
In particular, a smooth function $f: \RR \to \RR$ determines a map $t_f: C^\infty_{\RR} \to C^\infty_{\RR}$.
Thus, the natural transformation $t_f \circ t_r$ determines an element $f \circ r \in R(\RR)$.~\hfill$\Diamond$
\end{rmk}

It is now quick to provide a derived generalization.

\begin{dfn}
A {\em derived $C^\infty$-ring} is a functor of $\infty$-categories
\[
\cR: \CartSp \to \Spaces
\]
that preserves finite products.

Let $\sRing_{C^\infty}$ denote the $\infty$-category~$\Fun^{\rm prod}(\CartSp, \Spaces)$.
\end{dfn}

A concrete source of examples is to give a product-preserving functor into a category that naturally maps to $\Spaces$ (and preserves products).
For instance, 
consider such a functor $\CartSp \to \Ch_{\leq 0}$ and then postcompose with the Dold-Kan construction and localize:
\[
\CartSp \to \Ch_{\leq 0} \to \sAb \to \Spaces.
\]
The first functor is, in essence, a (connective) differential graded $C^\infty$-ring and the composite is its associated derived $C^\infty$-ring.
We will freely use such dg $C^\infty$-rings as avatars of derived $C^\infty$-rings.\footnote{Indeed, such dg $C^\infty$-rings present all derived $C^\infty$-rings. 
For an explanation, see Remark~2.2.11 of \cite{Nuiten}, which is the culmination of Chapter 2's systematic development of this framework.}

\begin{eg}
The {\em shifted} dual numbers $\RR[y]/(y^2)$, with generator $y$ having degree~$|y| < 0$,
determine an example of a derived $C^\infty$-ring.
Note that this algebra, modulo $y$, is simply the real numbers~$\RR$,
and we have seen that $\RR$ is encoded by $C^\infty_*$ as a $C^\infty$-ring.
Hence we simply define the shifted dual numbers as the functor $C^\infty_*[y]/(y^2)$ as a functor into~$\Ch$.~\hfill$\Diamond$
\end{eg}

\subsection{Derived manifolds}

Every smooth manifold determines a derived $C^\infty$-ring by its ring of functions $C^\infty(M)$,
and so this construction determines a functor
\[
\cO: \Mfld \to \sRing_{C^\infty}^\opp.
\]
We would like to identify a nice subcategory of $\sRing_{C^\infty}^\opp$ that contains $\Mfld$ but is better behaved.\footnote{The category $\Mfld$ does not contain many limits. For example, two submanifolds $S_1, S_2$ of a manifold $M$ might not have a fiber product (or pullback) in $\Mfld$: this fiber product does not exist if their intersection $S_1 \cap S_2$ is not a manifold.}

It is helpful to revisit how ordinary smooth manifolds are constructed as a road map.
Atlases let us assemble simple local pieces together;
this patching process is a type of colimit.
These local pieces are often described as subsets of a Cartesian space,
as the solutions to some system of equations;
such a subset is the zero set of a smooth map $\RR^N \to \RR^k$,
so a fiber product, a type of limit.
But we need  the solution set to be smooth everywhere,
and this condition of smoothness is part of what makes $\Mfld$ a finicky category to work in:
the zero set of a smooth function $\RR^N \to \RR^k$ is not always a smooth  manifold,
so some limit diagrams do not admit limits.
Inspired by this observation, we define derived manifolds by patching together the fiber products of {\em all} smooth functions on Cartesian spaces.

\begin{dfn}
Let $\DerMfld$ denote the full subcategory of $\sRing_{C^\infty}^{\opp}$ containing all pullbacks of diagrams\footnote{These diagrams are dual to pushout diagrams $C^\infty(\RR^N) \leftarrow C^\infty(\RR^k) \to C^\infty(*) = \RR$ in $\Ring_{C^\infty}$.}
\[
* \to \RR^k \leftarrow \RR^N
\] 
and closed under finite limits and retracts.
\end{dfn}

In other words, the derived manifolds include intersections of nontransverse submanifolds,
as well as the usual well-liked intersections.

\begin{eg}
As a simple but illuminating example,
consider the pullback of two copies of the origin in the real line: 
\begin{equation}
\label{eqn: double point}
* \xto{\rm zero} \RR \xleftarrow{\rm zero} *
\end{equation}
i.e., the self-intersection of the origin.
The inclusion of the origin $* \to \RR$ corresponds to the ring map $C^\infty(\RR) \to \RR$ sending a function $f$ to $f(0)$, 
its value at the origin.
In other words, functions at the origin are the quotient $C^\infty$-ring $C^\infty(\RR)/(x)$ where $x$ denotes the coordinate function on the real line.
It can also be presented by the dg $C^\infty$-ring $C^\infty(\RR)[\xi]$ with differential $x \partial_\xi$ and where $|\xi| = -1$;
this dg ring is its Koszul resolution in the $C^\infty$ setting.
Thus the functions on pullback~\eqref{eqn: double point} are encoded by the dg $C^\infty$-ring $C^\infty(\RR)[\xi_1, \xi_2]$ with differential $x \partial_{\xi_1} + x \partial_{\xi_2}$.
This dg $C^\infty$-ring is quasi-isomorphic to the shifted dual numbers $\RR[y]/(y^2)$ where $|y| = -1$.~\hfill$\Diamond$
\end{eg}

It is useful to know that there are many ways to model derived manifolds in concrete terms,
such as with supermanifolds equipped with cohomological vector fields or dg manifolds.
There are powerful results that let one move between different concrete models and that show how to leverage prior work.
In practice these methods produce objects in ordinary categories that admit a functor to the $\infty$-category~$\DerMfld$,
much as a cochain complex presents an object of the derived category.
In particular, there is a nice category of {\em dg manifolds} $\dgMfld$ whose $\infty$-categorical localization 
\[
\dgMfld \to \DerMfld
\]
presents derived manifolds, much as $\Chz$ presents $\cDz$~\cite{CarDG, Tar1}.

This $\infty$-category $\DerMfld$ satisfies a beautiful universal property, as shown by~\cite{CarSte},
showing that it is a well-behaved extension of~$\Mfld$.

For an $\infty$-category $\cC$ with finite limits,
let $\Fun^{\rm lex}(\DerMfld,\cC)$ denote the $\infty$-category of left-exact functors (i.e., those preserving finite limits).
Let $\Fun^{\pitchfork}(\Mfld,\cC)$ denote the $\infty$-category of functors (of $\infty$-categories) preserving transverse pullbacks in the category of smooth manifolds~$\Mfld$. 

\begin{thm}[\cite{CarSte}]
Let $\cC$ be an $\infty$-category that contains all finite limits and that is idempotent-complete.\footnote{This second condition says, roughly, that every idempotent endomorphism is a retract. That may seem a little obscure, so it might be helpful to observe that every manifold is the retract of an open subset of a Euclidean space. 
(Concretely, embed the manifold in a Euclidean space and then identify its normal bundle with the tubular neighborhood.  
But every  manifold is the retract of a vector bundle over it,
via inclusion as the zero section and projection to the base.)
Hence $\Mfld$ arises as the idempotent completion of its own subcategory of open sets in Euclidean spaces.
} 
There is an equivalence of $\infty$-categories 
\[
\Fun^{\rm lex}(\DerMfld,\cC) \xto{\simeq} \Fun^{\pitchfork}(\Mfld,\cC)
\]
by precomposition with the inclusion functor $\iota: \Mfld \to \DerMfld$.
\end{thm}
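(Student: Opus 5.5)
The plan is to treat $\DerMfld$ as the idempotent-complete finite-limit completion of $\Mfld$ \emph{relative to} transverse pullbacks, and to prove the equivalence in two halves.

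\textbf{Well-definedness.} First check that restriction along $\iota$ lands in $\Fun^{\pitchfork}(\Mfld,\cC)$. This reduces to showing that $\iota$ sends transverse pullbacks in $\Mfld$ to pullbacks in $\DerMfld$. Dually, for transverse maps $M\to N\leftarrow P$ the derived pushout of $C^\infty$-rings
\[
C^\infty(M)\otimes^{L}_{C^\infty(N)}C^\infty(P)
\]
should be discrete and equal to $C^\infty(M\times_N P)$.

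I would reduce this to a local computation as follows. Factor the pullback as the pullback of the diagonal $N\to N\times N$ along $M\times P\to N\times N$, which is transverse to the diagonal exactly when $M\to N\leftarrow P$ is transverse. Locally on $N$, write the diagonal as the zero set of a submersion $N\times N\supset U\to\RR^k$. Then it suffices to show that for a submersion $f\colon V\to\RR^k$ the Koszul dg $C^\infty$-ring $C^\infty(V)[\xi_1,\dots,\xi_k]$, with differential $\sum_i f_i\,\partial_{\xi_i}$ as in the double-point example, is quasi-isomorphic to $C^\infty(f^{-1}(0))$. Locally $f$ is a coordinate projection, where this is Hadamard's lemma: the $f_i$ form a regular sequence in the $C^\infty$ sense. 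Sheaf descent in the source, i.e.\ compatibility of derived $C^\infty$-rings with open covers, then globalizes the statement.

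\textbf{Inverse via right Kan extension.} Given $F\in\Fun^{\pitchfork}(\Mfld,\cC)$, define $\widetilde F=\mathrm{Ran}_\iota F$. To guarantee this Kan extension exists with $\cC$ only finitely complete, first embed $\cC$ via Yoneda into $\Fun^{\rm lex}(\cC,\Spaces)^{\opp}$, which has all small limits. After proving the statement there, check that the extension actually lands in $\cC$ using idempotent-completeness.

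Fully faithfulness of $\iota$ gives $\widetilde F\circ\iota\simeq F$. What remains is:
\begin{enumerate}
\item $\widetilde F$ is left exact;
\item every lex $G$ satisfies $G\simeq\mathrm{Ran}_\iota(G\iota)$.
\end{enumerate}

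For both I would use a generation statement. Every object of $\DerMfld$ is a retract of an iterated finite limit of objects of the form $\RR^N\times_{\RR^k}*$. Moreover, every finite limit in $\DerMfld$ can be rewritten as an iterated pullback along zero sections $*\to\RR^k$, using the diagonal trick $A\times_B C=(A\times C)\times_{B\times B}B$ together with local embeddings of manifolds into $\RR^k$. With this in hand, a lex $G$ is determined on generators by $G\iota$, which gives (2). For (1), one checks that $\mathrm{Ran}_\iota F$ evaluated on $\RR^N\times_{\RR^k}*$ recovers the expected pullback $F(\RR^N)\times_{F(\RR^k)}F(*)$, and then bootstraps along the generation.

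\textbf{Main obstacle.} I expect the hard part to be the lex-ness of $\mathrm{Ran}_\iota F$ on genuinely nontransverse limits, such as the double point $*\times_\RR *$. Here one must show that $\widetilde F$ of a general pullback is the pullback of the $\widetilde F$'s, using only that $F$ preserves transverse squares.

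The approach I would try first is a \emph{transversality resolution}. Any map $A\to B$ of derived manifolds is, up to retract, the composite of the zero section $A\to A\times\RR^k$ with a submersion-like map. Any pullback diagram is then equivalent to a pullback along a zero section, and the remaining pieces are transverse. Making this precise and coherent at the $\infty$-categorical level is the real content of the theorem. It is plausible that the cleanest route is instead to identify $\DerMfld^{\opp}$ with the compact, finitely presented objects of $\sRing_{C^\infty}=\Fun^{\rm prod}(\CartSp,\Spaces)$ and invoke the general universal property of Lawvere-theory algebras, as in~\cite{CarSte}.
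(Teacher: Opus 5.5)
The paper gives no argument of its own here; it only cites \cite{CarSte}. Judged on its own, your outline has a genuine gap exactly where you locate the ``main obstacle,'' and the mechanism you offer for it does not work as stated.

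The claim $\mathrm{Ran}_\iota F(\RR^N\times_{\RR^k}*)\simeq F(\RR^N)\times_{F(\RR^k)}F(*)$ is not something one simply checks. $\mathrm{Ran}_\iota F(\cM)$ is a limit over \emph{all} maps from $\cM$ to manifolds, and the defining cospan is not initial in that comma category. For example, take the non-transverse locus $\cM=\{(x^2+y^2-1)^2=0\}\subset\RR^2$. Radial projection gives a map $\cM\to S^1$ that factors through none of $\cM\to\RR^2$, $\cM\to\RR$, $\cM\to *$, because its restriction to the underlying circle has degree one. ``Bootstrapping along generation'' presupposes the left exactness being proved, and the ``transversality resolution'' is never turned into an argument. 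So the real content is deferred back to the citation.

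There is also a hypothesis problem. $\Fun^{\pitchfork}$ must consist of functors preserving transverse pullbacks \emph{and the terminal object}, as in \cite{CarSte}. A constant functor at a non-terminal object preserves every pullback, yet $\mathrm{Ran}_\iota F(*)=F(*)$ is not terminal, so your step~(1) fails without this hypothesis.

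The missing idea is that non-transverse limits need no geometric resolution once you route through $C^\infty$-rings.
\begin{itemize}
\item For $c\in\cC$ put $R_c=\Map_\cC(c,F(-))|_{\CartSp}$. Since $\RR^{n+m}=\RR^n\times_*\RR^m$ is transverse and $F(*)$ is terminal, $R_c$ is a derived $C^\infty$-ring.
\item Let $\widetilde F(\cM)$ represent $c\mapsto\Map_{\sRing_{C^\infty}}(\cO(\cM),R_c)$. This is representable because $\cM$ is a retract of an iterated finite limit of Cartesian spaces and $\cC$ is finitely complete and idempotent complete.
\item $\widetilde F$ is automatically left exact, because $\cO$ sends finite limits in $\DerMfld\subset\sRing_{C^\infty}^{\opp}$ to finite colimits. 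This is the Lawvere-theory equivalence $\Fun^{\rm lex}(\DerMfld,\cC)\simeq\Fun^{\times}(\CartSp,\cC)$ you allude to.
\item By Yoneda in $\Fun(\CartSp,\Spaces)$, the same formula exhibits $\widetilde F$ as a right Kan extension from $\CartSp$, which yields your~(2).
\end{itemize}

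What your outline never isolates is the geometric step $\widetilde F\iota\simeq F$, i.e.\ that a transverse-pullback-preserving $F$ is itself right Kan extended from $\CartSp$. Write a manifold as a retract of an open $U=\{\phi\neq 0\}\subset\RR^N$, and present $U\cong\{t\phi(x)=1\}=\RR^{N+1}\times_{\RR}*$ as a transverse pullback. Then use your first step to identify $C^\infty(U)$ with the corresponding pushout of $C^\infty$-rings. This identifies both sides of the canonical comparison $\Map(c,F(U))\to\Map_{\sRing_{C^\infty}}(C^\infty(U),R_c)$, compatibly, with $R_c(\RR^{N+1})\times_{R_c(\RR)}R_c(*)$.

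Finally, your first step has a subsidiary gap. The Koszul dg $C^\infty$-ring computes the pushout in $\sRing_{C^\infty}$ only after you show that pushouts along $\pi_0$-surjections such as $C^\infty(\RR^k)\to\RR$ are computed on underlying simplicial commutative algebras. This is a real lemma (Hadamard's lemma enters again), since the forgetful functor does not preserve pushouts in general: $C^\infty(\RR)\sqcup C^\infty(\RR)=C^\infty(\RR^2)$.
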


In this sense, derived manifolds arise by adding all finite limits that do not already exist among manifolds.
Following Spivak, one might say they are the spaces that are ``good for doing intersection theory.''

\subsection{Derived stacks}

The derived manifolds will provide the test spaces for our derived stacks, viewed as functors.
In other words, we will work with families parametrized by derived manifolds.

That means we need to equip $\DerMfld$ with the structure of a site,
and we do this in a simple-minded way,
borrowing from usual topology as follows.

First, every derived manifold $\cM$ has an underlying topological space~$|\cM|$.
In concrete terms, if $\cM$ is constructed as a finite limit in~$\sRing_{C^\infty}^\opp$,
we let $|\cM|$ denote the corresponding finite limit in~$\Top$.
Thus, in practice, $|\cM|$ is some closed subset of a Cartesian space~$\RR^N$,
and $\cM$ is that topological space~$|\cM|$ equipped with an interesting dg commutative algebra.

\begin{dfn}
\label{dfn: der mfld covers}
Given a derived manifold $\cM$, a collection of maps of derived manifolds 
\[
\{\cU_\alpha \to \cM\}_{\alpha \in A}
\]
is a {\em cover} if the associated continuous maps
\[
\{|\cU_\alpha| \to |\cM|\}_{\alpha \in A}
\]
are open embeddings, the union of their images is all of~$|\cM|$ (i.e., they are jointly surjective), and each map of $C^\infty$-rings $\cO(\cM) \to \cO(\cU_\alpha)$ is the associated $C^\infty$-localization.
\end{dfn}

This notion generates a Grothendieck topology on~$\DerMfld$.

\begin{dfn}
A {\em derived stack} is a functor $\XX: \DerMfld^\opp \to \Spaces$ that is a sheaf with respect to the Grothendieck topology from Definition~\ref{dfn: der mfld covers}.
\end{dfn}

In practice one can obtain a derived stack by describing a functor (i.e., prestack) and then sheafifying.

\begin{eg}
\label{eg: maps into manifold}
For any derived manifold $\cX$ --- and hence any smooth manifold~$X$ --- we have
\[
\cMap(-,\cX): \DerMfld^{\opp} \to \Spaces
\]
where 
\begin{align*}
\cMap(\cM,\cX) 
&= \Map_{\DerMfld}(\cM,\cX)\\
&= \Map_{\sRing_{C^\infty}}(\cO(\cX), \cO(\cM))
\end{align*}
with the right hand sides denoting the space of maps in the $\infty$-categories.~\hfill$\Diamond$
\end{eg}

Here we revisit another key example.

\begin{eg}
\label{eg: k-forms as derived stack}
Given a smooth manifold~$M$, 
consider the functor
\[
\BOmega^k_M: \DerMfld^\opp \to \Spaces
\]
where
\[
\BOmega^k_M(\cM) = 
\{ \text{sections of the bundle } \pi^*( \Lambda^k T^*M) \to M \times \cM \}
\]
where $\pi: M \times \cM \to M$ is the projection map.
This functor encodes $k$-forms on $M$ in families over the derived manifold~$\cM$.
It determines a derived stack, which we identify by the same symbol.~\hfill$\Diamond$
\end{eg}

\begin{eg}
\label{eg: closed forms as derived stack}
Building on the preceding example, there is a natural operator 
\[
\d: \BOmega^k_{M} \to \BOmega^{k+1}_{M}
\]
arising from the usual exterior derivative in the $M$-direction.
Hence one can formulate derived versions of the key constructions from Section~\ref{sec: deligne stuff}. 
For instance, taking the closed $p+1$-forms of Definition~\ref{dfn: closed forms} yields $\BOmega^{p+1}_{\cl,M}$.~\hfill$\Diamond$
\end{eg}

\begin{eg}
Let $\sBun_{p,M}^\nabla$ denote the derived stack of $U(1)$ $p$-bundles with connection on the smooth manifold~$M$.
Its value on a derived manifold $\cM$ consists of $\cM$-families of such bundles on~$M$.~\hfill$\Diamond$
\end{eg}

\begin{rmk}
For a systematic and extensive development of moduli of bundles (and bundles with connection) as derived stacks,
see~\cite{BMNS}.
Their work should make it possible to develop nonabelian gauge theories in a manner similar to that pursued here for abelian gauge theories.~\hfill$\Diamond$
\end{rmk}

One more ingredient from geometry will be helpful to have at hand.

A derived stack $\XX$ has a ring of functions or, more precisely, a structure sheaf~$\cO_\XX$.
Since we approach geometry in the functor of points style,
the phrasing runs as follows.
Recall that we have a ``forgetful'' functor
\[
\cO: \DerMfld \to \sRing_{C^\infty}^\opp.
\]
so that each derived manifold $\cM$ has a natural $C^\infty$-ring of functions:
just view $\cM$ as an object $\cO(\cM)$ in the bigger, ambient $\infty$-category~$\sRing_{C^\infty}$.
We want to leverage that structure to get $C^\infty$-rings out of~$\XX$.
Observe that we have the Yoneda embedding
\[
Y: \DerMfld \hookrightarrow \Fun(\DerMfld^\opp, \Spaces)
\]
and so we have an $\infty$-category $\DerMfld_{/\XX}$ as a subcategory of the overcategory $\Fun(\DerMfld^\opp, \Spaces)_{/\XX}$.
It encodes the information of~$\XX$ as a fibration
\[
\DerMfld_{/\XX} \to \DerMfld
\]
where the fiber over $\cM$ encodes the space of $\cM$-points of~$\XX$.
Thus we have a composite 
\[
\DerMfld_{/\XX} \to \DerMfld \xto{\cO} \sRing_{C^\infty}^\opp
\]
that encodes the $C^\infty$-ring of functions on every ``point'' of~$\XX$.
We call this functor the {\em structure sheaf} $\cO_\XX$ of the derived stack.

\begin{rmk}[Example~\ref{eg: k-forms as derived stack} revisited]
Roughly speaking, $\Omega^k_M(\cM)$ is $\Omega^k(M) \otimes \cO(\cM)$,
since it is a smooth function on $\cM$ valued in $k$-forms on~$M$.
This tensor product is inadequate even when $\cM$ is an ordinary smooth manifold;
one must complete the algebraic tensor product (e.g., with the completed projective tensor product).\footnote{We are blithely ignoring how to define this tensor product, just assuming something similar to the underived construction is possible. The real point is properly articulating the moduli problem.}
Instead we directly formulate the moduli problem as sections of the pullback bundle, 
which encodes the answer we really want.~\hfill$\Diamond$
\end{rmk}

\begin{rmk}[Example~\ref{eg: closed forms as derived stack} revisited]
\label{rmk: dgmfld version}
In the setting of dg manifolds, it is straightforward to define functors like
\[
\Omega^{p+1}_{\cl,M}: \dgMfld^\opp \to \Chz
\]
that presents the derived stack~$\BOmega^{p+1}_{\cl,M}$.
Roughly,
\[
\Omega^{p+1}_{\cl,M}(\cM) \simeq \Omega^{p+1}(M) \otimes_\RR \cO(\cM)
\]
but with the same tensor product issues discussed in the preceding remark.~\hfill$\Diamond$\end{rmk}

\subsection{Our field theories as derived stacks}

Recall that in our situation, 
we want the fields of our field theory to define a functor
\[
\cF: \Open(M)^\opp \to \dStk
\]
where $\Open(M)$ is the site of open subsets of a smooth manifold~$M$ (the ``spacetime'') or
a functor 
\[
\cF: \Riem_n^\opp \to \dStk
\]
where $\Riem_n$ is the site of $n$-dimensional Riemannian manifolds and isometries when working with a ``Euclidean'' field theory. 
(There is a site $\Lrtz_n$ for Lorentzian theories.)
We will use $\mathsf{ST}$ to denote a site of ``spacetimes'' for the rest of the section, 
which depends upon the kind of theory being studied;
readers are welcome to substitute $\Open(M)$ or $\Riem_n$ or whichever site they prefer.

In summary our goal is to describe a functor
\[
\cF: \mathsf{ST}^\opp \times \DerMfld^\opp \to \Spaces
\]
that encodes our fields.

In our examples, which are built from (truncated) Deligne complexes, there are two kinds of ingredients.

First, there are differential $k$-forms on a spacetime manifold:
\[
\BOmega^k_{\mathsf{ST}}: \mathsf{ST}^\opp \times \DerMfld^\opp \to \Spaces
\]
where
\[
\BOmega^k_{\mathsf{ST}} (M, \cM) = \BOmega^k_M(\cM)
\]
where the space $\BOmega^k_M(\cM)$ from Example~\ref{eg: k-forms as derived stack} encodes smooth families over the derived manifold $\cM$ of $k$-forms on the smooth manifold~$M$.
There is a natural operator 
\[
\d: \BOmega^k_{\mathsf{ST}} \to \BOmega^{k+1}_{\mathsf{ST}}
\]
arising from the usual exterior derivative in the spacetime (or $M$) direction.

Second, there is the mapping space into $S^1$ on a spacetime manifold:
\[
\cMap_{\mathsf{ST}}(-,S^1): \mathsf{ST}^\opp \times \DerMfld^\opp \to \Spaces
\] 
where
\[
\cMap_{\mathsf{ST}}((M, \cM),S^1) = \cMap(M \times \cM,S^1)
\]
where the space $\cMap(M \times \cM,S^1)$ from Example~\ref{eg: maps into manifold} encodes smooth families over the derived manifold $\cM$ of maps from the smooth manifold~$M$ into the circle.\footnote{Roughly,
\[
\cMap(M \times \cM, S^1) \simeq \Map(M,S^1) \otimes_\RR \cO(\cM)
\]
since $S^1 \cong \RR/\ZZ$ is an $\RR$-module.}

Both ingredients can be equipped with abelian group structures,
as we can present these derived stacks using functors into~$\Ch$, as mentioned in Remark~\ref{rmk: dgmfld version}.

We can assemble these ingredients into the field theories discussed in this paper.
For instance, to describe $p$-form Maxwell theory, 
we start with the functor
\[
\mathsf{ST}^\opp \times \dgMfld^\opp \to \Ch
\]
assigning to $((M,g),\cM)$ the cochain complex
\[
\Map(M \times \cM, S^1) \xto{\d \log} \Omega^1_M(\cM) \xto{\d} \cdots \to \Omega^p_M(\cM) \xto{\d \star \d} \Omega^{n-p}_M(\cM) \xto{\d} \cdots \xto{\d} \Omega^n_M(\cM)
\]
where $\Omega^p_M(\cM)$ sits in degree zero.
This functor determines a functor into $\cDz$, by localizing at quasi-isomorphisms, 
and hence it presents a functor
\[
\mathsf{ST}^\opp \times \DerMfld^\opp \to \Ab(\Spaces)
\]
by truncation.

\section{Homological perturbation lemma}
\label{sec: HPL}

We follow~\cite{crainic}, with mild modifications of notation.

\begin{dfn}
A {\em homotopy equivalence} is the data
\[
\begin{tikzcd}
(S,\d_S) \ar[r,shift left, "\iota"] & (B, \d_B) \ar[l, shift left, "\pi"] \ar[loop right, "\eta"] 
\end{tikzcd}
\]
where
\begin{itemize}
\item $(S,\d_S)$ and $(B,\d_B)$ are cochain complexes (``small'' and ``big,'' respectively),
\item $\iota$ and $\pi$ are quasi-isomorphisms (``inclusion'' and ``projection,'' respectively), and
\item $\eta$ is a degree $-1$ linear endomorphism of $B$ such that
\[
\iota \circ \pi = 1_B + \d_B \circ \eta + \eta \circ \d_B, 
\]
i.e., $\eta$ is a cochain homotopy equivalence between the identity on $B$ and~$\iota \circ \pi$.
\end{itemize}
\end{dfn}

The Hodge theorem for elliptic complexes
is a key source of examples for us.

\begin{dfn}
A {\em perturbation} of $(B,\d_B)$ is a degree one endomorphism $\delta$ of the graded vector space $B$ such that $\d_B + \delta$ is square-zero.
\end{dfn}

We view $(B, \d_B + \delta)$ as a deformation of $B$ and we would like to know if we can deform $S$ compatibly.

\begin{dfn}
Given a homotopy equivalence,
a degree~0 endomorphism $\delta$ of $B$ is {\em small} if the endomorphism $1_B - \delta \circ \eta$ is invertible.
\end{dfn}

Small perturbations often arise when $\delta$ is nilpotent or weighted by a formal parameter like~$\hbar$.

For concision, we will cease to write $\circ$ to indicate composition.

\begin{prp}[Homological Perturbation Lemma, or HPL]
If $\delta$ is a small perturbation of $\d_B$,
then there is a homotopy equivalence
\[
\begin{tikzcd}
(S,\d'_S) \ar[r,shift left, "\iota'"] & (B, \d_B + \delta) \ar[l, shift left, "\pi'"] \ar[loop right, "\eta'"] 
\end{tikzcd}
\]
where
\[
A = (1_B - \delta \eta)^{-1} \delta
\]
and
\[
\d'_S = \d_S+ \pi A \iota,\quad \iota' = \iota + \eta A \iota,\quad \pi' = \pi + \pi A \eta, \quad \eta' = \eta + \eta A \eta.
\]
\end{prp}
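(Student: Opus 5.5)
The statement is the Homological Perturbation Lemma, so the plan is the standard algebraic verification in the style of \cite{crainic}, using only the identities of a homotopy equivalence together with the invertibility of $1_B-\delta\eta$.

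First I will record some elementary identities. Set $A=(1_B-\delta\eta)^{-1}\delta$. Expanding the inverse gives
\[
A=\delta+\delta\eta A=\delta+A\eta\delta .
\]
The same expansion also yields $(1_B-\delta\eta)^{-1}\delta=\delta(1_B-\eta\delta)^{-1}$ whenever both inverses exist. If only $1_B-\delta\eta$ is assumed invertible, I will instead use $A=\delta+\delta\eta A$ directly, since $1_B-\eta\delta$ is then also invertible with inverse $1_B+\eta(1_B-\delta\eta)^{-1}\delta$. Next, the hypothesis that $\d_B+\delta$ squares to zero is equivalent to $\d_B\delta+\delta\d_B+\delta^2=0$. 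Combining this with the homotopy relation $\iota\pi-1_B=\d_B\eta+\eta\d_B$, I will derive the key identity
\[
\d_B A + A\d_B + A(\iota\pi-1_B)A = -A .
\]
More precisely, I will show $\d_BA+A\d_B = A(1_B-\iota\pi)A - A$ up to the conventional sign, by multiplying $\d_B\delta+\delta\d_B+\delta^2=0$ on the left by $(1_B-\delta\eta)^{-1}$ and on the right by $(1_B-\eta\delta)^{-1}$. One needs to pin down the sign convention for $\eta$, and I will adjust signs so the relation reads $\iota\pi-1_B=\d_B\eta+\eta\d_B$ as stated. I will also assume the usual side conditions $\pi\iota=1_S$ and $\eta\iota=0=\pi\eta=\eta^2$, or reduce to them. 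If these are not in force, the conclusion still holds with the weaker notion of homotopy equivalence used here, at the cost of more bookkeeping.

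With that identity in hand, the remaining checks are mechanical:
\begin{enumerate}
\item $(\d'_S)^2=0$, by expanding $(\d_S+\pi A\iota)^2$ and using that $\iota,\pi$ are chain maps together with the key identity;
\item $\iota'$ and $\pi'$ are chain maps for the new differentials, namely $(\d_B+\delta)\iota'=\iota'\d'_S$ and similarly for $\pi'$;
\item $\iota'\pi'-1_B=(\d_B+\delta)\eta'+\eta'(\d_B+\delta)$.
\end{enumerate}
Each check is a direct expansion in which every occurrence of $\delta$ adjacent to $\eta$ is absorbed using $A=\delta+\delta\eta A$. Finally, $\iota'$ and $\pi'$ are quasi-isomorphisms because they are homotopy inverse to each other via $\eta'$, once one also checks that $\pi'\iota'$ is homotopic to $1_S$. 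Under the side conditions this composite equals $1_S$ exactly.

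The main obstacle is the key identity for $\d_BA+A\d_B$, because every later verification reduces to it and the signs are delicate. I will prove it first and carefully, deriving it from $(1_B-\delta\eta)(\d_B\delta+\delta\d_B+\delta^2)=0$ by substituting $\d_B\eta=\iota\pi-1_B-\eta\d_B$. Once it is established, the three checks follow by routine algebra.
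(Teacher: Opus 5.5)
Your checks (1)--(3) are the standard computation; the paper itself gives no proof and simply defers to \cite{crainic}. None of these checks actually needs side conditions. They follow from $A=\delta+\delta\eta A=\delta+A\eta\delta$, the chain-map property of $\iota,\pi$, and the key identity.

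The gap is the final step, that $\iota'$ and $\pi'$ are quasi-isomorphisms. The paper's notion of homotopy equivalence imposes no side conditions and does not even assume $\pi\iota=1_S$. You cannot reduce to that case:
\begin{itemize}
\item $\pi\iota=1_S$ is a property of the given maps, and no choice of homotopy produces it.
\item The usual modifications of $\eta$ that enforce $\eta\iota=0=\pi\eta=\eta^2$ change $A$, $\d'_S$, $\iota'$, $\pi'$ and $\eta'$. You would then be proving the lemma for formulas other than the stated ones.
\end{itemize}
Without side conditions, $\eta'$ only yields $H(\iota')H(\pi')=1$, i.e.\ $H(\pi')$ is injective and $H(\iota')$ is surjective. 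The data contain no homotopy on $S$, so there is nothing from which to build $\pi'\iota'=\pi(1_B-\delta\eta)^{-1}(1_B-\eta\delta)^{-1}\iota\simeq 1_S$. This is not bookkeeping.

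Your plan never uses the hypothesis that the original $\iota,\pi$ are quasi-isomorphisms, and it has to. Take $B=0$, any $S$ with nonzero cohomology, and $\iota=\pi=\eta=\delta=0$. Then $\iota\pi=1_B+\d_B\eta+\eta\d_B$ and all of (1)--(3) hold, yet $\iota',\pi'$ are not quasi-isomorphisms. As written, you have proved the special-deformation-retract case, which covers the Hodge-theoretic uses later in the paper, but not the proposition as stated.

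One way to close the gap uses the key identity, which you should first correct. It reads $\d_BA+A\d_B=-A\iota\pi A$, i.e.\ $\d_BA+A\d_B+A(\iota\pi-1_B)A=-A^2$. Your right-hand side $-A$ has degree $1$ rather than $2$; the derivation you describe does give $-A^2$. Then proceed as follows.
\begin{enumerate}
\item Put $\alpha=A\iota$. Then $\d_B\alpha+\alpha\d_S+\alpha\pi\alpha=(\d_BA+A\d_B+A\iota\pi A)\iota=0$.
\item Consequently the automorphism $(s,b)\mapsto(s,b+\alpha s)$ of $S\oplus B[1]$ conjugates the differential $(s,b)\mapsto(\d_Ss+\pi b,-\d_Bb)$ of $\mathrm{Cone}(\pi)$ into the cone differential of $\pi\colon(B,\d_B+A\iota\pi)\to(S,\d'_S)$. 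Since $\mathrm{Cone}(\pi)$ is acyclic, so is this cone, and that map is a quasi-isomorphism.
\item The map $(1_B-\delta\eta)^{-1}=1_B+A\eta$ is an isomorphism of complexes $(B,\d_B+\delta)\to(B,\d_B+A\iota\pi)$. To see this, expand $(1_B+A\eta)(\d_B+\delta)(1_B-\delta\eta)$ using $\d_B\delta+\delta\d_B+\delta^2=0$, $A=\delta+A\eta\delta$ and $\d_B\eta+\eta\d_B=\iota\pi-1_B$.
\item Since $\pi'=\pi(1_B+A\eta)$, steps 2 and 3 show $\pi'$ is a quasi-isomorphism.
\item Finally, $H(\iota')=H(\pi')^{-1}$ follows from (3).
\end{enumerate}
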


In words, a small perturbation of $B$ determines a perturbation of the whole homotopy equivalence.

As a special case, we revisit a classic situation (the staircase lemma and the spectral sequence for a nice double complex) using the HPL.

Consider a double complex concentrated in the upper right quadrant
\[
\begin{tikzcd}
\vdots &\vdots&\vdots&\\
C^{0,2} \ar[r, "h"] \ar[u, "v"]& C^{1,2} \ar[r, "h"] \ar[u, "v"]& C^{2,2} \ar[r, "h"] \ar[u, "v"]& \cdots\\
C^{0,1} \ar[r, "h"] \ar[u, "v"]& C^{1,1} \ar[r, "h"] \ar[u, "v"]& C^{2,1} \ar[r, "h"] \ar[u, "v"]& \cdots\\
C^{0,0} \ar[r, "h"] \ar[u, "v"] & C^{1,0} \ar[r, "h"] \ar[u, "v"]& C^{2,0} \ar[r,"h"] \ar[u, "v"]& \cdots
\end{tikzcd}
\]
In this situation, for each column $(C^{k, \bullet}, v)$, 
we can produce a deformation complex onto its vertical cohomology.
Putting this data together among all the rows, 
we have a homotopy equivalence
\begin{equation}
\label{horizontal def retract}
\begin{tikzcd}
(H^\bullet(C,v), 0) \ar[r,shift left, "\iota"] & (C, v) \ar[l, shift left, "\pi"] \ar[loop right, "\eta"] 
\end{tikzcd}
\end{equation}
where we can view this retract as holding on the totalization of the double complexes.

We view  the horizontal differential $h$ as the perturbation (i.e., as $\delta$).
Observe that $h \eta$ sends $C^{j,k}$ to $C^{j+1,k-1}$
--- so to speak, ``stepping down the staircase'' ---
and hence is nilpotent because eventually it passes beyond the bottom row.
(The order of nilpotency depends on the value of $j+k$.)
Thus $1-h \eta$ is invertible, 
and its inverse is given by the geometric series
\[
(1-h \eta)^{-1} = 1 + h \eta + (h\eta)^2 + \cdots + (h\eta)^n + \cdots.
\]
This operator takes an element and ``descends the staircase,''
leaving a trace of itself on each step.
Let 
\begin{equation}
\label{double complex def}
\alpha = (1-h \eta)^{-1} h
\end{equation}
play the role of $A$ from the HPL.

\begin{crl}
\label{crl: hpl for double}
Let $(C^{\bullet \bullet}, h + v)$ be a first quadrant double complex equipped with a deformation retraction 
along the rows as in~\eqref{horizontal def retract}.
As $1- h \eta$ is invertible (i.e., $h$ is small), 
set $\alpha = (1-h \eta)^{-1} h$.
Then there is a homotopy equivalence
\[
\begin{tikzcd}
(H^\bullet(C,v), \pi \alpha \iota) \ar[r,shift left, "\iota + \eta \alpha \iota"] & (C, h+v) \ar[l, shift left, "\pi + \pi \alpha \eta"] \ar[loop right, "\eta + \eta \alpha \eta"]
\end{tikzcd}
\]
between the full double complex and the cohomology for $v$ equipped with a ``transferred'' differential.
\end{crl}

Note that there is a similar statement if we use the horizontal cohomology first.

\printbibliography

\end{document}